\newcommand{\arxiv}[1]{\href{http://arxiv.org/abs/#1}{arXiv:#1}}
\newcommand*{\mailto}[1]{\href{mailto:#1}{\nolinkurl{#1}}}
\newtheorem{theorem}{Theorem}[section]
\newtheorem{lemma}[theorem]{Lemma}
\newtheorem{corollary}[theorem]{Corollary}
\newtheorem{remark}[theorem]{Remark}
\newcommand{\R}{\mathbb{R}}
\newcommand{\C}{\mathbb{C}}
\newcommand{\T}{\mathbb{T}}
\newcommand{\noprint}[1]{}
\newcommand{\nn}{\nonumber}
\newcommand{\beq}{\begin{equation}}
\newcommand{\eeq}{\end{equation}}
\newcommand{\bea}{\begin{eqnarray}}
\newcommand{\eea}{\end{eqnarray}}
\newcommand{\ol}{\overline}
\newcommand{\pa}{\partial}
\newcommand{\ti}{\tilde}
\newcommand{\abs}[1]{\lvert#1 \rvert}
\newcommand{\clos}{\mathop{\mathrm{clos}}}
\newcommand{\id}{\mathbb{I}}
\newcommand{\I}{\mathrm{i}}
\newcommand{\E}{\mathrm{e}}
\newcommand{\re}{\mathop{\mathrm{Re}}}
\newcommand{\im}{\mathop{\mathrm{Im}}}
\newcommand{\sech}{\mathop{\mathrm{sech}}}
\DeclareMathOperator{\res}{Res}
\DeclareMathOperator{\Ai}{Ai}
\newcommand{\dlmf}[1]{%
\cite[%
  \def\nextitem{\def\nextitem{, }}%
  \@for \el:=#1\do{\nextitem\href{http://dlmf.nist.gov/\el}{(\el)}}%
]{dlmf}%
}
\newcommand{\la}{\lambda}
\newcommand{\ga}{\gamma}
\numberwithin{equation}{section}
\newcommand{\rI}{\begin{pmatrix}  1 & 1 \end{pmatrix}}
\begin{document}

\title[Rarefaction Waves of the KdV Equation]{Rarefaction Waves of the Korteweg--de Vries Equation
via Nonlinear Steepest Descent}

\author[K. Andreiev]{Kyrylo Andreiev}
\address{B. Verkin Institute for Low Temperature Physics \\ 47, Lenin ave\\ 61103 Kharkiv\\ Ukraine}
\email{\href{mailto:kijjeue@gmail.com}{kijjeue@gmail.com}}

\author[I. Egorova]{Iryna Egorova}
\address{B. Verkin Institute for Low Temperature Physics\\ 47, Lenin ave\\ 61103 Kharkiv\\ Ukraine}
\email{\href{mailto:iraegorova@gmail.com}{iraegorova@gmail.com}}

\author[T. L. Lange]{Till Luc Lange}
\address{Faculty of Mathematics\\ University of Vienna\\
Oskar-Morgenstern-Platz 1\\ 1090 Wien\\ Austria}

\author[G. Teschl]{Gerald Teschl}
\address{Faculty of Mathematics\\ University of Vienna\\
Oskar-Morgenstern-Platz 1\\ 1090 Wien\\ Austria\\ and International Erwin Schr\"odinger
Institute for Mathematical Physics\\ Boltzmanngasse 9\\ 1090 Wien\\ Austria}
\email{\href{mailto:Gerald.Teschl@univie.ac.at}{Gerald.Teschl@univie.ac.at}}
\urladdr{\href{http://www.mat.univie.ac.at/~gerald/}{http://www.mat.univie.ac.at/\string~gerald/}}

\keywords{KdV equation, rarefaction wave, Riemann--Hilbert problem}
\subjclass[2000]{Primary 37K40, 35Q53; Secondary 37K45, 35Q15}
\thanks{J. Differential Equations J. Differential Equations {\bf 261}, 5371--5410 (2016)}
\thanks{Research supported by the Austrian Science Fund (FWF) under Grants No.\ V120 and W1245.}

\begin{abstract}
We apply the method of nonlinear steepest descent to compute the long-time
asymptotics of the Korteweg--de Vries equation with steplike initial data leading to a rarefaction wave.
In addition to the leading asymptotic we also compute the next term in the asymptotic expansion of the
rarefaction wave, which was not known before. 
\end{abstract}

\maketitle


\section{Introduction}

In this paper we investigate the Cauchy problem for the Korteweg--de Vries (KdV) equation
\beq\label{kdv}
q_t(x,t)=6q(x,t)q_x(x,t)-q_{xxx}(x,t), \quad (x,t)\in\R\times\R_+,
\eeq
with steplike initial data $q(x,0)=q_0(x)$ satisfying
\beq \label{ini}
\left\{ \begin{array}{ll} q_0(x)\to 0,& \ \ \mbox{as}\ \ x\to +\infty,\\
q_0(x)\to c^2,&\ \ \mbox{as}\ \ x\to -\infty.\end{array}\right.
\eeq
This case is known as rarefaction problem. The corresponding long-time asymptotics of $q(x,t)$ as $t\to\infty$ are well understood
on a physical level of rigor (\cite{Zakh, LN, N}) and can be split into three main regions:
\begin{itemize}
\item In the region $x<-6 c^2 t$ the solution is asymptotically close to the background $c^2$.
\item In the region $-6c^2 t<x<0$ the solution can asymptotically be described by $-\frac{x}{6 t}$.
\item In the region $0 < x$ the solution is asymptotically given by a sum of solitons.
\end{itemize}
This is illustrated in Figure~\ref{fig:num}. For the corresponding shock problem we refer to \cite{Bik1, EGKT, gp1, gp2, KKt, LN2, Ven}.
\begin{figure}
\includegraphics[width=8cm]{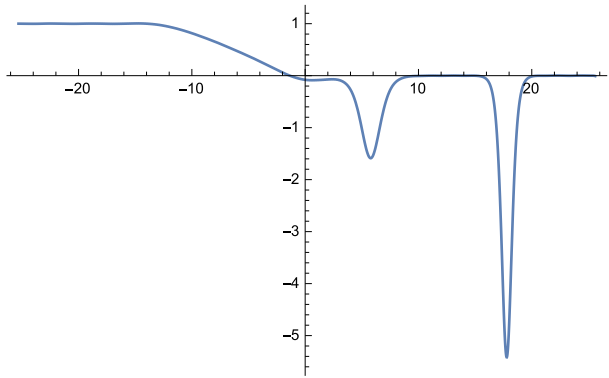}
\caption{Numerically computed solution $q(x,t)$ of the KdV equation at time $t=1.5$, with initial
condition $q_0(x)=\frac{1}{2}(1-\mathrm{erf}(x))-4\sech(x-1)$.} \label{fig:num}
\end{figure}

The aim of the present paper is to rigorously justify these results. Furthermore, we will also compute the second terms in the asymptotic expansion,
which were, to the best of our knowledge, not obtained before. Our approach is based on the nonlinear steepest descent method for oscillatory
Riemann--Hilbert (RH) problems. In turn, this approach rests on the inverse scattering transform for steplike initial data originally developed by
Buslaev and Fomin \cite{BF} with later contributions by Cohen and Kappeler \cite{CK}. For recent developments and further information we refer to \cite{EGLT}.
The application of the inverse scattering transform to the problem \eqref{kdv}--\eqref{ini}
(see \cite{EGT}, \cite{ET}) implies that the solution $q(x,t)$ of the Cauchy problem exists in the classical sense and is unique in the class
\beq\label{sol}
\int_0^\infty |x|(|q(x,t)| + |q(-x,t)-c^2|)dx<\infty, \qquad \forall t\in\R,
\eeq
provided the initial data satisfy the following conditions: $q_0\in \mathcal C^8(\R)$ and
\beq\label{decay1}
\int_0^{\infty} x^4\left( |q_0(x)| + |q_0(-x)-c^2| + |q^{(j)}(x)| \right)dx<\infty,\quad j=1,\dots,8.
\eeq
To simplify considerations we will additionally suppose that the initial condition decays exponentially fast to the asymptotics:
\beq\label{decay}
\int_0^{+\infty} \E^{\kappa  x}(|q_0(x)| + |q_0(-x)-c^2|)dx<\infty,
\eeq
for some small $\kappa>0$. We remark that by \cite{Ryb} the solution will be even real analytic under this assumption, but we will not need this fact.

This last assumption can be removed using analytic approximation of the reflection coefficient as demonstrated by Deift and Zhou \cite{dz} (see also \cite{GT,len}),
but we will not address this in the present paper. However, we emphasize that all known results concerning the asymptotic behavior of steplike solutions 
were obtained for the case of pure step initial data ($q_0(x)=0$ for $x>0$ and $q_0(x)=\pm c^2$ for $x\leq 0$) only. Moreover, those using the Riemann-Hilbert approach
did not address the parametrix problem, which is one of the main contributions of the present paper.

As is known, the solution of the initial value problem \eqref{kdv}, \eqref{decay1} can be computed by the inverse scattering transform from the right scattering data of the initial profile.
Here the right scattering data are given by the reflection coefficient $R(k)$, $k\in\R$, a finite number of eigenvalues $-\kappa_1^2$, \dots,
$-\kappa_N^2$, and positive norming constants $\gamma_1, \dots, \gamma_N$. 
The difference with the decaying case $c=0$ consists of the fact, that the modulus of the reflection coefficient is equal to 1 on the interval $[-c, c]$.
At the point $k=0$ the reflection coefficient takes the values $\pm 1$ (cf.\ \cite{CK}). The case $R(0)=-1$ known as the nonresonant case (which is generic),
whereas the case $R(0)=1$ is called the resonant case.
Note, that the right transmission coefficient $T(k)$ can be reconstructed uniquely from these data (cf.\ \cite{BF}). 

Our main results is the following

\begin{theorem} \label{maintheor} Let the initial data $q_0(x)\in\mathcal C^8(\R)$ of the Cauchy problem \eqref{kdv}--\eqref{ini} satisfy \eqref{decay}.
Let $q(x,t)$ be the solution of this problem. Then for arbitrary small $\epsilon_j>0$, $j=1,2,3$, and for $\xi=\frac{x}{12 t}$, the following asymptotics are valid as $t\to\infty$
uniformly with respect to $\xi$:

\noindent {\bf A}.
In the domain $(-6c^2 +\epsilon_1)t<x<-\epsilon_1 t$:
\beq\label{asq}
q(x,t)=-\frac{x + Q(\xi)}{6 t}(1 + O(t^{-1/3})), \quad \mbox{as}\  t\to +\infty,
\eeq
where
\beq\label{coef}
Q(\xi)=\frac{2}{\pi}\int_{-\sqrt{-2\xi}}^{\sqrt{-2\xi}}\left(\frac{d}{ds} \log R(s)-4\I\sum_{j=1}^N \frac{\kappa_j}{s^2 +\kappa_j^2}\right)\frac{ds}{\sqrt{s^2 +2\xi}} \mp \frac{1}{2\sqrt{-2\xi}},
\eeq
with $\pm$ corresponding to the resonant/nonresonant case, respectively.

\noindent {\bf B}.
In the domain $x<(-6c^2 -\epsilon_2)t$ in the nonresonant case:
\beq
q(x,t)=c^2 + \sqrt{\frac{4\nu  \tau }{3t}}\sin(16t\tau^3-\nu\log(192 t \tau^3)+\delta)(1+o(1)),
\eeq
where $\tau=\tau(\xi)=\sqrt{\frac{c^2}{2}-\xi}$, $\nu=\nu(\xi)=  -\frac{1}{2\pi} \log\left(1-|R(\tau)|^2\right)$ and
\begin{align*}
\delta(\xi)= & -\frac{3\pi}{4}+ \arg(R(\tau)-2T(\tau)+\Gamma(\I\nu))\\
&  -\frac{1}{\pi}\int_{\R\setminus[-\tau, \tau]}\log\frac{1-|R(s)|^2}{1-|R(\tau)|^2}\,\frac{s\, ds}{s^2 - c^2 -(\frac{c^2}{2} + \xi)^{1/2}(c^2 - s^2)^{1/2}}.
\end{align*}
 Here $\Gamma$ is the Gamma function.

\noindent {\bf C}. In the domain $x>\epsilon_3 t$:
\[
q(x,t)= -\sum_{j=1}^N \frac{2\kappa_j^2}{\cosh^2\left(\kappa_j x- 4\kappa_j^3 t-\frac{1}{2}\log\frac{\gamma_j}{2\kappa_j}-\sum_{i=j+1}^N\log\frac{\kappa_i - \kappa_j}{\kappa_i + \kappa_j}\right)} +O(\E^{-\epsilon_3 t/2}).
\]
\end{theorem}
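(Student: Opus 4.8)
The plan is to translate the inverse scattering transform into a $2\times 2$ oscillatory Riemann--Hilbert (RH) problem: the unknown $m(k,x,t)$ has a jump across $\R$ whose matrix is assembled from the reflection coefficient $R(k)$ dressed by the phase $\E^{t\Phi(k,\xi)}$, together with residue conditions at the points $k=\I\kappa_j$, and $q(x,t)$ is read off from the $1/k$ term of the expansion of $m$ at $k=\infty$. Everything is governed by the signature table of $\re\Phi$ and by the position of the stationary phase points relative to the degenerate band $[-c,c]$, on which $\abs{R}=1$ and the standard jump factorization fails. The first common step is to solve a scalar RH problem for a conjugating function $d(k,\xi)$, given by a Cauchy integral of $\log(1-\abs{R}^2)$ corrected by Blaschke-type factors at the $\I\kappa_j$; conjugating by $d$ re-factors the jump into triangular pieces whose off-diagonal entries decay in the correct half-planes. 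The value $R(0)=\pm1$ (resonant/nonresonant) enters $d$ at the origin and is ultimately responsible for the term $\mp\frac{1}{2\sqrt{-2\xi}}$ in \eqref{coef}.

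For region C, where $\xi$ is bounded away from $0$ from above, the stationary points leave the real axis, $\re\Phi$ is strictly signed along $\R$, and the whole continuous-spectrum jump is uniformly $O(\E^{-\epsilon_3 t/2})$. What remains are the residues at $\I\kappa_j$; ordering the eigenvalues and removing them one at a time reduces the problem to a product of explicitly solvable one-soliton RH problems, and the phase accumulated from the background factor $d$ and from the not-yet-removed solitons produces the shifts $\frac{1}{2}\log\frac{\ga_j}{2\kappa_j}$ and $\sum_{i>j}\log\frac{\kappa_i-\kappa_j}{\kappa_i+\kappa_j}$. This is the most routine regime. For region B the single relevant saddle sits at $\tau=\sqrt{c^2/2-\xi}>c$, \emph{outside} the band, where $\abs{R}<1$ and the triangular factorization is genuine. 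After opening lenses along the steepest descent paths through $\pm\tau$, the problem localizes near $\tau$ to a model RH problem solved by parabolic cylinder functions; its connection formulae yield the amplitude $\sqrt{4\nu\tau/3t}$ with $\nu=-\frac{1}{2\pi}\log(1-\abs{R(\tau)}^2)$, the oscillation $16t\tau^3$, the logarithmic frequency shift $\nu\log(192t\tau^3)$, and the constant $\delta$, in which both $\Gamma(\I\nu)$ (from the parabolic-cylinder connection) and the principal-value integral (from the finite part of $\log d$) appear.

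Region A is the main obstacle and the genuinely new part, because the relevant spectral interval $[-\sqrt{-2\xi},\sqrt{-2\xi}]$ now lies \emph{inside} $[-c,c]$, where $1-\abs{R}^2=0$ and no triangular factorization exists. The plan is to introduce a $g$-function that renormalizes the background to the slowly varying local level $-2\xi=-\frac{x}{6t}$; the band edges $\pm\sqrt{-2\xi}$ are pinned at the stationary points of the $g$-modulated phase (the soft-edge condition), and on the band the oscillation is removed, so that to leading order the RH problem becomes a constant-background (genus-zero) model that can be solved in closed form, while Airy parametrices at the two band edges account for the $O(t^{-1/3})$ error. The delicate and previously unaddressed point is the finite part of this scalar construction: it is the Cauchy integral against the density $\frac{d}{ds}\log R(s)-4\I\sum_j\frac{\kappa_j}{s^2+\kappa_j^2}$ with the band square root $\sqrt{s^2+2\xi}$, and differentiating it in $x$ produces exactly the correction $-\frac{Q(\xi)}{6t}$ of \eqref{asq}. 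The bulk of the remaining work is then to establish all estimates uniformly in $\xi$ up to the cutoffs $\epsilon_j$ and to glue the local Airy and parabolic-cylinder parametrices to the global model via small-norm RH estimates.
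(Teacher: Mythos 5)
Your outline for region A matches the paper's proof in all essentials: the explicit $g$-function $g(k)=4\I(k^2-a^2)^{3/2}$ with band $[-a,a]$, $a=\sqrt{-2\xi}$, the scalar conjugant solving $d_+d_-=\mathcal R(0)^{-1}\mathcal R(k)$ on the band, the constant-jump (genus-zero) model solved in closed form, Airy parametrices at $\pm a$, and the second term produced by the Cauchy integral of $\frac{d}{ds}\log R(s)-4\I\sum_j\kappa_j/(s^2+\kappa_j^2)$ (the Blaschke factors entering through $\mathcal R=R\Lambda^{-2}$). Region C is also essentially the paper's argument, modulo one misstatement: $\re\Phi$ is \emph{not} ``strictly signed along $\R$'' --- $\Phi(k)$ is purely imaginary for every real $k$, so the jump on $\R$ is oscillatory of modulus one for all $\xi$. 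What makes it exponentially small is the deformation of the contour off the axis into the strip $\abs{\im k}<\kappa$ where $R$ is analytic, and this is exactly where the hypothesis \eqref{decay} is consumed; your write-up never invokes it there.

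The genuine gap is region B. You keep the right-data RH problem with phase $\Phi(k)=4\I k^3+12\I\xi k$ throughout, but the real stationary points of this phase are $\pm\sqrt{-\xi}$, not $\pm\tau=\pm\sqrt{c^2/2-\xi}$, and for the nonempty subrange $-c^2<\xi<-c^2/2$ (i.e.\ $-12c^2t<x<-6c^2t$) they lie \emph{inside} $[-c,c]$, where $\abs{R}\equiv 1$. There $1-\abs{R}^2\equiv 0$ on a whole interval, so the conjugant built from the Cauchy integral of $\log(1-\abs{R}^2)$ diverges, $\nu$ is infinite at the saddle, and no parabolic-cylinder local model exists: your plan as stated fails on this subregion, and even for $\xi<-c^2$ the unit-modulus jump on $[-c,c]$ obstructs the small-norm closing of the lenses. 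The paper avoids all of this by switching to the \emph{left} scattering data $\{R_1(k_1),\chi(k_1),\gamma_{1,j}\}$ and the RH problem of Theorem \ref{thm:vecrhpn} on the cross $\R\cup[\I c,-\I c]$ in the $k_1$-plane, where the left phase $\Phi_1$ has real saddles $k_1^\pm=\pm\sqrt{-c^2/2-\xi}$ at which $\abs{R_1}<1$; the extra jump along $[\I c,-\I c]$ carried by $\chi$ is annihilated after lens-opening by the Pl\"ucker identity (Lemma \ref{plucker}), the model solution is trivially $\rI$, and the parabolic-cylinder crosses at $k_1^\pm$ give the oscillatory term, which is then translated back to the right data via \eqref{RTR1}. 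Your $\delta$-formula with the integral over $\R\setminus[-\tau,\tau]$ is in fact the left conjugant $d^{(1)}$ pulled back to the $k$-variable; without introducing the left problem you have no way to derive it. (A small further bookkeeping point: in the paper the $\mp\frac{1}{2\sqrt{-2\xi}}$ term of \eqref{coef} arises from the $\pm\frac{a}{2\I k}$ coefficient of $m_1^{\mathrm{mod}}$, i.e.\ from the choice of branch of $\beta$ forced by $\mathcal R(0)=\pm 1$ in the model jump, rather than from the behavior of $d$ at the origin --- same resonance mechanism, different ledger entry.)
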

We should remark that our results do not cover the two transitional regions: $0 \approx x$ near the leading wave front,
and $x\approx-6c^2 t$ near the back wave front. Since the error bounds
obtained from the RH method break down near these edges, a rigorous justification is beyond the scope of the present paper.

Our paper is organized as follows: Section~\ref{sec:rhp} provides some necessary information about the inverse scattering transform with steplike
backgrounds and formulates the initial vector RH problems. In Section~\ref{sec:sr} we study the soliton region.
In Section~\ref{secg1} the initial RH problem is reduced to a "model" problem in the domain $-6c^2 t<x<0$. It is solved in Section \ref{model},
and the question of a suitable parametrix is discussed in Section \ref{parametrix}.
Justification of the asymptotical formula \eqref{asq}--\eqref{coef} is given in Section~\ref{asymptotics}.
In Section~\ref{sec:left} we establish the asymptotics in the dispersive region $x<-6c^2 t $.

\section{Statement of the RH problem and the first conjugation step}
\label{sec:rhp}

Let $q(x,t)$ be the solution of the Cauchy problem \eqref{kdv}, \eqref{decay1} and consider the underlying spectral problem
\beq\label{Sp}
(H(t) f)(x):=- \frac{d^2 }{dx^2}f(x) + q(x,t)f(x)=\lambda f(x),\qquad x\in\R.
\eeq
In order to set up the respective Riemann--Hilbert (RH) problems we recall some facts from
scattering theory with steplike backgrounds. We refer to \cite{EGLT} for proofs and further details.

Throughout this paper we will use the following notations: Set $\mathcal D:=\C\setminus\Sigma$, where
\[
\Sigma=\Sigma_U\cup\Sigma_L,\quad \Sigma_U=\{\la^U=\la+\I 0, \la\in[0,\infty)\},\quad \Sigma_L=\{\la^L=\la-\I 0, \la\in[0,\infty)\}
\]
(throughout this paper the indices $U$ and $L$ will stand for "upper" and "lower").
That is, we treat the boundary of the domain $\mathcal D$ as consisting of the two sides of the cut along the interval $ [0,\infty)$,
with different points $\la^U$ and $\la^L$ on different sides. In equation \eqref{Sp} the spectral parameter $\la$ belongs to the set $\clos(\mathcal D)$, where $\clos(\mathcal D)=\mathcal D\cup\Sigma_U\cup\Sigma_L$.
Along   with  $\la$ we will use two more spectral parameters
\[
k=\sqrt \la, \quad k_1=\sqrt{\la - c^2},\quad \mbox{where}\ \ k>0\ \ \mbox{and}\ \  k_1>0,\ \  \mbox{ for }\ \la^U>c^2.
\]
The functions $k_1(\la)$ and $k(\la)$   conformally map the domain $\mathcal D$  onto $\mathfrak D_1:=\C^+\setminus (0, \I c]$ and $\mathfrak D:= \C^+$, respectively.
Since there is a bijection between the closed domains $\clos\mathcal D$, $\clos \mathfrak D=\C^+\cup \R$ and $\clos\mathfrak D_1=\mathfrak D_1\cup\R\cup [0,\I c]_r\cup [0,\I c]_l$, we will use the ambiguous notation $f(k)$ or $f(k_1)$ or $f(\la)$ for the same value of an arbitrary function $f(\la)$ in these respective coordinates.
Here the indices $l$ and $r$ are associated with  the right and  the left sides of the cut.  In particular, if $k>0$ corresponds to $\la^U$ then $-k$ corresponds to $\la^L$, and for functions defined on the set $\Sigma$ we will sometimes use the notation $f(k)$ and $f(-k)$ to indicate the values at symmetric points $\la^U$ and $\la^L$.

Since the potential $q(x,t)$ satisfies \eqref{sol}, the following facts are valid for the operator $H(t)$ (\cite{EGLT}):

\begin{theorem}\label{thm:scat}
{\ }
\begin{itemize}
\item The spectrum of $H(t)$ consists of an absolutely continuous part $\R_+$ plus a finite number of negative eigenvalues $\la_1<\cdots<\la_N<0$.
The (absolutely) continuous spectrum consists of a part $[0, c^2]$ of multiplicity one and a part $[c^2, \infty)$ of multiplicity two. In terms of the variables $k$ and $ k_1$, the continuous spectrum corresponds to $k\in \R$, and the spectrum of multiplicity two to $k_1\in \R$.

\item Equation \eqref{Sp} has two Jost solutions $\phi(\la,x,t)$ and $\phi_1(\la,x,t)$, satisfying the conditions
\[
\lim_{x \to  +\infty} \E^{-\I k x} \phi(\la,x,t) = \lim_{x \to  -\infty} \E^{\I k_1 x} \phi_1(\la,x,t) =1,\quad \mbox{for}\ \ \la\in\clos\mathcal D.
\]
The Jost solutions fulfill the scattering relations
\begin{align}
  \label{rsr}
  T(\la,t) \phi_1(\la,x,t)
  & =\overline{\phi(\la,x,t)} + R(\la,t)\phi(\la,x,t), \qquad k\in\R, \\  \label{lsr}
  T_1(\la,t) \phi(\la,x,t)
   &=\overline{\phi_1(\la,x,t)} + R_1(\la,t)\phi_1(\la,x,t), \quad k_1\in\R,
\end{align}
where $T(\la,t)$, $R(\la,t)$ (resp., $T_1(\la,t)$, $R_1(\la,t)$) are the right (resp., the left) transmission and reflection coefficients.

\item The Wronskian 
\beq\label{wrons}W(\la,t)=\phi_1(\la,x,t)\phi^\prime(\la,x,t) -\phi_1^\prime(\la,x,t)\phi(\la,x,t)\eeq of the Jost solutions has simple zeros at the points $\la_j$.
The only other possible zero is $\lambda=0$. The case $W(0,t)=0$ is known as the resonant case. In this case $R(0,t)=1$.
In the nonresonant case, which is generic, $R(0,t)=-1$.

\item The solutions $\phi(\la_j,x,t)$ and $\phi_1(\la_j,x,t)$ are the corresponding (linearly dependent) eigenfunctions of $H(t)$. The associated norming
constants are
\beq\label{rtime7}
\gamma_j(t)=\left(\int_\R \phi^2(\la_j,x,t) dx \right)^{-1}, \quad \gamma_{j,1}(t)=\left(\int_\R \phi^2_1(\la_j,x,t) dx \right)^{-1}.
\eeq
\item The function $T(\la,t)$  has a meromorphic extensions
to the domain $\la\in\C\setminus[0,\infty)$ with simple poles at the points $\la_1$,\dots, $\la_N$.
The only possible zero is at $\la=0$ in the nonresonant case. In the resonant case $T(\la,t)\neq 0$ for all $\la\in\clos(\mathcal D)$.
\item There is a symmetry $T(\la^U,t)=\overline{T(\la^L,t)}$, $T_1(\la^U,t)=\overline{T_1(\la^L,t)}$, $R(\la^U,t)=\overline{R(\la^L,t)}$ for $k\in\R$, i.e.\ for $\la\in\Sigma$.
The same is valid for $\phi(\la,x,t)$ and $\phi_1(\la,x,t)$. Moreover, $\phi_1(\la,x,t)\in \R$ for $k\in[-c,c]$ and
$R_1(\la^U,t)=\overline{R_1(\la^L,t)}$ for $k_1\in\R$.
\item The following identities are valid on the continuous spectrum:
\beq\label{RTR}
-\frac{T_1(\la,t)}{\ol{T_1(\la,t)}}=\frac{T(\la,t)}{\ol{T(\la,t)}}=R(\la,t)\E^{2 \I \arg k},\ \mbox{for}\ \  k\in[-c, c],
\eeq
and for $k_1\in\R$:
\begin{align}\label{RTR1}
&1-|R(\la,t)|^2 =1-|R_1(\la,t)|^2=T_1(\la,t)\ol{T(\la,t)},\\ \nn &  R_1(\la,t)\ol{T(\la,t)} +\ol{R(\la,t)}T(\la,t)=0.
\end{align}
Here $\arg k=\pi$ as $k<0$.
\item The spectrum is time independent and the time evolution of the scattering data is given by (\cite{ET,Kh1,Kh2})
\begin{align}\nn
R(\la,t)&=R(k)\E^{8\I k^3 t}, \quad\quad\qquad k\in\R, \\ \label{rtime}
\chi(\la,t)&=\chi(k)\E^{-8\I t k_1^3 - 12 \I t k_1 c^2},\qquad k\in [-c,c],\\ \nn
R_1(\la,t)&=R_1(k_1)\E^{-8\I t k_1^2 - 12\I t k_1c^2},\qquad k_1\in\R,\\ \nn
\gamma_{1,j}(t)&=\gamma_{1,j}\E^{-8\kappa_{1,j}^3  t +12 c^2\kappa_{1,j}t},\qquad\gamma_j(t)=\gamma_j\E^{8 \kappa_j^3 t},
\end{align}
where
\beq\label{rtime3}
\chi(\la,t):=-\ol{T_1(\la, t)}T(\la,t);\ \  \chi(k)=\chi(\la(k),0);\  R(k)=R(\la(k),0);
\eeq
\beq\label{rtime5}
\gamma_{1,j}=\gamma_{1,j}(0);\ \gamma_j=\gamma_j(0);\ 0<\kappa_j=\sqrt{-\la_j};\  
\kappa_{1,j}=\sqrt{\kappa_j^2 + c^2}.
\eeq
\item Under the assumption \eqref{decay} with $0<\kappa<\kappa_N$ the solution $\ol{\phi(\la,x,0)}$ has an analytical continuation to a subdomain $\mathcal D_\kappa\subset\mathcal D$, where $\mathcal D_\kappa=\{\la(k):\, 0< \im k<\kappa\}$.
Accordingly, the function $R(k)$ has a holomorphic continuation to  the strip $0<\im k <\kappa$, continuous up to the boundary $\im k=0$. The transmission coefficient as a function of $k$ always has an analytical continuation in $\C^+$, and is holomorphic in the strip and continuous up to the boundary $\R$.
Identity \eqref{rsr} remains valid in the strip. 
\item The solution $q(x,t)$ of the initial value problem \eqref{kdv}--\eqref{decay1}   can be uniquely recovered from either the right initial scattering data
\[
\{ R(k), \ \ k\in\R;\ \ \la_j=-\kappa_j^2,\ \gamma_j>0,\  j=1,\dots,N\},
\]
or from the left initial scattering data
\[
\{ R_1(k_1),\ \ k_1\in\R;\ \ \chi(k), \ k\in[-c,c];\ \  \la_j,\  \gamma_{1,j}>0,\ \  j=1,\dots,N\}.
\]
\end{itemize}
\end{theorem}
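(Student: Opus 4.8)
These are standard facts from the scattering theory of one-dimensional Schr\"odinger operators with steplike potentials, and the plan is to follow the classical Buslaev--Fomin construction adapted to the two backgrounds $0$ and $c^2$. First I would construct the two Jost solutions as solutions of Volterra integral equations: writing \eqref{Sp} as $\phi'' + k^2\phi = q\phi$ and $\phi_1'' + k_1^2 \phi_1 = (q-c^2)\phi_1$, one obtains
\beq
\phi(\la,x,t) = \E^{\I k x} + \int_x^\infty \frac{\sin(k(y-x))}{k}\, q(y,t)\,\phi(\la,y,t)\, dy
\eeq
together with the analogous equation for $\phi_1$ based at $-\infty$ and with potential $q-c^2$. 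The moment condition \eqref{sol} guarantees absolute convergence of the associated Neumann series for $\im k\ge 0$ (resp.\ $\im k_1\ge0$), which yields existence of the Jost solutions normalized as in the statement, together with their continuity up to the cut, including the edge $k=0$.

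For the analyticity claims I would read off from these integral equations that $\phi$ is analytic in $k$ for $\im k>0$; the stronger exponential decay \eqref{decay} then lets one push $\ol{\phi}$, and hence $R(k)$, into the strip $0<\im k<\kappa$ by shifting the contour in the Neumann series. The description of the spectrum follows from the oscillatory/exponential dichotomy of the Jost solutions: for $\la>c^2$ both $k$ and $k_1$ are real, so $\phi,\phi_1$ give two bounded solutions (multiplicity two); for $0<\la<c^2$ only $k$ is real while $k_1\in\I\R$, leaving a single bounded solution (multiplicity one); and for $\la<0$ both solutions grow exponentially except on the discrete set where $\phi$ and $\phi_1$ become linearly dependent, i.e.\ where the Wronskian \eqref{wrons} vanishes. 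These are exactly the eigenvalues $\la_j$, the corresponding $\phi(\la_j,\cdot,t)$ are the $L^2$ eigenfunctions with norming constants \eqref{rtime7}, and since $T=2\I k/W$, the meromorphic continuation of $T$, the location of its poles, and its single possible zero at $\la=0$ are immediate.

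The scattering relations and the symmetries are then essentially linear algebra on the two-dimensional solution space. For real $k$ the pair $\phi,\ol{\phi}$ is a basis, so expanding $T\phi_1$ in it produces \eqref{rsr}, and taking the Wronskian with $\phi$, using $W(\ol{\phi},\phi)=2\I k$, fixes $T=2\I k/W$; the relation \eqref{lsr} is obtained symmetrically. The reality of $q$ gives $\ol{\phi(\la^U,x,t)}=\phi(\la^L,x,t)$, from which all conjugation symmetries, the reality of $\phi_1$ on $[-c,c]$, and the values $R(0,t)=\pm1$ follow. The trace-type identities \eqref{RTR} and \eqref{RTR1} are derived by computing the relevant Wronskians on each part of the continuous spectrum separately and matching them through $W$.

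Finally, the time evolution \eqref{rtime} comes from the Lax pair: differentiating the Jost solutions of the KdV flow in $t$ and comparing asymptotics as $x\to+\infty$ (where $q\to0$, giving the dispersion factor $8\I k^3$) and as $x\to-\infty$ (where $q\to c^2$, giving $-8\I k_1^3-12\I k_1 c^2$) produces the stated exponential factors for $R$, $R_1$, $\chi$, and the norming constants. Reconstruction and uniqueness in the last item follow from the solvability of the associated Marchenko equation. The main obstacle I expect is not any single computation but the careful bookkeeping across the two spectral sheets parametrized by $k$ and $k_1$---in particular making \eqref{RTR}, \eqref{RTR1} consistent across the band edge $\la=c^2$ and controlling the behavior at $k=0$, where the resonant/nonresonant dichotomy and the possible cancellation of the factor $2\I k$ in $T=2\I k/W$ must be handled with care.
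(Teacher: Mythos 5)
Your sketch is correct and follows exactly the classical route underlying this theorem---Volterra integral equations for the Jost solutions, the Wronskian relation $T=2\I k/W$, conjugation symmetries from reality of $q$, Wronskian computations for \eqref{RTR}--\eqref{RTR1}, the Lax pair for the time evolution \eqref{rtime}, and the Marchenko equation for reconstruction; the paper itself gives no proof but refers to \cite{BF}, \cite{CK}, \cite{EGLT} and \cite{ET,Kh1,Kh2}, where precisely these arguments are carried out for steplike backgrounds. The only delicate points are the ones you already flag yourself: the behavior at $k=0$ (resonant versus nonresonant cancellation in $T=2\I k/W$) and the bookkeeping across the band edge $\la=c^2$ between the $k$ and $k_1$ parametrizations.
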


These properties allow us to formulate  two vector RH problems. One of them is connected with the right scattering data, another one with the left one.
To this end we introduce a vector function
\beq \label{defm}
m(\la,x,t) =
\begin{pmatrix} T(\la,t) \phi_1(\la,x,t) \E^{\I k x},  & \phi(\la,x,t)  \E^{-\I k x} \end{pmatrix}
\eeq
on $\clos \mathcal D$.
By Theorem~\ref{thm:scat} this function is meromorphic in $\mathcal D$ with simple poles at the points $\la_j$, and continuous up to the boundary $\Sigma$.
We regard it as a function of $k\in \ol{\C_+}$ (with $\ol{\C_+}$ the closed upper half-plane), keeping $x$ and $t$ as parameters. Accordingly we will write $m(k):=m(\la(k), x, t)$.
This vector function has the following asymptotical behavior (cf.\ \cite{EGKT} and \cite{EGLT}, Lemma 4.3) as $k\to\infty$ in any direction of $\ol{\C_+}$:
\beq\label{asm}
m(k) = \begin{pmatrix} m_1(k)& m_2(k)\end{pmatrix}=\rI -\frac{1}{2\I k}\int_{x}^{+\infty}q(y,t) dy \begin{pmatrix} -1 & 1 \end{pmatrix} + O\left(\frac{1}{k^2}\right),
\eeq
and
\beq\label{eq:asygf}
m_1(k) m_2(k) = T(k)\phi(k,x,t)\phi_1(k,x,t)=1+\frac{q(x,t)}{2k^2} + O(\frac{1}{k^4}).
\eeq

Extend the definition of $m(k)$ to $\C^-$ using the {\it symmetry condition}
\beq\label{symto}
m(k)=m(-k)\sigma_1,
\eeq
where
\[
\sigma_1=\begin{pmatrix} 0& 1\\ 1 & 0\end{pmatrix}, \quad
\sigma_2=\begin{pmatrix} 0 & -\I\\ \I& 0\end{pmatrix}, \quad
\sigma_3=\begin{pmatrix} 1& 0\\ 0 & -1\end{pmatrix}
\]
are the Pauli matrices.
After this extension the function $m$ has a jump along the real axis. We consider the real axis as a contour with the natural orientation from minus to plus infinity,
and denote by $m_+(k)$ (resp.\ $m_-(k)$) the limiting values of $m(k)$ from the upper (resp.\ lower) half-plane. 

\begin{theorem}\label{thm:vecrhp}
Let $
\{ R(k), \, k\in\R;\  \la_j=-\kappa_j^2,\ \gamma_j>0,\  j=1,\dots,N\}
$
be the right scattering data for the initial datum $q_0(x)$, satisfying condition \eqref{decay1},
and let $q(x,t)$ be the unique solution of the Cauchy problem \eqref{kdv}, \eqref{decay1}.
Then the vector-valued function $m(k)$ defined by \eqref{defm} and \eqref{symto}
is the unique solution of the following vector Riemann--Hilbert problem:

Find a vector-valued function $m(k)$, which is meromorphic  away from the
contour $\R$ with continuous limits from both sides of the contour and satisfies:
\begin{enumerate}[I.]
\item  The jump condition $m_{+}(k)=m_{-}(k) v(k)$, where
\beq \label{jump15}
v(k):=v(\la(k), x,t)= \begin{pmatrix}
1-\abs{R(k)}^2 & - \ol{R(k)} \E^{- 2 t \Phi(k)} \\[1mm]
R(k) \E^{2 t \Phi(k)} & 1
\end{pmatrix}, \quad  k\in \R;  \eeq
\item
the pole conditions
\beq\label{pole15}
\aligned
\res_{\I\kappa_{j}} m(k) &= \lim_{k\to \I\kappa_{j}} m(k)
\begin{pmatrix} 0 & 0\\ \I \ga_{j} \E^{2t\Phi(\I\kappa_{j})}  & 0 \end{pmatrix},\\
\res_{-\I \kappa_j} m(k) &= \lim_{k\to -\I\kappa_j} m(k)
\begin{pmatrix} 0 &  -\I \ga_{j} \E^{-2t\Phi(\I\kappa_j)} \\ 0 & 0 \end{pmatrix};
\endaligned
\eeq
\item
the symmetry condition \eqref{symto};
\item
the normalization condition
\beq\label{normcond}
 m(k) = \rI + O(k^{-1}),\quad k\to\infty.
\eeq
\end{enumerate}
Here the phase $\Phi(k)=\Phi(k,x,t)$ in \eqref{jump15}  is given by
\[
\Phi(k)= 4\I k^3 +\I k \frac{x}{t}.
\]
\end{theorem}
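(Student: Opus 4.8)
The proof splits into existence (the function $m$ from \eqref{defm}, \eqref{symto} satisfies I–IV) and uniqueness, and it is the latter that carries the real content.

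\textbf{Existence.} Conditions III and IV are built in: \eqref{symto} \emph{defines} the continuation to $\C^-$, and \eqref{normcond} is just the leading term of the expansion \eqref{asm}. For the jump I, I would compute the two boundary values on $\R$. From the symmetry extension and the symmetry relations of Theorem~\ref{thm:scat} one finds, for $k\in\R$,
\begin{align*}
m_+(k) &= \begin{pmatrix} T\phi_1\,\E^{\I kx} & \phi\,\E^{-\I kx}\end{pmatrix},\\
m_-(k) &= m_+(-k)\sigma_1 = \begin{pmatrix} \ol{\phi}\,\E^{\I kx} & \ol{T\phi_1}\,\E^{-\I kx}\end{pmatrix},
\end{align*}
all scattering quantities carrying the time $t$. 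Substituting the scattering relation \eqref{rsr}, $T\phi_1=\ol\phi+R\phi$, together with its conjugate $\ol{T\phi_1}=\phi+\ol R\,\ol\phi$, and replacing the time-dependent coefficient via $R(\la,t)\E^{2\I kx}=R(k)\E^{2t\Phi(k)}$ from \eqref{rtime} and $2t\Phi=8\I k^3t+2\I kx$, a direct check gives $m_+=m_-v$ with $v$ as in \eqref{jump15}; the identity $|R(\la,t)|=|R(k)|$ produces the $(1,1)$ entry.

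\textbf{Pole condition and continuity.} Only $m_1=T\phi_1\E^{\I kx}$ is singular at $k=\I\kappa_j$, since there $T$ has a simple pole while $\phi,\phi_1$ stay regular; hence $\res m_2=0$. The residue $(\res_{\I\kappa_j}T)\,\phi_1(\I\kappa_j)\E^{-\kappa_j x}$ is rewritten through the standard scattering identity linking $\res_{\I\kappa_j}T$, the proportionality $\phi_1(\I\kappa_j)=d_j\phi(\I\kappa_j)$, and $\gamma_j(t)$, yielding $\I\gamma_j(t)\,\phi(\I\kappa_j)\E^{-\kappa_j x}$. Using $\gamma_j(t)=\gamma_j\E^{8\kappa_j^3t}$ and $2t\Phi(\I\kappa_j)=8\kappa_j^3t-2\kappa_j x$ identifies this with the first entry of \eqref{pole15}, and the residue at $-\I\kappa_j$ then follows from \eqref{symto}. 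Continuity up to $\R$, including the delicate point $k=0$, is supplied by the analyticity statements at the end of Theorem~\ref{thm:scat} together with $R(0)=\mp1$ and the (non)vanishing of $T$ at the origin.

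\textbf{Uniqueness.} Let $m,\tilde m$ be two solutions. First I would clear the poles: because the residue of the first component at $\pm\I\kappa_j$ is prescribed in terms of the second, the \emph{same} explicit rational conjugation $D(k)$ desingularizes both, producing holomorphic vectors $\hat m,\hat{\tilde m}$ with a common jump $\hat v$ satisfying $\det\hat v=1$ and with $\hat m,\hat{\tilde m}\to\rI$. Taking the two solutions as the rows of a $2\times2$ matrix $A(k)$, both rows obey $A_+=A_-\hat v$, so $\det A$ has no jump; being entire, bounded, and tending to $1\cdot1-1\cdot1=0$, it vanishes identically by Liouville. Hence the rows are everywhere proportional, $\hat{\tilde m}=f\,\hat m$, and the scalar $f$ inherits no jump from the common $\hat v$, is therefore entire with $f\to1$, whence $f\equiv1$; undoing $D$ gives $\tilde m=m$. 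The genuine obstacle lies exactly here: one must exhibit a single conjugation $D(k)$ that removes all poles while preserving both $\det\hat v=1$ and the $\rI$-normalization, and one must secure the boundedness underlying the two Liouville steps—precisely the juncture where the boundary regularity of the Jost solutions, and the controlled behaviour at $k=0$ separating the resonant from the nonresonant case, becomes decisive.
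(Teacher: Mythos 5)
Your existence half is fine and is in substance the paper's argument (the paper solves for the unknown jump matrix entry by entry using \eqref{rsr}, \eqref{lsr}, \eqref{RTR}, \eqref{RTR1}, treating $k\in[-c,c]$ and $k_1\in\R$ separately, while you verify the candidate matrix directly via \eqref{rsr} and its conjugate, which is legitimate since \eqref{rsr} holds for all $k\in\R$); the pole condition is likewise only cited in the paper. The problem is your uniqueness half, which is where, as you yourself note, the real content lies --- and there the proposal has a genuine gap.

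Your Liouville scheme is the standard \emph{matrix} RHP uniqueness argument, and it does not close for a \emph{vector} problem. Granting the desingularization (which in fact requires the local disc-jump trick of Lemma~\ref{lem:holrhp} rather than a single global rational $D(k)$ --- a naive rational conjugation reintroduces poles at $\mp\I\kappa_j$ when it kills them at $\pm\I\kappa_j$ --- but that is repairable), the fatal step is the last one: $\det A\equiv 0$ gives $\hat{\tilde m}=f\,\hat m$ only with $f$ \emph{meromorphic}, its possible poles sitting at the common zeros of the two components of $\hat m$. For an abstract solution of the RH problem nothing controls these zeros, and $f\to 1$ does not force $f\equiv 1$: a factor such as $f(k)=(k-z_0)/(k-z_1)$ with $\hat m(z_1)=0$ as a vector is not excluded, nor is the jump-freeness of $f$ at contour points where $\hat m_+$ vanishes. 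A telltale sign is that your argument never uses $\gamma_j>0$, the reality and sign of $\E^{2t\Phi(\I\kappa_j)}$, or $1-|R(k)|^2\ge 0$; vector uniqueness genuinely hinges on this positivity. The paper's proof is structurally different: for the difference $\hat m=m-\tilde m$ it forms $F(k)=\hat m_1(k)\ol{\hat m_1(\ol k)}+\hat m_2(k)\ol{\hat m_2(\ol k)}$, computes $\res_{\I\kappa_j}F=2\I\gamma_j|\hat m_2(\I\kappa_j)|^2\E^{2t\Phi(\I\kappa_j)}\in\I\R_+$, integrates over a large semicircle, and takes real parts to obtain a sum of nonnegative terms equal to zero, forcing $\hat m_{2,-}\equiv 0$ on $\R$ and $\hat m_{1,-}\equiv 0$ where $1-|R|^2>0$; the degenerate band $[-c,c]$, on which $|R|=1$ and the quadratic form loses definiteness, is then handled by observing that $F$ is entire and decays, hence vanishes identically. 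Your closing sentence correctly identifies the obstacle (boundedness, behavior at $k=0$, the resonant/nonresonant dichotomy) but does not supply the mechanism --- the positivity argument above --- that actually overcomes it, so the uniqueness part as written would fail.
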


\begin{remark}
Note that by property \eqref{RTR} we have $|R(k)|=1$ for $k\in[-c,c]$ implying
\[
v(k)= \begin{pmatrix}
0 & - \ol{R(k)} \E^{- 2 t \Phi(k)} \\[1mm]
R(k) \E^{2 t \Phi(k)} & 1
\end{pmatrix}, \quad  k\in [-c,c].
\]
\end{remark}
 
\begin{proof}[Proof of Theorem \ref{thm:vecrhp}.]
Since our further considerations mainly affect $k$, we drop $x$ and $t$ from our notation whenever possible. Let $m(k)$ be defined by \eqref{defm}.
In the upper half-plane it is a meromorphic function, its first component $m_1(k)$ has simple poles at points $\I\kappa_j$, and the second component $m_2(k)$ is holomorphic one.
Both components have continuous limits up to the boundary $\R$,  moreover, for $k\in \R$ we have $m_+(-k)=\ol{m_+(k)}$.
To compute the jump condition we observe that if $m_+= \big( T\phi_1 z,\:\: \phi z^{-1}\big)$,
where $z=\E^{\I k x}$,  $k\in \R$, then by the symmetry condition
$m_-= \big( \ol{\phi} z,\:\: \ol{T\phi_1} z^{-1}\big)$ at the same point $k\in \R$.
Write $\big(\begin{smallmatrix}\alpha(k) &\beta(k)\\ \gamma(k) & \delta(k) \end{smallmatrix}\big)$
for the unknown jump matrix. Then
\[
T\phi_1 z  = \ol\phi \, z\alpha + \ol {T\phi_1} z^{-1}\gamma,\qquad
\phi\, z^{-1}=\ol\phi\, z\beta + \ol{ T\phi_1}z^{-1}\delta.
\]
Multiply the first equality by $z^{-1}$, the second one by $z$, and then conjugate both of them.
We finally get
\beq\label{vnutr}
\ol\alpha\phi=\ol{T\phi_1} -T\ol{\gamma}\phi_1 z^{2},\qquad T\ol\delta\phi_1  = \ol\phi -\ol{\beta}\phi z^{-2}.
\eeq
Now divide the first of these equalities by $\ol T$ and compare it with \eqref{lsr} as $k_1\in\R$. From \eqref{RTR1} it follows that $\alpha=T_1 \ol T=1-|R|^2$, $\gamma z^{-2}=R$ if $k_1\in\R$.
For $k\in [-c,c]$ we use the first equality of \eqref{vnutr} taking into account that $\ol\phi_1=\phi_1$. Then by \eqref{RTR} $\ol\alpha\phi =\phi_1\ol T(1-\ol \gamma z^2 R)$  and
therefore $\alpha=0$, $\gamma z^{-2}= R$ if $k\in[-c,c]$. Taking into account \eqref{rtime}  and $z=\E^{\I k x}$ we finally justify the $11$ and $21$ entries of the jump matrix \eqref{jump15}. Comparing the second equality of \eqref{vnutr} with \eqref{rsr} gives $\delta=1$ and $-\ol\beta z^{-2}=R$. This justifies the $12$ and $22$ entries.

The pole condition \eqref{pole15} is proved in \cite{GT} or in Appendix~A of \cite{EGKT}.
The symmetry condition holds by definition, and the normalization condition follows from \eqref{asm}.

It remains to prove that the solution of this RH problem is unique. Let $m(k)$ and $\tilde m(k)$ be two solutions. Then $\hat m(k)=m(k) - \tilde m(k)$ satisfies I--III (note, that condition II does not guarantee that $\hat m$ is a holomorphic solution!) and condition IV is replaced by $\hat m(k)=O(k^{-1})$. Therefore the function
\[
F(k):=\hat m_1(k)\ol{\hat m_1(\ol {k})} +\hat m_2(k)\ol{\hat m_2(\ol {k})}
\]
is a meromorphic in $\C^+$ with simple poles at the points $\I \kappa_j$ and with the asymptotical behavior $F(k)=O(k^{-2})$ as $k\to\infty$.
Since $-\ol {k}=k$ for $k\in\I\R$ condition II implies
\beq\label{vychet}
\res_{\I \kappa_j} F(k)=2\I\gamma_j |\hat m_2(\I\kappa_j)|^2 \E^{2 t \Phi(\I\kappa_j)}\in\I \R_+.
\eeq
Moreover, $F(k)$ has continuous limiting values $F_+(k)$ on $\R$, which can be represented, due to condition III, as
$ F_+(k)=\hat m_{1,+}(k)\ol{\hat m_{1,-}(k)} + \hat m_{2,+}(k)\ol{\hat m_{2,-}(k)}.$
From condition I we then get
\begin{align*}
F_+(k) &= \big((1-|R|^2) \hat m_{1,-} + \mathcal R \hat m_{2,-} \big)\ol{\hat m_{1,-}} + \big(\hat m_{2,-} -\ol{\mathcal R} \hat m_{1,-} \big)\ol{\hat m_{2,-}}\\
&= (1 - |R|^2) |\hat m_{1,-}|^2 +  |\hat m_{2,-}|^2 + 2\I \im\big( \mathcal R \ol{\hat m_{1,-}} \hat m_{2,-} \big).
\end{align*}
Now let $\rho>\kappa_1$  and consider the half-circle
\[
\mathcal C_\rho=\{ k: k\in [-\rho, \rho], \mbox{ or } k=\rho\E^{\I \theta}, \quad 0<\theta<\pi\}
\]
as a contour, oriented counterclockwise. By the Cauchy theorem and \eqref{vychet}
\[
\oint_{\mathcal C_{\rho}} F(k)dk=2\pi\I\sum_{j=1}^N \res_{\I\kappa}F(k)=-4\pi \sum_{j=1}^N
\gamma_j |\hat m_2(\I\kappa_j)|^2 \E^{2 t \Phi(\I\kappa_j)}.
\]
Using $F(k)=O(k^{-2})$ as $k\to\infty$ we see $\lim_{\rho\to\infty }\int_0^\pi F(\rho\E^{\I\theta})\rho\E^{\I\theta}d\theta=0$ implying
\[
\int_\R F_+(k)dk + 4\pi \sum_{j=1}^N
\gamma_j |\hat m_2(\I\kappa_j)|^2 \E^{2 t \Phi(\I\kappa_j)}=0.
\]
Taking the real part we further obtain
\[
\int_\R \big( (1 - |R(k)|^2) |\hat m_{1,-}(k)|^2 +  |\hat m_{2,-}(k)|^2 \big)dk + 4\pi \sum_{j=1}^N
\gamma_j |\hat m_2(\I\kappa_j)|^2 \E^{2 t \Phi(\I\kappa_j)}=0,
\]
which shows
\[
\hat m_{2,-}(k)=0 \ \mbox{as}\  k\in(\R\cup_j \{\I\kappa_j\}), \ \mbox{and}\ \  \hat m_{1,-}(k)=0
\ \mbox{as}\  k_1\in\R.
\]
Thus, function $F(k)$ is entire and taking into account its behavior at infinity we conclude that it is zero.
This proves uniqueness of the RH problem under consideration.
\end{proof}

For our further analysis we rewrite the pole condition as a jump
condition, and hence we turn our meromorphic Riemann--Hilbert problem into a holomorphic one literally following \cite{GT}.
Choose $\delta>0$ so small that the discs $\left\vert k- \I \kappa_j \right\vert<\delta$ lie inside the upper half-plane and
do not intersect any of the other contours, moreover $\kappa_N-\delta>\kappa$,
where $\kappa$ is the same as in estimate \eqref{decay}.
Redefine $m(k)$ in a neighborhood of $\I \kappa_j$ (respectively $- \I \kappa_j$) according to
\beq\label{eq:redefm}
m(k) = \begin{cases} m(k) \begin{pmatrix} 1 & 0 \\
-\frac{\I \gamma_j \E^{2t\Phi(\I \kappa_j)} }{k- \I \kappa_j} & 1 \end{pmatrix},  &
|k- \I \kappa_j|< \delta,\\
m(k) \begin{pmatrix} 1 & \frac{\I \gamma_j \E^{2t\Phi(\I \kappa_j)} }{k+ \I \kappa_j} \\
0 & 1 \end{pmatrix},  &
|k+ \I \kappa_j|< \delta,\\
m(k), & \text{else}.\end{cases}
\eeq
Denote the boundaries of these small discs as $\T^{j,U}$ and $\T^{j,L}$ (as usual, indices $U$ and $L$ are associated  with "upper" and "lower").
Set also
\beq\label{remm}
h^{U}(k,j):= -\frac{\I \gamma_j \E^{2t\Phi(\I\kappa_j)}}{k-\I \kappa_j},\quad
h^{L}(k,j):= -\frac{\I \gamma_j \E^{2t\Phi(\I\kappa_j)}}{k+\I \kappa_j}.
\eeq
Then a straightforward calculation using $\res_{\I \kappa} m(k) = \lim_{k\to\I\kappa} (k-\I \kappa)m(k)$ shows the following well-known result (see \cite{GT}):

\begin{lemma}\label{lem:holrhp}
Suppose $m(k)$ is redefined as in \eqref{eq:redefm}. Then $m(k)$ is holomorphic in $\C\setminus\left(\R\cup\bigcup_{j=1}^N (\T^{j,U}\cup \T^{j,L})\right)$. Furthermore it satisfies conditions I, III, IV and II is replaced by the jump condition
\beq \label{eq:jumpcond2}
m_+(k) = m_-(k) \begin{cases} \begin{pmatrix} 1 & 0 \\
h^{U}(k,j) & 1 \end{pmatrix}, & k\in \T^{j,U}\\
 \begin{pmatrix} 1 & h^{L}(k,j)\\
0 & 1 \end{pmatrix},& k\in\T^{j,L},
\end{cases}
\eeq
where the small circles $\T^{j,U}$ around $\I \kappa_j$ are oriented counterclockwise, and the circles $\T^{j,L}$ around $-\I \kappa_j$ are oriented clockwise.
\end{lemma}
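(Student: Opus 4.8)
The plan is to verify the three assertions locally, one disc at a time, since outside the discs $\T^{j,U}$ and $\T^{j,L}$ nothing is changed and there conditions I, III, IV are simply inherited from Theorem~\ref{thm:vecrhp}. Fix $j$ and work near $\I\kappa_j$. Recall from the proof of Theorem~\ref{thm:vecrhp} that the second component $m_2$ is holomorphic at $\I\kappa_j$, while $m_1$ has a simple pole; unpacking the residue condition \eqref{pole15} gives $\res_{\I\kappa_j} m_1 = \I\ga_j\E^{2t\Phi(\I\kappa_j)}\,m_2(\I\kappa_j)$ together with $\res_{\I\kappa_j} m_2 = 0$. This single identity is the only analytic input; everything else is bookkeeping.

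First I would establish holomorphy. Inside $|k-\I\kappa_j|<\delta$ the redefinition \eqref{eq:redefm} replaces the first component by $m_1(k)+h^{U}(k,j)\,m_2(k)$, with $h^U$ from \eqref{remm}, and leaves $m_2$ untouched. A one-line Laurent expansion at $\I\kappa_j$ shows that the principal part $\frac{\res_{\I\kappa_j}m_1}{k-\I\kappa_j}$ of $m_1$ is cancelled exactly by the principal part $-\frac{\I\ga_j\E^{2t\Phi(\I\kappa_j)}}{k-\I\kappa_j}\,m_2(\I\kappa_j)$ of the added term, so the redefined $m_1$ is holomorphic in the disc; the lower disc is handled identically (or deduced from the symmetry below). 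Hence the redefined $m$ is holomorphic on $\C\setminus\bigl(\R\cup\bigcup_j(\T^{j,U}\cup\T^{j,L})\bigr)$. Next I would read off the jump: since the redefinition multiplies $m$ on the right by a triangular factor inside the disc and by the identity outside, the two boundary values on a circle differ precisely by that factor. For $\T^{j,U}$ oriented counterclockwise the interior is the $+$ side, giving $m_+=m_-\left(\begin{smallmatrix}1&0\\ h^U&1\end{smallmatrix}\right)$; for $\T^{j,L}$ oriented clockwise the interior is the $-$ side, and since the matrix in \eqref{eq:redefm} equals $\left(\begin{smallmatrix}1&-h^L\\ 0&1\end{smallmatrix}\right)$, inverting it yields $m_+=m_-\left(\begin{smallmatrix}1&h^L\\ 0&1\end{smallmatrix}\right)$. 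This is exactly \eqref{eq:jumpcond2}.

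Finally I would check the remaining conditions. Conditions I and IV are immediate: the discs lie strictly inside the open half-planes with $\kappa_N-\delta>\kappa$, so they miss $\R$ (the jump on $\R$ is unchanged) and lie in a bounded region (the normalization \eqref{normcond} at infinity is unaffected). For the symmetry III, given $k$ in the upper disc, $-k$ lies in the lower disc, and I would verify $\tilde m(k)=\tilde m(-k)\sigma_1$ by substituting $m(k)=m(-k)\sigma_1$ and using that the constant $\I\ga_j\E^{2t\Phi(\I\kappa_j)}$ occurs in both factors while the denominators obey $-(k-\I\kappa_j)=-k+\I\kappa_j$; concretely this is the built-in relation $h^{U}(k,j)=\dfrac{\I\ga_j\E^{2t\Phi(\I\kappa_j)}}{-k+\I\kappa_j}$, which makes the upper and lower redefinitions $\sigma_1$-conjugate. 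The residue cancellation is trivial, so the main obstacle—really the only thing requiring care—is this sign and orientation bookkeeping, namely matching the triangular factors against the contour orientations so that both \eqref{eq:jumpcond2} and the symmetry \eqref{symto} come out with the correct signs.
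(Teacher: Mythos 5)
Your proposal is correct and is precisely the ``straightforward calculation'' that the paper's one-line proof invokes (with reference to \cite{GT}): cancellation of the principal part of $m_1$ by the residue condition, reading the jump off the triangular redefinition factor with the stated circle orientations, and verifying that the upper and lower redefinitions are $\sigma_1$-conjugate via $h^U(k,j)=-h^L(-k,j)$. Your choice to deduce the lower disc from the symmetry \eqref{symto} rather than from the printed lower line of \eqref{pole15} is in fact the consistent route, since the exponent in that line carries a sign misprint ($\E^{-2t\Phi(\I\kappa_j)}$ instead of $\E^{2t\Phi(\I\kappa_j)}$, as the symmetry forces), so your argument matches both the paper's approach and its intended formulas.
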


This "holomorphic" RH problem is equivalent  to the initial one, given by conditions I--IV. Thus, it also has a unique solution. We use it everywhere except of small regions of
$(x,t)$ half-plane in vicinities of the rays  $x=4\kappa_j^2 t$, which correspond to the solitons. In what follows we will denote this RH problem as RH-$k$ problem, associated with the right scattering data. This problem is convenient for investigations in the region $x> - 6 c^2 t$.  In the remaining  region it
turns out more convenient to use an RH-$k_1$ problem, associated with the left scattering data. In this region we study the nonresonant case only. 

Let $\mathfrak D_1=\C^+\setminus (0,\I c]$ be the domain for $k_1$, which is in one-to-one correspondence with the domain $\mathcal D$ for $\lambda$ as well as with the upper half-plane for $k$. As already pointed out before we will simply consider the scattering data and Jost solutions as functions of $k_1$.

 In the $\C$ plane of the $k_1$ variable we consider the cross contour consisting of the real axis $\R$, with the  orientation from minus to plus infinity, and of the vertical segment $[\I c, -\I c]$, oriented top-down. The images of the discrete spectrum of $H(t)$ are now located at the points $\pm\I\kappa_{1,j}$, $\kappa_{1,j}>c$ (see Theorem~\ref{thm:scat}, formulas \eqref{rtime7}, \eqref{rtime3}, \eqref{rtime5}). By definition, $\chi(k)$, considered as the function of $k_1$, is defined on the contour $[\I c, 0]$ as
$
\chi(k_1)=-\ol{T_1(k_1,0)}T(k_1,0),\ \mbox{as}\  k_1\in [0,\I c]_r, \ \mbox{i.e.}\ k\in[0,c].
$ 
We define it on $[0,-\I c]$ as
$
\chi(-k_1):=-\chi(k_1).
$ In the nonresonant case this is a continuous function for $k_1\in [-\I c, \I c]$ with $\chi(-\I c)=\chi(\I c)=\chi(0)=0$.

In $\mathfrak D_1$ we introduce the vector-valued function
\beq\label{defmn}
m^{(1)}(k_1,x,t)=
\left( T_1(k_1,t) \phi(k_1,x,t) \E^{-\I k_1 x} ,\  \phi_1(k_1,x,t) \E^{\I k_1 x} \right)
\eeq
and extend it to the lower half-plane by the symmetry condition
\beq\label{symto2}
m^{(1)}(-k_1)=m^{(1)}(k_1)\sigma_1.
\eeq
In the nonresonant case this vector function has continuous limits on the boundary of the domain $\mathcal D_1$ and 
 has the following asymptotical behavior as $k_1\to \infty$:
\beq\label{asm1}
m^{(1)}(k_1,x,t)= \rI +\frac{1}{2\I k_1}\left(\int^x_{-\infty}(q(y,t)- c^2)dy\right) \begin{pmatrix} 1 & -1 \end{pmatrix} + O\left(\frac{1}{k_1^2}\right).
\eeq

\begin{theorem}\label{thm:vecrhpn}
Let $\{ R_1(k_1),\; k_1\in \R; \  \chi(k_1),\ k_1\in [0,\I c];\ (\kappa_{1,j}, \ga_{1,j}), \: 1\le j \le N \}$ be
the left scattering data of the operator $H(0)$ which correspond to the nonresonant case. Let $\T_j^U$ (resp., $\T_j^L$) be circles with centres in $\I \kappa_{1,j}$ (resp., $-\I \kappa_{1,j}$) and with radii $0<\varepsilon<\frac{1}{4}\min_{j=1}^N|\kappa_{1,j} - \kappa_{1,j-1}|,$ $\kappa_{1,0}=0$. Then $m^{(1)}(k_1)=m^{(1)}(k_1,x,t)$, defined in \eqref{defmn}, \eqref{symto2},
is the unique solution of the following vector Riemann--Hilbert problem:
Find a vector function $m^{(1)}(k_1)$ which is holomorphic away from the contour $\bigcup_{j=1}^N (\T_j^U\cup\T_j^L)\cup\R\cup [-\I c, \I c]$,
has continuous limiting values from both sides of the contour and satisfies:
\begin{enumerate}
\item The jump condition $m_+^{(1)}(k_1)=m_-^{(1)}(k_1) v^{(1)}(k_1)$
\beq\label{vleft}
v^{(1)}(k_1)=\left\{\begin{array}{cc}\begin{pmatrix}
1-|R_1(k_1)|^2 & - \ol{R_1(k_1)} \E^{- 2t\Phi_1(k_1)} \\
R_1(k_1) \E^{2 t\Phi_1(k_1)} & 1
\end{pmatrix},& k_1\in\R,\\
 \ &\ \\
\begin{pmatrix}
1 & 0 \\
\chi(k_1) \E^{2 t\Phi_1(k_1)} & 1
\end{pmatrix},& k_1\in [ \I c, 0],\\
\ &\ \\
\begin{pmatrix}
1 &  \chi(k_1) \E^{-2 t\Phi_1(k_1)} \\
0 & 1
\end{pmatrix},& k_1\in [ 0, -\I c],\\
\ &\ \\
 \begin{pmatrix} 1 & 0 \\
-\frac{\I \gamma_{1,j} \E^{t\Phi_1(\I\kappa_{1,j})}}{k_1-\I \kappa_{1,j}} & 1 \end{pmatrix},& k_1\in \T_j^U,\\
\ &\ \\
 \begin{pmatrix} 1 & -\frac{\I \gamma_{1,j} \E^{-t\Phi_1(-\I \kappa_{1,j})}}{k_1+ \I \kappa_{1,j}} \\
0 & 1 \end{pmatrix},& k_1\in\T_j^L;
\end{array}\right.
\eeq
\item
the symmetry condition
\eqref{symto2};
\item
the normalization condition
$
\lim_{\kappa\to\infty} m^{(1)}(\I\kappa) = (1\quad 1).
$
\end{enumerate}
Here the phase $\Phi_1(k)=\Phi_1(k_1,x,t)$ is given by
\[
\Phi_1(k_1)= -4 \I k_1^3- 6\I c^2 k_1 -12\I\xi k_1,\quad \xi=\frac{x}{12 t},
\]
and the circles are oriented in the same way as in Lemma \ref{lem:holrhp}.
\end{theorem}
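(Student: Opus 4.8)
The plan is to follow the proof of Theorem~\ref{thm:vecrhp} almost verbatim, replacing the right-scattering algebra on $\R$ by its left-scattering counterpart and treating the vertical segment $[-\I c,\I c]$ as the genuinely new piece of contour. First I would verify that the vector $m^{(1)}$ from \eqref{defmn}, \eqref{symto2} is a solution. Its components inherit the analyticity of the Jost solutions and of $T_1$ from the last bullets of Theorem~\ref{thm:scat}: away from the discrete spectrum $m^{(1)}$ is holomorphic in $\C^+\setminus(0,\I c]$ with continuous boundary values, while the simple zeros of the Wronskian \eqref{wrons} give $T_1$ simple poles at $\pm\I\kappa_{1,j}$; redefining $m^{(1)}$ inside the circles $\T_j^{U},\T_j^{L}$ exactly as in \eqref{eq:redefm} and Lemma~\ref{lem:holrhp} converts these poles into the stated triangular jumps. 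The symmetry~\eqref{symto2} holds by construction, and the normalization $\lim_{\kappa\to\infty}m^{(1)}(\I\kappa)=\rI$ is read off from the expansion \eqref{asm1}.

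For the jump on $\R$ (the multiplicity-two part $k_1\in\R$) I would repeat the $2\times2$ computation from the proof of Theorem~\ref{thm:vecrhp}: with $w=\E^{\I k_1 x}$ one has $m^{(1)}_+=(T_1\phi\,w^{-1},\ \phi_1 w)$ and, by \eqref{symto2} together with $f(\la^U)=\ol{f(\la^L)}$, $m^{(1)}_-=(\ol{\phi_1}\,w^{-1},\ \ol{T_1\phi}\,w)$. Comparing with the left scattering relation \eqref{lsr} and its conjugate, and inserting \eqref{RTR1} and the time evolution \eqref{rtime}, yields precisely the first matrix in \eqref{vleft}. On the segment $[0,\I c]$ (the multiplicity-one part $\la\in[0,c^2]$, $k\in[-c,c]$) the inputs are instead \eqref{RTR} and the reality $\phi_1\in\R$: the latter shows that the second entry $\phi_1\E^{\I k_1 x}$ is single-valued across the cut, whence the jump is triangular, while \eqref{RTR} and the definition $\chi=-\ol{T_1}T$ identify the off-diagonal entry as $\chi\,\E^{2t\Phi_1}$. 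The lower segment $[0,-\I c]$ then follows from $\chi(-k_1)=-\chi(k_1)$ combined with \eqref{symto2}.

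Uniqueness I would again obtain from a vanishing lemma. For the difference $\hat m=m^{(1)}-\tilde m^{(1)}$ of two solutions (reverting to the meromorphic normalization with poles at $\pm\I\kappa_{1,j}$) set
\[
F(k_1):=\hat m_1(k_1)\ol{\hat m_1(\ol{k_1})}+\hat m_2(k_1)\ol{\hat m_2(\ol{k_1})}.
\]
Using \eqref{symto2} one checks $F(-k_1)=F(k_1)$ and $F(\ol{k_1})=\ol{F(k_1)}$, that $F$ is meromorphic in $\C^+$ with simple poles only at $\I\kappa_{1,j}$, and that $F=O(k_1^{-2})$. Integrating $F$ over a large half-circle indented around the slit $(0,\I c]$ and letting the radius tend to infinity, the arc drops out; the real axis contributes $\int_\R\big((1-|R_1|^2)|\hat m_{1,-}|^2+|\hat m_{2,-}|^2\big)\,dk_1\ge0$ (here $1-|R_1|^2=T_1\ol T\ge0$ by \eqref{RTR1}), and the poles contribute the non-negative residue terms $4\pi\sum_j\gamma_{1,j}|\hat m_2(\I\kappa_{1,j})|^2\E^{t\Phi_1(\I\kappa_{1,j})}$, exactly as in Theorem~\ref{thm:vecrhp}.

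The genuinely new and, I expect, most delicate contribution is the integral around the slit $(0,\I c]$. The decisive facts are that $\Phi_1(\I s)=-4s^3+6c^2s+12\xi s$ is real, so that $\E^{2t\Phi_1(\I s)}>0$, and that \eqref{RTR} forces $\re(T_1\ol T)=0$, i.e.\ $\chi=-\ol{T_1}T$ is purely imaginary on $(0,\I c]$. Carrying the two sides of the slit through $F$ with \eqref{symto2} and the triangular jumps, the jump of $F$ collapses to $F_r-F_l=-2\chi\,\E^{2t\Phi_1}|\hat m_2|^2$, which is purely imaginary; since $dk_1=\I\,ds$ on the vertical segment, this produces the \emph{real} contribution $2\int_0^c \I\,\chi(\I s)\,\E^{2t\Phi_1(\I s)}|\hat m_2(\I s)|^2\,ds$ to the identity. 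This is in sharp contrast to the $k$-problem, where the analogous cross term on $[-c,c]$ stayed imaginary and disappeared upon taking real parts, whereas here the vertical orientation turns it into a term that must be controlled. The argument therefore closes only after establishing the sign $\I\,\chi(\I s)\ge0$ on $(0,c)$, which I would deduce from \eqref{RTR} and the nonresonance hypothesis. Granting it, every term in the identity is non-negative and must vanish, forcing $\hat m_{2,-}=0$ on $\R$, at the $\I\kappa_{1,j}$, and on the slit, and $\hat m_{1,-}=0$ wherever $1-|R_1|^2>0$; hence $F\equiv0$ and the solution is unique.
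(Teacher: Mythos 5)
Your architecture is exactly the paper's intended one: the paper disposes of this theorem in a single line (``analogous to the proof of Theorem~\ref{thm:vecrhp}''), and your write-up supplies the right analogy --- the jump computation on $\R$ from \eqref{lsr}, \eqref{RTR1}, the reality of $\phi_1$ making the second component single-valued across the cut, the identification of the off-diagonal entry via $\chi=-\ol{T_1}T$ and the odd extension, and a vanishing lemma for $F(k_1)=\hat m_1(k_1)\ol{\hat m_1(\ol{k_1})}+\hat m_2(k_1)\ol{\hat m_2(\ol{k_1})}$ over a half-circle indented around $(0,\I c]$. You also correctly isolate the genuinely new ingredient: the slit produces a \emph{real} contribution of definite sign, in contrast to the $k$-problem where the cross term on $[-c,c]$ vanished under taking real parts.

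Precisely at that decisive point, however, there is a gap. The sign fact you need cannot be deduced from \eqref{RTR} plus nonresonance, as you propose: \eqref{RTR} gives $T/\ol{T}=-T_1/\ol{T_1}$, i.e.\ $\arg T-\arg T_1=\pm\pi/2 \bmod \pi$, which only shows that $\chi$ is purely imaginary on the slit and leaves the sign undetermined. To fix it you must import the Wronskian representations $T=2\I k/W$, $T_1=2\I k_1/W$, with $W$ from \eqref{wrons} (standard facts in \cite{BF}, \cite{EGLT}); these give, at $k_1=\I s$ with $s\in(0,c)$ and $k=\sqrt{c^2-s^2}$,
\[
\chi(\I s)=-\ol{T_1(\I s)}\,T(\I s)=\frac{4\I\, s\sqrt{c^2-s^2}}{|W(\I s)|^2},
\]
so in fact $\I\chi(\I s)\le 0$, the opposite of the inequality you state. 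Your write-up survives only because of a compensating second sign slip: with the paper's top-down orientation of $[\I c,-\I c]$ the positive side of the slit is $\re k_1>0$, and a careful pass through the symmetry \eqref{symto2} gives $F_+-F_-=+2\chi\,\E^{2t\Phi_1}|\hat m_2|^2$ (not $-2\chi\cdots$); the indented contour then contributes $-\I\int_0^c(F_+-F_-)\,ds=2\int_0^c\bigl(-\I\chi(\I s)\bigr)\E^{2t\Phi_1(\I s)}|\hat m_2(\I s)|^2\,ds\ge 0$, which together with $1-|R_1|^2\ge0$ on $\R$ and the residue terms closes the argument. So the structure is right, but the positivity of the slit term --- the one step you yourself flag as the crux --- is both mis-signed as stated and not obtainable from the identities you cite; it requires the Wronskian formulas (or an equivalent positivity input) to be made rigorous.
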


The proof of this theorem is analogous to the proof of Theorem \ref{thm:vecrhp}. 
\begin{remark}
In our above formulations of the RH problems we could replace the continuous limits by non-tangential $L^2$ limits (cf.\ \cite[Sect.~7.1]{deiftbook}).
Locally this is clear and globally this follows from the normalization conditions (which is supposed to hold around the contour as well).
All our RH problems will satisfy the stronger condition from above (except for possibly a finite number of points in the model problems later on) and
hence we have chosen to use this simpler formulation.
\end{remark}
Our first aim is to reduce these RH problems to model problems which can be solved explicitly.
To this end we record the following well-known result (see e.g.\ \cite{GT}) for easy reference.

\begin{lemma}[Conjugation] \label{lem:conjug}
Let $v(k)$ be a continuous matrix on the contour $\hat \Sigma$, where $\hat\Sigma$ is one of the contours which appeared in Theorem~\ref{thm:vecrhp} or \ref{thm:vecrhpn}.
Let $m(k)$, $k\in\C$, be a holomorphic solution of the RH problem $m_+(k)=m_-(k) v(k)$, $k \in\hat\Sigma$, which has continuous limiting values from both sides of the contour
and which satisfies the symmetry and normalization conditions. 
Let $\tilde \Sigma\subset\hat\Sigma$ be a contour with the same orientation. Suppose that  $\tilde\Sigma$
contains with each point $k$ also the point $-k$. Let $D$ be a matrix of the form
\beq\label{matrixD} D(k) = \begin{pmatrix} d(k)^{-1} & 0 \\ 0 & d(k) \end{pmatrix},
\eeq
where $d: \hat\C\setminus\tilde \Sigma\to\C$ is a sectionally analytic function with $d(k)\neq 0$ except for a
finite number of points on $\tilde\Sigma$. Set
\beq\label{mD}\ti{m}(k) = m(k) D(k),\eeq
then the jump matrix of the problem $\ti{m}_+=\ti{m}_- \ti{v}$ is
\[
\ti{v} =
\begin{cases}
  \begin{pmatrix} v_{11} & v_{12} d^{2} \\ v_{21} d^{-2}  & v_{22} \end{pmatrix}, &
  \quad k \in \hat \Sigma \backslash \tilde \Sigma, \\[6mm]
  \begin{pmatrix} v_{11} d_+^{-1} d_-   & v_{12} d_+ d_- \\
  v_{21} d_+^{-1} d_-^{-1}  & v_{22} d_-^{-1} d_+   \end{pmatrix}, &
  \quad k \in\tilde \Sigma.
\end{cases}
\]
If $d$ satisfies $d(-k) = d(k)^{-1}$ for $k \in\C\setminus\tilde \Sigma$,
then the transformation \eqref{mD} respects the symmetry condition \eqref{symto}.  If $d(k)\to 1$ as $k\to\infty$ then \eqref{mD} respects the normalization condition \eqref{normcond}.
\end{lemma}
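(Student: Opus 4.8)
The plan is to obtain all three assertions of the lemma directly from the corresponding properties of $m$, since $\tilde m = mD$ differs from $m$ only by the explicit diagonal factor $D$. The whole argument is a chain of elementary matrix manipulations, and the only genuine input is the relation between the boundary values of $\tilde m$ and those of $m$. First I would record that, because $d$ is analytic on $\hat\C\setminus\tilde\Sigma$, the matrix $D$ has no jump across $\hat\Sigma\setminus\tilde\Sigma$, so that $D_+=D_-=D$ there, whereas on $\tilde\Sigma$ it acquires the two boundary values $D_\pm=\mathrm{diag}(d_\pm^{-1},d_\pm)$. Writing $\tilde m_\pm=m_\pm D_\pm$ and inserting the jump $m_+=m_-v$, the new jump matrix is forced by $\tilde m_+=\tilde m_-\tilde v$ to be $\tilde v=D_-^{-1}vD_+$ on all of $\hat\Sigma$.

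Next I would carry out the two conjugations explicitly. Using $D^{-1}=\mathrm{diag}(d,d^{-1})$, a direct multiplication gives, on $\hat\Sigma\setminus\tilde\Sigma$ where $D_+=D_-=D$, the conjugate $D^{-1}vD=\begin{pmatrix} v_{11} & d^2 v_{12}\\ d^{-2}v_{21} & v_{22}\end{pmatrix}$, which is precisely the first case of the stated $\tilde v$. On $\tilde\Sigma$ the same multiplication yields $D_-^{-1}vD_+=\begin{pmatrix} d_-d_+^{-1}v_{11} & d_-d_+v_{12}\\ d_-^{-1}d_+^{-1}v_{21} & d_-^{-1}d_+v_{22}\end{pmatrix}$, which coincides with the second case after noting that scalar factors commute, so that $d_-d_+^{-1}=d_+^{-1}d_-$ and so on. This is the core computation and it is entirely routine.

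For the symmetry I would check the algebraic identity $\sigma_1 D(k)\sigma_1=D(-k)$ under the hypothesis $d(-k)=d(k)^{-1}$: indeed $\sigma_1\,\mathrm{diag}(d(k)^{-1},d(k))\,\sigma_1=\mathrm{diag}(d(k),d(k)^{-1})$, and this equals $D(-k)$ exactly when $d(k)=d(-k)^{-1}$. Combining this with the symmetry $m(k)=m(-k)\sigma_1$ of $m$ gives $\tilde m(k)=m(-k)\sigma_1 D(k)=m(-k)D(-k)\sigma_1=\tilde m(-k)\sigma_1$, so the symmetry condition \eqref{symto} is inherited by $\tilde m$. Finally, since $d(k)\to 1$ forces $D(k)\to\id$ as $k\to\infty$, the product $\tilde m=mD$ retains the normalization $m(k)=\rI+O(k^{-1})$. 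I do not expect any real obstacle here; the one point deserving a word of care is that, because $d$ may vanish at finitely many points of $\tilde\Sigma$, the transformed vector $\tilde m$ is a holomorphic solution of the new RH problem only away from those exceptional points, which is exactly the regularity hypothesis imposed on $d$ in the statement.
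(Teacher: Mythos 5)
Your proposal is correct, and it is exactly the standard argument: the paper itself records this lemma without proof (citing \cite{GT}), and the proof there is the same direct computation $\ti v = D_-^{-1} v D_+$ (with $D_+=D_-=D$ off $\tilde\Sigma$), together with the identity $\sigma_1 D(k)\sigma_1 = D(-k)$ for the symmetry and $D(k)\to\id$ for the normalization. Your closing remark about the finitely many zeros of $d$ on $\tilde\Sigma$ is also the right reading of the regularity hypothesis, so nothing is missing.
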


Note that in general, for an oriented contour $\hat \Sigma$, the value $f_+(k_0)$ (resp.\ $f_-(k_0)$) will denote the nontangential limit
 of the vector function $f(k)$ as $k\to k_0 \in \hat \Sigma$ from the positive (resp.\ negative) side of $\hat \Sigma$, where the
 positive side is the one which lies to
the left as one traverses the contour in the
direction of its orientation.

\section{Soliton region, $x>0$.}
\label{sec:sr}

Here we use the holomorphic RH problem with jump given by \eqref{jump15}, \eqref{eq:jumpcond2}, and \eqref{remm}.
We consider $x$ and $t$ as parameters, which change in a way that the value $\xi=\frac{x}{12t}$ evolves slowly when $x$ and $t$ are sufficiently large.
In the region under consideration we have $\xi>0$. To reduce our RH problem to a model problem which can be solved explicitly, we will use the well-known conjugation and
deformation techniques (\cite{GT}, \cite{EGKT}).

The signature table of the phase function $\Phi(k)=4\I k^3 + 12\I \xi k$ in this region is shown in Figure~\ref{fig:Phi-sol}.

\begin{figure}\centering
\begin{picture}(7,5.2)
\put(0.6,2.5){\line(1,0){5.8}}
\put(2,2.5){\vector(1,0){0.4}}
\put(5,2.5){\vector(1,0){0.4}}

\put(6,2.2){$\R$}


\put(0.6,3.5){$\re(\Phi)<0$}
\put(0.6,1.3){$\re(\Phi)>0$}
\put(2.9,4.5){$\re(\Phi)>0$}
\put(2.9,0.2){$\re(\Phi)<0$}

\put(2.7,3.9){\small $T^{j,U}$}
\put(2.7,2.9){\small$T^{i,U}$}
\put(2.7,1.9){\small$T^{i,L}$}
\put(2.7,0.9){\small$T^{j,L}$}

\put(3.6,2.4){\line(0,1){0.2}}
\put(3.6,2.0){\circle*{0.1}}
\put(3.6,2.0){\circle{0.4}}
\put(3.6,3.0){\circle*{0.1}}
\put(3.6,3.0){\circle{0.4}}
\put(3.6,1.0){\circle*{0.1}}
\put(3.6,1.0){\circle{0.4}}
\put(3.6,4.0){\circle*{0.1}}
\put(3.6,4.0){\circle{0.4}}

\put(3.6,1.5){\circle*{0.09}}
\put(3.6,3.5){\circle*{0.09}}

\put(3.8,3.3){\small$\I \kappa_0$}
\put(3.8,1.5){\small$-\I \kappa_0$}
\curvedashes{0.05,0.05}

\curve(1.868,5., 2.1,4.625, 2.6,4., 3.1,3.625, 3.6,3.5, 4.1,3.625, 4.6,4., 5.1,4.625, 5.332,5.)

\curve(1.868,0., 2.1,0.375, 2.6,1., 3.1,1.375, 3.6,1.5, 4.1,1.375, 4.6,1., 5.1,0.375, 5.332,0.)
\end{picture}
\caption{Signature table for $\re\Phi(k)$ in the soliton region.}\label{fig:Phi-sol}
\end{figure}
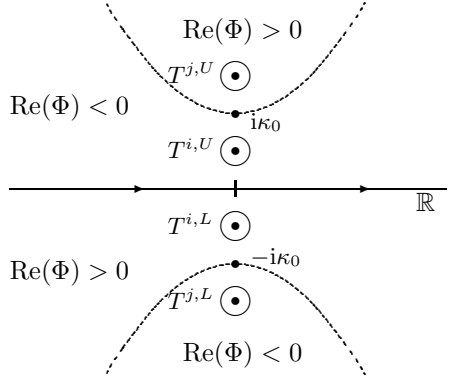

Namely, $\re \Phi(k)=0$ if $\im k=0$ or $(\im k)^2 - 3(\re k)^2=3\xi$, where the second curve consists of two hyperbolas which cross
the imaginary axis at the points $\pm\I\sqrt{3\xi}$. Set
\[
 \kappa_0 = \sqrt{\frac{x}{4 t}}>0.
\]
 Then we have $\re(\Phi(\I \kappa_j))>0$
for all $\kappa_j > \kappa_0$ and  $\re(\Phi(\I \kappa_j))<0$ for all $\kappa_j < \kappa_0$. Hence, in the first case
the off-diagonal entries of our jump matrices \eqref{eq:jumpcond2} are exponentially growing, and we need to turn them into exponentially
decaying ones. We set
\[
\Lambda(k,\xi):=\Lambda(k)=\prod_{\kappa_j > \kappa_0} \frac{k+\I\kappa_j}{k-\I\kappa_j},
\]
and  introduce the matrix
\[
D(k) = \left\{\begin{array}{lll}
\begin{pmatrix} 1 & h^U(k,j)^{-1}\\
-h^U(k,j)& 0 \end{pmatrix}
D_0(k), &  |k-\I\kappa_j|<\delta, & j=1,\dots,N,\\
\begin{pmatrix} 0 & h^L(k,j) \\
-h^L(k,j)^{-1} & 1 \end{pmatrix}
D_0(k), & |k+\I\kappa_j|<\delta, & j=1,\dots,N,\\
\ & \ & \ \\
D_0(k), & \text{else},&
\end{array}\right.
\]
where
\[
D_0(k) = \begin{pmatrix} \Lambda(k)^{-1} & 0 \\ 0 & \Lambda(k) \end{pmatrix}.
\]
Observe that by the property  $\Lambda(-k)=\Lambda^{-1}(k)$ we have
\beq\label{defP}
D(-k)= \sigma_1 D(k) \sigma_1.
\eeq
Now we set
\[
\ti{m}(k)=m(k) D(k).
\]
By \eqref{defP} this conjugation preserves properties III and IV. Moreover
(for details see Lemma~4.2 of \cite{GT}), the jump
corresponding to $ \kappa_0<\kappa_j$ is given by
\beq\label{jumpcondti}
\aligned
\ti{v}(k) &= \begin{pmatrix}1& \frac{\Lambda^2(k)}{h^U(k,j)}
\\ 0 &1\end{pmatrix},
\qquad k\in \T^{j,U}, \\
\ti{v}(k) &= \begin{pmatrix}1& 0 \\ -\frac{1}{
h^L(k,j)\Lambda^2(k)}&1\end{pmatrix},
\qquad k\in\T^{j,L},
\endaligned
\eeq
and the jumps corresponding to $\kappa_0>\kappa_j$ (if any) by
\[
\aligned
\ti{v}(k) &= \begin{pmatrix} 1 & 0 \\ h^U(k,j) \Lambda^{-2}(k)
 & 1 \end{pmatrix},
\qquad  k\in \T^{j,U}, \\
\ti{v}(k) &= \begin{pmatrix} 1 & h^L(k,j) \Lambda^2(k) \\
0 & 1 \end{pmatrix},
\qquad  k\in \T^{j,L}.
\endaligned
\]
In particular, all jumps corresponding to poles, except for possibly one if
$\kappa_j=\kappa_0$, are exponentially close to the identity for $t\to\infty$. In the latter case we will keep the
pole condition for $\kappa_j=\kappa_0$ which now reads
\[
\aligned
\res_{\I\kappa_j} \ti{m}(k) &= \lim_{k\to\I\kappa_j} \ti{m}(k)
\begin{pmatrix} 0 & 0\\ \I\gamma_j \E^{2t\Phi(\I\kappa_j)} \Lambda(\I\kappa_j)^{-2}  & 0 \end{pmatrix},\\
\res_{-\I\kappa_j} \ti{m}(k) &= \lim_{k\to -\I\kappa_j} \ti{m}(k)
\begin{pmatrix} 0 & -\I\gamma_j \E^{2t\Phi(\I\kappa_j)} \Lambda(\I\kappa_j)^{-2} \\ 0 & 0 \end{pmatrix}.
\endaligned
\]
Furthermore, the jump along $\R$ is given by
\beq \label{jumpcond3}
\ti v(k)=\begin{pmatrix}
1-|R(k)|^2 & - \Lambda^2(k)\ol{R(k)} \E^{-2t\Phi(k)} \\
\Lambda^{-2}(k)R(k) \E^{2t\Phi(k)} & 1
\end{pmatrix},\qquad k\in\R.\eeq
The new Riemann--Hilbert problem
\[
\ti m_+(k)=\ti m_-(k)\ti v(k)
\]
for the vector $\ti m$ preserves its asymptotics \eqref{normcond}
as well as the symmetry condition \eqref{symto}. 
It remains to deform the remaining jump along $\R$ into one which is exponentially close to the identity as well.
We choose two contours $\mathcal C^U=\R+\I\varepsilon/2$,  $\mathcal C^L=\R-\I\varepsilon/2$,
where $\varepsilon=\min\{\kappa, \kappa_N - \delta\}$ with $\kappa$  is from \eqref{decay} (see Figure~\ref{fig:sol}).
This choice of $\varepsilon$ guarantees that the reflection coefficient can be continued analytically into the domain
$0<\im k<\varepsilon$, and $\mathcal C^U$ does not cross $\T^{N,U}$. Since by definition
$\ol{R(k)}=R(-k)$, then the function $\ol R$ extends analytically into the domain $-\varepsilon<\im k<0$, and thus up to $\mathcal C^L$.
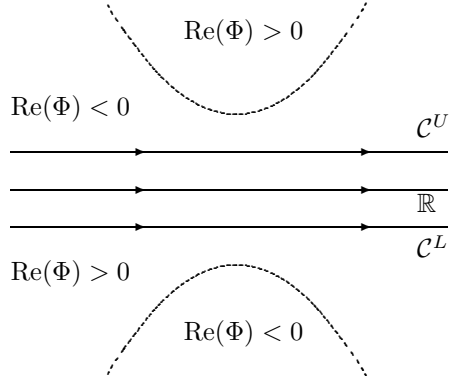
\begin{figure}\centering
\begin{picture}(7,5.2)
\put(0.6,2.5){\line(1,0){5.8}}
\put(2,2.5){\vector(1,0){0.4}}
\put(5,2.5){\vector(1,0){0.4}}

\put(6,2.2){$\R$}

\put(0.6,2){\line(1,0){5.8}}
\put(2,2){\vector(1,0){0.4}}
\put(5,2){\vector(1,0){0.4}}

\put(6,1.6){$\mathcal C^L$}

\put(0.6,3){\line(1,0){5.8}}
\put(2,3){\vector(1,0){0.4}}
\put(5,3){\vector(1,0){0.4}}

\put(6,3.2){$\mathcal C^U$}

\put(0.6,3.5){$\re(\Phi)<0$}
\put(0.6,1.3){$\re(\Phi)>0$}
\put(2.9,4.5){$\re(\Phi)>0$}
\put(2.9,0.5){$\re(\Phi)<0$}

\curvedashes{0.05,0.05}

\curve(1.868,5., 2.1,4.625, 2.6,4., 3.1,3.625, 3.6,3.5, 4.1,3.625, 4.6,4., 5.1,4.625, 5.332,5.)

\curve(1.868,0., 2.1,0.375, 2.6,1., 3.1,1.375, 3.6,1.5, 4.1,1.375, 4.6,1., 5.1,0.375, 5.332,0.)
\end{picture}
\caption{Contour deformation in the soliton region.}\label{fig:sol}
\end{figure}

Now we factorize the jump matrix along $\R$ according to
\[
\tilde v(k)= b_L^{-1}(k) b_U(k)= \begin{pmatrix} 1 & - \Lambda^2(k)R(-k) \E^{-2t\Phi(\I\kappa_{j})}\\
0&1\end{pmatrix}\begin{pmatrix} 1&0\\
\Lambda^{-2}(k)R(k) \E^{2t\Phi(\I\kappa_{j})} & 1
\end{pmatrix}
\]
and set
\beq\label{mbreve}
\breve m(k) =\left\{\begin{array}{ll}\ti m(k) b_U^{-1}(k), & 0<\im k<\varepsilon/2,\\
\ti m(k) b_L^{-1}(k), & -\varepsilon/2<\im k<0,\\
\ti m(k), & \text{else},
\end{array}\right.
\eeq
such that the jump along $\R$ is moved to $\mathcal C^U\cup\mathcal C^L$ and is given by
\[
\breve v(k) =\left\{\begin{array}{ll}
\begin{pmatrix} 1&0\\
\Lambda^{-2}(k)R(k) \E^{2t\Phi(\I\kappa_{j})} & 1
\end{pmatrix}, & k\in\mathcal C^U,\\
\ & \ \\
\begin{pmatrix} 1 & - \Lambda^2(k)R(-k) \E^{-2t\Phi(\I\kappa_{j})} \\
0&1\end{pmatrix}, & k\in\mathcal C^L.
\end{array}\right.
\]
Hence, all jumps $\breve{v}$ are exponentially close to the identity as $t\to\infty$ and one
can use Theorem~A.6 from \cite{KTa} to obtain (repeating literally the proof of Theorem~4.4 in \cite{GT})
the following result:

\begin{theorem}\label{thm:asym}
Assume \eqref{decay1}--\eqref{decay} and abbreviate by $c_j= 4 \kappa_j^2$
the velocity of the $j$'th soliton determined by $\re(\Phi(\I \kappa_j))=0$.
Then the asymptotics in the soliton region, $x/t  \geq \epsilon$ for some small
$\epsilon>0$, are as follows:

Let $\delta > 0$ be sufficiently small such that the intervals
$[c_j-\delta,c_j+\delta]$, $1\le j \le N$, are disjoint and $c_N-\delta>0$.

If $|\frac{x}{t} - c_j|<\delta$ for some $j$, one has
\[
q(x,t) = \frac{-4\kappa_j\gamma_j(x,t)}{(1+(2\kappa_j)^{-1}\gamma_j(x,t))^2} +O(\E^{-\epsilon_4 t}),
\]
where $\min\{\kappa,\kappa_N - \delta\}>\epsilon_4>\varepsilon/2$,
\[
\gamma_j(x,t) = \gamma_j \E^{-2\kappa_j x + 8 \kappa_j^3 t} \prod_{i=j+1}^N \left(\frac{\kappa_i-\kappa_j}{\kappa_i+\kappa_j}\right)^2.
\]

If $|\frac{x}{t} -c_j| \geq \delta$, for all $j$, one has
$
q(x,t) = O(\E^{-\epsilon_4 t})
$.
\end{theorem}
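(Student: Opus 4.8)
The plan is to argue that, after the conjugation $\ti m=m D$ and the deformation \eqref{mbreve} already performed above, the vector $\breve m$ solves a Riemann--Hilbert problem whose jump $\breve v$ is exponentially close to the identity on $\mathcal C^U\cup\mathcal C^L$ together with all the circles $\T^{j,U}\cup\T^{j,L}$, the only datum that can survive being the residue at $\pm\I\kappa_j$ when $\kappa_j$ lies within $O(\delta)$ of the stationary level $\kappa_0=\sqrt{x/(4t)}$. One then replaces $\breve m$ by the explicit solution $\breve m^{\mathrm{sol}}$ of the model problem obtained by discarding every exponentially small jump, and controls the difference by the small-norm theorem, Theorem~A.6 of \cite{KTa}. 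I expect the bulk of this to be an almost verbatim repetition of the proof of Theorem~4.4 in \cite{GT}; the present task is mainly to organise it around the two regimes in the statement.

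First I would fix $\delta>0$ as prescribed and split into two cases. If $|x/t-c_j|\ge\delta$ for all $j$, then no eigenvalue $\I\kappa_j$ sits near $\I\kappa_0$, so every jump of $\breve m$ — the reflection jumps on $\mathcal C^U\cup\mathcal C^L$ and the circle jumps \eqref{jumpcondti} — carries a factor $\E^{2t\re\Phi}$ with $\re\Phi$ bounded away from $0$, as read off from the signature table (Figure~\ref{fig:Phi-sol}) and the contour placement of Figure~\ref{fig:sol}. Hence $\|\breve v-\id\|=O(\E^{-\epsilon_4 t})$ in $L^1\cap L^2\cap L^\infty$ along the whole contour, with $\epsilon_4$ as in the statement. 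The corresponding model problem has trivial jump and, by the symmetry \eqref{symto} and normalisation \eqref{normcond}, the unique solution $\breve m^{\mathrm{sol}}(k)=\rI$. Theorem~A.6 of \cite{KTa} then gives $\breve m(k)=\rI+O(\E^{-\epsilon_4 t})$ uniformly, and undoing the transformations and reading off the $1/k$ coefficient via \eqref{asm} yields $q(x,t)=O(\E^{-\epsilon_4 t})$.

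In the resonant-velocity case $|x/t-c_j|<\delta$ for a (necessarily unique) $j$, the residue at $\pm\I\kappa_j$ must be kept while all other jumps remain $O(\E^{-\epsilon_4 t})$ by the same signature argument. The model is thus the meromorphic one-soliton problem with the single residue condition inherited from the conjugation, whose off-diagonal constant is
\[
\I\gamma_j\,\E^{2t\Phi(\I\kappa_j)}\Lambda(\I\kappa_j)^{-2}=\I\gamma_j(x,t),
\]
since $2t\Phi(\I\kappa_j)=8\kappa_j^3t-2\kappa_j x$ and $\Lambda(\I\kappa_j)^{-2}=\prod_{\kappa_i>\kappa_0}\big(\tfrac{\kappa_j-\kappa_i}{\kappa_j+\kappa_i}\big)^2$ reproduce exactly the factors in $\gamma_j(x,t)=\gamma_j\E^{-2\kappa_j x+8\kappa_j^3t}\prod_{i=j+1}^N\big(\tfrac{\kappa_i-\kappa_j}{\kappa_i+\kappa_j}\big)^2$. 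This reflectionless problem is solved explicitly by a rational ansatz with poles only at $\pm\I\kappa_j$; the residue relation together with the symmetry gives a $2\times2$ linear system, and the reconstruction formula \eqref{asm} then produces precisely the stated expression $-4\kappa_j\gamma_j(x,t)\big(1+(2\kappa_j)^{-1}\gamma_j(x,t)\big)^{-2}$ (equivalently the $\sech^2$ profile of Theorem~\ref{maintheor}\,C). The error from the discarded jumps is again $O(\E^{-\epsilon_4 t})$ by Theorem~A.6 of \cite{KTa}, and since all conjugating factors tend to $\id$ as $k\to\infty$ the correction to $q$ is of the same order.

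The main obstacle is twofold. First, one must check that the small-norm estimate is genuinely uniform in $\xi=\tfrac{x}{12t}$ across the whole soliton region: this needs $\re\Phi$ to stay bounded away from zero on $\mathcal C^U\cup\mathcal C^L$ (secured by the choice $\varepsilon=\min\{\kappa,\kappa_N-\delta\}$ and the analytic continuation of $R$ from the last items of Theorem~\ref{thm:scat}), and it needs the index set $\{\kappa_i>\kappa_0\}$ to be \emph{locally constant} away from the transition, so that $\Lambda$ contributes only an $x$-independent term to the $1/k$ coefficient and hence drops out of $q$. Second, because the model is meromorphic rather than holomorphic, the small-norm theorem must be applied to the deviation of $\breve m$ from the one-soliton solution, for which the apparent singularities at $\pm\I\kappa_j$ must be shown to cancel; verifying this cancellation and the resulting boundedness near $\pm\I\kappa_j$ is the only delicate point, and it is handled exactly as in \cite{GT}.
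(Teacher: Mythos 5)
Your proposal is correct and follows essentially the same route as the paper: the paper's proof of Theorem~\ref{thm:asym} consists precisely of the conjugation with $\Lambda$, the transformation of the pole conditions into jumps on $\T^{j,U}\cup\T^{j,L}$, the deformation of the $\R$-jump to $\mathcal C^U\cup\mathcal C^L$ using the analytic continuation of $R$ guaranteed by \eqref{decay}, after which all jumps are exponentially close to the identity (keeping the single pole condition with coefficient $\I\gamma_j\E^{2t\Phi(\I\kappa_j)}\Lambda(\I\kappa_j)^{-2}$ when $\kappa_j\approx\kappa_0$), and an appeal to Theorem~A.6 of \cite{KTa} ``repeating literally the proof of Theorem~4.4 in \cite{GT}''. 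Your identification of the renormalized norming constant, your treatment of the two regimes, and your remarks on uniformity in $\xi$ and on the meromorphic one-soliton model all match the paper's argument (modulo the paper's own ambiguity in the ordering convention for the $\kappa_j$, which your product over $\{\kappa_i>\kappa_0\}$ resolves correctly since the factors are squared).
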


\section{Reduction to the model problem  in the region $-6c^2 t<x<0$ }
\label{secg1}

When the parameter $\xi$ passes through the point $0$ and changes its sign from positive to negative,
the hyperbolas $\re \Phi(k)=0$ start to cross the real axis at the points $k=\pm\sqrt{-\xi}$, $\xi<0$. 
Thus in the holomorphic RH-$k$ problem with the jump matrix $\tilde v(k)$, given by \eqref{jumpcond3} with
\beq\label{newlam}
\Lambda(k,\xi):=\Lambda(k)=\prod_{j=1}^N \frac{k+\I\kappa_j}{k-\I\kappa_j},
\eeq
and \eqref{jumpcondti}, $j=1,\dots,N$, all jumps \eqref{jumpcondti} are exponentially close to the identity matrix for large $t$.
Set
\beq\label{defrla}
\mathcal R(k)=R(k)\Lambda^{-2}(k).
\eeq
This is a continuous function with $|\mathcal R(k)|\neq 0$ for $k\in\R$. Since $\ol{\Lambda(k)}=\Lambda^{-1}(k)$ for $k\in \R$ the matrix $\tilde v(k)$ can be written as
\beq\label{jump16}
\tilde v(k) = \begin{pmatrix}
1-\abs{\mathcal R(k)}^2 & - \ol{\mathcal R(k)} \E^{-2 t \Phi(k)} \\
\mathcal R(k)\E^{2 t \Phi(k)}& 1
\end{pmatrix}, \quad k\in\R.
\eeq
Moreover, by \eqref{RTR}, $\ol {\mathcal R(k)}=\mathcal R^{-1}(k)$ for $k\in [-c,c]$.
We keep the notation $\tilde m(k)$ for the unique solution of the holomorphic RH problem with the jumps \eqref{jump16} and \eqref{jumpcondti} where $\Lambda(k)$ is defined by \eqref{newlam} for $j=1,\dots,N$,
satisfying conditions III--IV of Theorem \ref{thm:vecrhp}.
 
The aim of this section is to reduce the RH problem for $\tilde m(k)$ to a problem with ``almost constant'' jumps, which can be solved explicitly.
To this end we perform a few conjugation and deformation steps. The first one is connected with the so-called $g$-function \cite{dvz}, which replaces the phase function such that the jump matrix
can be factorized in a way which reveals the asymptotic structure. In fact, in the current formulation of the RH problem, the part of the contour from $-\sqrt{-\xi}$ to $\sqrt{-\xi}$ would require a lower/upper
triangular factorization of the jump matrix which is impossible since $|\mathcal R(k)|=1$ for $k\in [-c,c]$. Hence the idea is to perform a conjugation as in Lemma~\ref{lem:conjug} with a function $\tilde d(k)$
such that $\tilde d_+(k)\tilde d_-(k)\E^{-2 t \Phi(k)}=1$ on $[-a,a]$ and $\tilde d_+(k,t)\tilde d^{-1}_-(k,t) = o(1)$ with respect to $t\to\infty$ as $k\in (-a,a)$ for some $a>\sqrt{-\xi}$,  but otherwise the function $g(k)=-\frac{1}{t}\log \tilde d(k) +\Phi(k)$ preserves the qualitative behavior of $\Phi$. This
will lead to a jump matrix
\[
\begin{pmatrix} 0& -\mathcal R(k)\\ \mathcal R(k)& 0\end{pmatrix} +o(1),\quad  k\in[-a,a],
\] 
as $t\to\infty$. A further conjugation step will then turn this into a constant (w.r.t.\ $k$) jump which subsequently has to be solved explicitly.

Set $a=a(\xi)=\sqrt{-2\xi}$. This parameter is positive and monotonous with respect to $\xi$ for $\xi<0$ and covers the interval $(0, c)$ when $\xi$ covers the region under consideration.
In particular, we will use $a>0$ in place of $\xi$ in this section. Explicitly we choose
\beq\label{defggg}
g(k):=g(k,\xi)=4\I (k^2 - a^2)\sqrt{k^2 -a^2},\quad a=\sqrt{-2\xi},
\eeq
defined in the domain $\mathcal D(\xi)= \clos (\C\setminus [-a,\ a])$. We suppose that $\sqrt{k^2 -a^2}$ takes positive values for $k>a$.
By definition $g(-k)=-g(k)$ for $k\in\mathcal D(\xi)$, $g$ has a jump along the interval $[-a,a]$, and $g_+(k)=-g_-(k)>0$ on the  contour $[-a,a]$, taken with orientation from $-a$ to $a$.
The signature table for $\re g$ is shown in Figure~\ref{fig:sigg}.
\begin{figure}\centering
\begin{picture}(7,5.2)
\put(0.6,2.5){\line(1,0){5.8}}

\put(1.5,2.3){$-a$}
\put(2.5,2.15){$-\sqrt{-\xi}$}
\put(4.,2.15){$\sqrt{-\xi}$}
\put(5.5,2.3){$a$}

\put(0.5,3.5){$\re(g)<0$}
\put(0.5,1.3){$\re(g)>0$}
\put(5.5,3.5){$\re(g)<0$}
\put(5.5,1.3){$\re(g)>0$}
\put(2.9,4.5){$\re(g)>0$}
\put(2.9,0.5){$\re(g)<0$}

\curve(1.75,5., 2.2,3.75, 2,2.5)
\curve(1.75,0., 2.2,1.25, 2,2.5)

\curve(5.75,5., 5.3,3.75, 5.5,2.5)
\curve(5.75,0., 5.3,1.25, 5.5,2.5)

\curvedashes{0.05,0.05}

\curve(1.8,5., 2.5,2.5, 1.8,0.)
\curve(5.7,5., 5,2.5, 5.7,0.)

\end{picture}
\caption{Signature table for $\re(g)$ together with the level curve $\re(\Phi)=0$ (dashed).}\label{fig:sigg}
\end{figure}
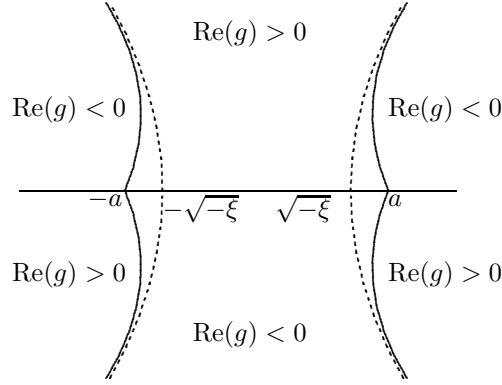
Since
\beq\label{asympg}
\aligned \Phi(k)-g(k)&= 4\I \left(k^3 + 3\xi k - (k^3 + 2\xi k)\sqrt{1 +\frac{2\xi}{k^2}}\right)\\
&=\frac{12\xi^2}{2\I k}(1 +O(k^{-1})), \ \ k\to\infty,\endaligned
\eeq
the function
\[
\tilde d(k):=\E^{t(\Phi(k)-g(k))},\quad k\in \C,
\]
satisfies all conditions of Lemma \ref{lem:conjug}.

STEP 1.  Let $D(k)$ be the matrix \eqref{matrixD} with $d=\tilde d$. Put
$m^{(1)}(k)=\tilde m(k)D(k)$, then $m^{(1)}(k)$ solves the holomorphic RH problem
$m^{(1)}_+(k)=m^{(1)}_-(k)v^{(1)}(k)$ with
\beq\label{jump17}
v^{(1)}(k) = \begin{cases}
\begin{pmatrix}
0 & -\ol {\mathcal R(k)} \\
\mathcal R(k)& \E^{-2tg_+(k)}
\end{pmatrix},\qquad \qquad \ \ k \in [-a,a],\\[4mm]
 \begin{pmatrix}
1-\abs{\mathcal R(k)}^2 & - \ol{\mathcal R(k)} \E^{-2 t g(k)} \\
\mathcal R(k) \E^{2 t g(k)}& 1
\end{pmatrix}, \qquad k\in \R\setminus [-a, a],\\[4mm]
\begin{pmatrix}1&  \tilde h^U(k,j)
\\ 0 &1\end{pmatrix},
\qquad \ \ \ k\in \T^{j,U}, \quad j=1,\dots,N,\\[4mm]
 \begin{pmatrix}1& 0 \\ - \tilde h^L(k,j)&1\end{pmatrix},
\qquad \ \ \ k\in \T^{j,L}, \quad j=1,\dots,N,
\end{cases}
\eeq
where
\[
 \tilde h^U(k,j):=\frac{\Lambda^2(k)}{h^U(k,j)}\E^{2t(\Phi(k) - g(k))},\quad \tilde h^L(k,j):=\frac{1}{\Lambda^{2}(k)h^L(k,j)}\E^{-2t(\Phi(k) - g(k))},
\]
and $h^U(k,j)$, $h^L(k,j)$ are defined by \eqref{remm}.

\begin{lemma}\label{lemest}
Let the radii $\delta$ of the circles $\T^{j,L}$ and $\T^{j,U}$ satisfy the inequalities
\beq\label{valid1}
 (\kappa_N -\delta)^3> 3\delta\left( (\kappa_1+\delta)^2+\frac{c^2}{2}\right)
\eeq
and $\delta<\kappa_N - \kappa$, where $\kappa$ is from \eqref{decay}.  Then, uniformly with respect to $\xi\in [0,-\frac{c^2}{2}]$,
\[
|\tilde h^U(k,j)|+ |\tilde h^L(- k,j)|< C_1(\delta)\E^{-C(\delta) t}, \  k\in \T^{j,U};\ \ C(\delta)>0,\  C_1(\delta)>0.
\]
\end{lemma}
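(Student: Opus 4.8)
The plan is to collapse everything to a single sign condition on the real part of an exponent and then to recognise \eqref{valid1} as precisely the arithmetic that makes that condition hold. First I would unwind the definitions. Substituting \eqref{remm} into the formula for $\tilde h^U(k,j)$ in \eqref{jump17} gives
\[
\tilde h^U(k,j)=\frac{\Lambda^2(k)(k-\I\kappa_j)}{-\I\ga_j}\,\E^{2t\left(\Phi(k)-g(k)-\Phi(\I\kappa_j)\right)},
\]
and, using $\Lambda(-k)=\Lambda^{-1}(k)$, $\Phi(-k)=-\Phi(k)$ and $g(-k)=-g(k)$, one checks that $\tilde h^L(-k,j)$ carries exactly the same exponential factor on $\T^{j,U}$, only with prefactor $\frac{\Lambda^2(k)(k-\I\kappa_j)}{\I\ga_j}$. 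On $\T^{j,U}$ we have $|k-\I\kappa_j|=\delta$, while $\Lambda$ does not depend on $\xi$ and is bounded there (its poles $\I\kappa_i$ are excluded because the discs are disjoint), so both prefactors are bounded by a constant $C_1(\delta)$ independent of $t$ and $\xi$. Hence it suffices to prove
\[
\re\left[\Phi(k)-g(k)-\Phi(\I\kappa_j)\right]\le -C(\delta)<0,\qquad k\in\T^{j,U},
\]
uniformly in $a=\sqrt{-2\xi}\in[0,c]$.

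Since $\Phi(\I\kappa_j)=4\kappa_j^3+6a^2\kappa_j$ is real, I split the exponent as $\re[\Phi(k)-\Phi(\I\kappa_j)]-\re g(k)$ and treat the pieces separately. For the first piece I use $\re z\le|z|$ together with $\Phi'(k)=12\I(k^2+\xi)$, so that $|\Phi'(\zeta)|\le 12\bigl(|\zeta|^2+\tfrac{a^2}{2}\bigr)\le 12\bigl((\kappa_1+\delta)^2+\tfrac{c^2}{2}\bigr)$ on the disc $|\zeta-\I\kappa_j|\le\delta$ (using $\kappa_j\le\kappa_1$ and $a\le c$); integrating along the segment from $\I\kappa_j$ to $k$ yields
\[
\re[\Phi(k)-\Phi(\I\kappa_j)]\le 12\delta\bigl((\kappa_1+\delta)^2+\tfrac{c^2}{2}\bigr).
\]
The factor $3$ in \eqref{valid1} will then appear from matching this $12\delta(\cdots)$ against the lower bound $\re g(k)\ge 4(\kappa_N-\delta)^3$.

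Establishing that lower bound is the heart of the matter and the \emph{main obstacle}. The naive guess $\re g(k)\ge 4(\im k)^3$ is false off the imaginary axis: already at $a=0$ one computes $\re g(k)=4v^3-12u^2v<4v^3$ for $k=u+\I v$, so the height of $k$ alone does not control $\re g$. Instead I would use monotonicity in $a$. From \eqref{defggg}, $\pa_a g=-12\I a\sqrt{k^2-a^2}$, hence $\re(\pa_a g)=12a\,\im\sqrt{k^2-a^2}\ge 0$, because $\sqrt{k^2-a^2}$ stays in the open upper half-plane on the disc (it equals $\I\sqrt{\kappa_j^2+a^2}$ at the centre and cannot become real there, the cut $[-a,a]$ being excluded since $\kappa_j-\delta>\kappa>0$). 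Thus for fixed $k\in\T^{j,U}$ the quantity $\re g(k)$ is nondecreasing in $a$, so it is bounded below by its value at $a=0$, namely $4v^3-12u^2v$. An elementary calculus computation on the circle $v=\kappa_j+\delta\sin\theta$, $u=\delta\cos\theta$ shows that $4v^3-12u^2v$ attains its minimum at the lowest point $\theta=-\pi/2$, where $u=0$ and the value is $4(\kappa_j-\delta)^3$ (the smallness of $\delta$ needed for this is itself implied by \eqref{valid1}). Therefore $\re g(k)\ge 4(\kappa_j-\delta)^3\ge 4(\kappa_N-\delta)^3$ uniformly in $a$.

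Finally I would combine the two estimates: on $\T^{j,U}$,
\[
\re\left[\Phi(k)-g(k)-\Phi(\I\kappa_j)\right]\le 12\delta\bigl((\kappa_1+\delta)^2+\tfrac{c^2}{2}\bigr)-4(\kappa_N-\delta)^3=:-C(\delta),
\]
and the positivity $C(\delta)>0$ is exactly the content of \eqref{valid1}. Multiplying by $2t$ and invoking the boundedness of the prefactors gives $|\tilde h^U(k,j)|+|\tilde h^L(-k,j)|\le C_1(\delta)\E^{-2C(\delta)t}$ on $\T^{j,U}$, which is the assertion after renaming $2C(\delta)$ as $C(\delta)$. The only delicate points are pinning down the branch of $\sqrt{k^2-a^2}$ in the monotonicity step and the slightly fiddly minimization on the circle; the remainder is bookkeeping.
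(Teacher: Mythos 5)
Your proof is correct and takes essentially the same route as the paper: the paper likewise reduces the claim to showing $\re\left(\Phi(k)-g(k)-\Phi(\I\kappa_j)\right)<0$ on $\T^{j,U}$ and combines the two bounds $|\Phi(k)-\Phi(\I\kappa_j)|\le 12\delta\left((\kappa_1+\delta)^2+\tfrac{c^2}{2}\right)$ and $\re g(k)\ge 4(\kappa_N-\delta)^3$, which is exactly what \eqref{valid1} is designed to match. The only difference is that the paper asserts the lower bound on $\re g$ as a ``rough estimate'' without detail, whereas your monotonicity-in-$a$ argument (via $\re \pa_a g=12a\,\im\sqrt{k^2-a^2}\ge 0$) together with the explicit minimization on the circle at $a=0$ gives a correct justification of that step.
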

 
\begin{proof}
It is sufficient to check that for sufficiently small $\delta>0$ we have $\re (\Phi(k) - g(k) -\Phi(\I\kappa_j))<0$ when $|k-\I\kappa_j|=\delta$.
The rough estimates, which are valid for $\xi\in (0,c^2/2]$ show that
\[
|\Phi(k)-\Phi(\I\kappa_j)|\leq 12\left( (\kappa_1+\delta)^2+|\xi|\right)\delta\leq 12\delta\left( (\kappa_1+\delta)^2+\frac{c^2}{2}\right),
\]
and $ \re g(k)\geq 4 (\kappa_N -\delta)^3$.  Thus, it is sufficient  to choose $\delta$ satisfying \eqref{valid1}.
\end{proof}

Now set
\[
\T_\delta=\cup_{j=1}^N \left(\T^{j,U}\cup \T^{j,L}\right)
\]
and denote by $\id$ the identity matrix. We observe that the matrix \eqref{jump17} admits the following  representation on $\T_\delta$ :
\beq\label{small1}
v^{(1)}(k,x,t)=\id + A(k,\xi,t),\  \|A(k,\xi,t)\|\leq C_1(\delta)\E^{-C(\delta) t},\  \  C(\delta), C_1(\delta)>0,
\eeq
where $\|A\|=\max _{i,j=1,2} |A_{ij}|$ denotes the matrix norm and the estimate for $A$ is uniform with respect to $k\in\T_\delta$ and $\xi\in[0,-\frac{c^2}{2}]$. 

To perform the next transformation step,  we first consider the following scalar 

\noindent {\bf Conjugation problem}: {\it Find a holomorphic function $d(k)$ in $\C\setminus [-a, a]$ which solves the jump problem
\beq\label{defddr}
d_+(k)d_-(k)=\mathcal R^{-1}(0)\mathcal R(k),\quad k\in [-a,a],
\eeq
and satisfies symmetry and normalization   conditions 
\beq\label{norsym}
d(-k)=d^{-1}(k), \quad k\in \clos\left(\C\setminus [-a, a]\right);\quad d(k)\to 1,\quad k\to \infty.
\eeq
Here $\mathcal R$  is defined by \eqref{newlam} and \eqref{defrla}.}

\begin{lemma}\label{lem:argr}
The function $\arg (\mathcal R(k)\mathcal R^{-1}(0))$ is an odd smooth function on $\R$.
Moreover, $\mathcal R(0)=-1$ in the nonresonant case and $\mathcal R(0)=1$ in the resonant case.
\end{lemma}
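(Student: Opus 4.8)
The plan is to treat the two assertions separately, beginning with the value at the origin, which is purely algebraic. Since the poles of $\Lambda$ sit at $\pm\I\kappa_j$ and each factor is evaluated at $k=0$, I would first compute
\[
\Lambda(0)=\prod_{j=1}^N\frac{\I\kappa_j}{-\I\kappa_j}=(-1)^N,
\]
so that $\Lambda^{-2}(0)=1$ and hence $\mathcal R(0)=R(0)\Lambda^{-2}(0)=R(0)$ by \eqref{newlam} and \eqref{defrla}. The values $R(0)=1$ in the resonant case and $R(0)=-1$ in the nonresonant case are recorded in Theorem~\ref{thm:scat}, so the second assertion of the lemma follows immediately.

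For the oddness I would first record the Schwarz-type symmetry $\overline{\mathcal R(k)}=\mathcal R(-k)$ for $k\in\R$. This combines $\overline{R(k)}=R(-k)$, which is the reflection-coefficient symmetry in Theorem~\ref{thm:scat} (recalling that $-k$ corresponds to $\la^L$ when $k$ corresponds to $\la^U$), with the elementary identities $\overline{\Lambda(k)}=\Lambda^{-1}(k)=\Lambda(-k)$ for real $k$, which are immediate from \eqref{newlam} since every factor has modulus one there. These give $\overline{\Lambda^{-2}(k)}=\Lambda^{-2}(-k)$, and multiplying yields the claimed symmetry. Along the way one also sees $|\Lambda(k)|=1$ on $\R$, so that $|\mathcal R(k)|=|R(k)|\neq 0$; thus $\mathcal R$ is nonvanishing on $\R$.

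Next I would address smoothness and the branch of the argument. Since $R$ is smooth on $\R$ (from the scattering theory in \cite{EGLT} under the decay assumptions) and $\Lambda$ is smooth and nonvanishing there, the function $\mathcal R$ is smooth and nonvanishing, so on the simply connected line $\R$ one may select a smooth branch of $\arg\bigl(\mathcal R(k)\mathcal R^{-1}(0)\bigr)$ that vanishes at $k=0$; this normalization is legitimate precisely because $\mathcal R(0)=\pm1$ is real, so $\mathcal R(0)\mathcal R^{-1}(0)=1$. Writing $F(k)=\mathcal R(k)\mathcal R^{-1}(0)=|F(k)|\E^{\I\psi(k)}$ with $\psi(0)=0$, the reality of $\mathcal R(0)$ turns the symmetry of the previous step into $\overline{F(k)}=F(-k)$, which forces $\psi(k)+\psi(-k)\in 2\pi\Z$. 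The left-hand side is continuous, takes values in $2\pi\Z$, and vanishes at $k=0$, hence is identically zero; therefore $\psi(-k)=-\psi(k)$, i.e.\ $\psi$ is odd.

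The genuinely delicate point is this last branch selection: $\arg$ is defined only modulo $2\pi$, and a spurious constant shift $2\pi n$ would destroy oddness. The normalization by $\mathcal R^{-1}(0)$ together with continuity of $\psi$ is exactly what removes this ambiguity, which is the structural reason the lemma is phrased in terms of $\mathcal R(k)\mathcal R^{-1}(0)$ rather than $\mathcal R(k)$ alone. A secondary technical input is the smoothness of $R$ on $\R$, which I would simply import from the cited scattering theory; the only location where regularity could be questioned is the band edges $k=\pm c$, but these lie outside the interval $[-a,a]\subset(-c,c)$ relevant to the subsequent conjugation problem, so they cause no difficulty in the applications of the lemma.
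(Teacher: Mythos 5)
Your proof is correct, but for the main (oddness) assertion it takes a genuinely different route from the paper. For the value at the origin you argue exactly as the paper does: $\Lambda(0)=(-1)^N$ from \eqref{newlam}, hence $\Lambda^{2}(0)=1$ and $\mathcal R(0)=R(0)=\mp1$ by Theorem~\ref{thm:scat}. For oddness, however, the paper proceeds by explicit phase bookkeeping: it writes $\arg\Lambda(k)=\pi N+G(k)$ with $G$ odd, invokes the Levinson theorem ($\pi N=\pm\pi Y/2+\arg T(0\pm0)$, with $Y=1$ nonresonant, $Y=0$ resonant) and the identity \eqref{RTR} in the form $\arg R(k)=2\arg T(k)-2\arg k$ on $[-c,c]$, and computes $\lim_{k\to0}\arg\mathcal R(k)=-\pi Y$, which pins the branch and yields oddness. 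You instead bypass Levinson entirely: from $\ol{R(k)}=R(-k)$ and $\ol{\Lambda(k)}=\Lambda^{-1}(k)=\Lambda(-k)$ you get the Schwarz symmetry $\ol{\mathcal R(k)}=\mathcal R(-k)$, and then the normalized phase $\psi$ with $\psi(0)=0$ satisfies $\psi(k)+\psi(-k)\in2\pi\Z$, which by continuity and vanishing at $k=0$ is identically zero. This is cleaner and more elementary, it makes the role of the normalization by $\mathcal R^{-1}(0)$ (killing the $2\pi n$ ambiguity) completely explicit, and it treats all of $\R$ uniformly, whereas the paper's use of \eqref{RTR} is native to $[-c,c]$; what the paper's computation buys in exchange is the precise limiting value $\arg\mathcal R(0)=-\pi Y$ tied to the resonance index, serving as an independent cross-check of $\mathcal R(0)=\pm1$. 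Two shared imports deserve note: both your argument and the paper's rely on the smoothness of $R$ on the relevant range (cited from the scattering theory, with possible loss of regularity only at the band edges $\pm c$, which you correctly observe lie outside $[-a,a]$) and on the nonvanishing of $\mathcal R$ on $\R$, which the paper asserts without proof immediately after \eqref{defrla}; since you use these exactly as the paper does, they are not gaps relative to the paper's own standard of rigor.
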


\begin{proof}
First of all, recall that $\mathcal R(k)\mathcal R^{-1}(0)$ is continuous and nonzero for $k\in\R$. Therefore its argument is a continuous function. We observe that
\[
\arg \Lambda(k)=\arg\Lambda(0) + G(k) =\pi N + G(k),
\]
where $G(-k)=-G(k)$, $G\in\mathcal C(\R)$. Furthermore, the Levinson theorem (cf.~\cite{Aktosun}, formula (4.3)) yields
\[
\pi N = \frac{\pm\pi Y}{2} + \arg T(0\pm 0),
\]
where $Y=1$ in the nonresonant case, and $Y=0$ in the resonant case. By \eqref{RTR}
\[
\lim_{k\to 0}\arg \mathcal R(k)=\lim_{k\to 0}\left( 2\arg T(k) - 2\arg k - 2\arg \Lambda(k)\right)=-\pi Y.
\]
Thus, the function $\arg(\mathcal R(k)\mathcal R^{-1}(0))$ is a smooth odd function.
Since $\Lambda^2(0)=1$ the value of $\mathcal R(0)$ coincides with the value of the reflection coefficient (see Theorem~\ref{thm:scat}),
that is, $\mathcal R(0)=-1$ in the nonresonant case, and $\mathcal R(0)=1$ in the resonant case.
\end{proof}

To simplify notation introduce
\beq\label{PS}
\mathcal S(k):= \mathcal R(k)\mathcal R^{-1}(0),\quad P(k):= \frac{1}{\sqrt{k^2 -a^2 + \I 0}},\quad k\in [-a, a].
\eeq
To find the solution of the conjugation problem, we transform it to an additive jump problem
\[
f_+(k)=f_-(k) + P(k)\log \mathcal S(k);\quad f(k)\to 0,\quad k\to\infty,
\]
for the function
\[
f(k)=(k^2 - a^2)^{-1/2}\log d(k).
\]
The Sokhotski--Plemelj formula and the property $|\mathcal S|=|\mathcal R|=1$ imply
\beq\label{defef}
f(k)=\frac{1}{2\pi \I}\int_{-a}^a \frac{P(s)\log \mathcal S(s)  }{ \,s-k}ds,
\eeq
where the values of $\log\mathcal S(s) = \I\arg(\mathcal S(s))$ are chosen continuous according to Lemma~\ref{lem:argr}.
Since $\log \mathcal S(s)$ is odd and $P(s)$ is even we note $f(-k)=f(k)$.
Moreover, from the oddness it also follows that
\[
f(k)=\frac{-1}{2\pi\I k} \left(\int_{-a}^a P(s)\log \mathcal S(s) ds +O\left(\frac{1}{k}\right)\right)=O\left(\frac{1}{k^2}\right),\quad k\to\infty.
\]
Thus  $\sqrt{k^2 - a^2}f(k)=O(k^{-1})$ and the function
\beq\label{defsol}
d(k):=\E^{\sqrt{k^2 -a^2}\,f(k)}=\exp\left(\frac{\sqrt{k^2 - a^2}}{2\pi\I}
\int_{-a}^a \frac{\log ( \mathcal R(s)\mathcal R^{-1}(0)) }{\sqrt{s^2 -a^2 +\I 0}\ (s-k)}\,ds\right),
\eeq
satisfies \eqref{defddr} and \eqref{norsym}. Since $f(k)$ is even and $\sqrt{k^2 - a^2}$ is odd, it also satisfies the symmetry condition \eqref{norsym}.
Note also that $d(k)$ is a bounded function in a vicinity of the points $\pm a$ as will be shown in Lemma \ref{lem6} below.

STEP 2. Set $m^{(2)}(k)=m^{(1)}(k)D(k)$ and apply Lemma \ref{lem:conjug} with $d$ given by \eqref{defsol}.
Then we obtain the following RH problem: Find a holomorphic vector function $m^{(2)}(k)$ in the domain $\C\setminus (\R\cup \T_\delta)$,
satisfying conditions III, IV of Theorem \ref{thm:vecrhp} and the jump condition $m^{(2)}_+(k)=m^{(2)}_-(k)v^{(2)}(k)$, where
\[
v^{(2)}(k) = \begin{cases}
\begin{pmatrix}
0 & -\mathcal R(0) \\
\mathcal R(0) & \frac{d_+(k)}{d_-(k)}\E^{-2tg_+(k)}
\end{pmatrix},& k \in [-a,a],\\[4mm]
 \begin{pmatrix}
1-\abs{\mathcal R(k)}^2 & - d(k)^2\ol{\mathcal R(k)} \E^{-2 t g(k)} \\
d(k)^{-2}\mathcal R(k) \E^{2 t g(k)}& 1
\end{pmatrix}, & k\in \R\setminus [-a, a],\\[4mm]
\id +D^{-1}(k)A(k,\xi,t)D(k), & k\in\T_\delta,\end{cases}
\]
with $d(k)$ is given by \eqref{defsol} and $A(k,\xi,t)$ given by \eqref{small1}.

STEP 3. 
The next upper-lower factorization step is standard (cf.\ \cite{dz}, \cite{GT}). Set
\[
v^{(2)}(k) = B^L(k)(B^U(k))^{-1}, \quad k \in \R\setminus [-a,a],
\]
with
\[
B^L(k) =
 \begin{pmatrix}
1 & -d(k)^2\mathcal{R}(-k) \E^{-2 t g(k)} \\
0& 1
\end{pmatrix}, \quad
B^U(k) =  \begin{pmatrix}
1 & 0 \\
-d(k)^{-2}\mathcal R(k) \E^{2 t g(k)}& 1
\end{pmatrix}.
\]
Recall that $\ol{\mathcal R(k)}= \mathcal R(-k)$ for $k\in\R$. This allows us to continue the matrices $B^L(k)$ and $B^U(k)$ to a vicinity of the real axis.
Introduce the domains $\Omega^U$ and $\Omega^L$, bounded by contours $\mathcal C^U$ and $\mathcal C^L$ which are contained in the strip $|\im k|<\kappa/2$,
and asymptotically close to its boundary as $k\to \infty$, as depicted in Figure~\ref{fig:cong}.
Redefine $m^{(2)}$ in $\Omega^U$ and $\Omega^L$ according to
\[
m^{(3)}(k)=\left\{\begin{array}{ll}
  m^{(2)}(k) B^U(k), &k\in \Omega^U,\\[1mm]
 m^{(2)}(k)B^L(k), & k\in \Omega^L,\\[1mm]
 m^{(2)}(k), & \mbox{else.}\end{array}\right.
\]
\begin{figure}\centering
\begin{picture}(7.5,5.2)
\put(0,2.5){\line(1,0){7.5}}
\put(3.3,2.5){\vector(1,0){0.4}}

\put(2.7,3.9){\small $T^{j,U}$}
\put(2.7,2.9){\small$T^{i,U}$}
\put(2.7,1.9){\small$T^{i,L}$}
\put(2.7,0.9){\small$T^{j,L}$}

\put(2.,2.3){$-a$}
\put(5.3,2.3){$a$}

\put(0.5,3.5){$C^U$}
\put(0.5,1.3){$C^L$}
\put(0.5,2.8){$\Omega^U$}
\put(0.5,1.9){$\Omega^L$}
\put(6.5,3.4){$C^U$}
\put(6.5,1.3){$C^L$}
\put(6.5,2.8){$\Omega^U$}
\put(6.5,1.9){$\Omega^L$}

\put(3.6,2.0){\circle*{0.1}}
\put(3.6,2.0){\circle{0.4}}
\put(3.6,3.0){\circle*{0.1}}
\put(3.6,3.0){\circle{0.4}}
\put(3.6,1.0){\circle*{0.1}}
\put(3.6,1.0){\circle{0.4}}
\put(3.6,4.0){\circle*{0.1}}
\put(3.6,4.0){\circle{0.4}}

\put(0.6,3.31){\vector(4,-1){0.2}}
\put(0.6,1.68){\vector(4,1){0.2}}
\put(6.87,3.31){\vector(4,1){0.2}}
\put(6.87,1.68){\vector(4,-1){0.2}}

\curve(0,3.5, 1.5,3, 1.8,2.7, 2.,2.5)
\curve(0,1.5, 1.5,2, 1.8,2.3, 2.,2.5)
\curve(7.5,3.5, 6,3, 5.7,2.7, 5.5,2.5)
\curve(7.5,1.5, 6,2, 5.7,2.3, 5.5,2.5)

\end{picture}
\caption{Contour deformation.}\label{fig:cong}
\end{figure}
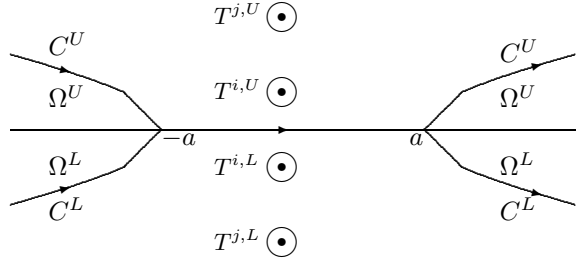%
Then the jumps along the intervals $(-\infty, -a]$ and $[a,\infty)$ disappear and there appear new jumps along $C^U$ and $C^L$ which are asymptotically
close to the identity matrix as $t\to\infty$ away from the points $\pm a$. Moreover, set $A^{(3)}(k)=D^{-1}(k)A(k,\xi,t)D(k)$, $ k\in\T_\delta$, where $D(k)$ is the diagonal matrix associated with \eqref{defsol} and $A$ is from \eqref{small1}.
Then \eqref{small1} and the boundness of $d(k)$ and $d^{-1}(k)$ uniformly on $\T_\delta$ and uniformly with respect to $\xi\in[-c^2/2, 0]$ imply
\beq\label{estA}
\|A^{(3)}(k)\|\leq C\E^{-C t},\  \  C>0,\quad k\in \T_\delta.
\eeq
Moreover, we observe that offdiagonal elements of matrices $B^L(k)$ and $B^U(k)$ are continuous on the contours $\mathcal C^L$ and $\mathcal C^U$ respectively and decay as $k\to \infty$ along the contours exponentially. Indeed, by Lemma \ref{lem6} and \eqref{defggg} we see that $B^U_{21}(k)\to -\mathcal R(0)$ as $k\to \pm a$ and $k\in \mathcal C^U$;
$B^L_{21}(k)\to -\mathcal R(0)$  as $k\to \pm a$ and $k\in \mathcal C^L$; moreover, $v^{(2)}_{22}(k)\to 1$ as $k\to a-0$ and $k\to a+0$, where $k\in\R$. Since contours $\mathcal C^U$ and $\mathcal C^L$ are chosen inside the strip $|\im k|<\kappa$, then by the initial condition $q_0(x)\in C^8(\R)$, the function $\mathcal R(k)=R(k)\Lambda(k)$ behaves as  $\mathcal R(k)=O(k^{-9})$ as $k\to\infty$, $k\in\mathcal C^{U}\cup\mathcal C^{L}$ (cf. \cite{EGLT}). From the other side,  the estimate is valid \beq\label{estge}\exp\{ t g(k)\}=O(\exp\{-2 t |\re k|^{3/2} \kappa\}),\quad k\to\infty,\quad k\in\mathcal C^U,\eeq and by symmetry we get that the offdiagonal elements of $B^U$ and $B^L$ decay exponentially for each $t$ as $k\to\infty$. We proved the following 
\begin{theorem}\label{equivalent}
Let $\xi\in [-c^2/2, 0 ]$. Then the RH problem I--IV (cf.\ Theorem \ref{thm:vecrhp}) is equivalent to the following RH problem:
Find a holomorphic vector function $m^{(3)}(k)$ in $\C\setminus \left(\mathcal C^U\cup\mathcal C^L\cup\T_\delta\cup[-a,a]\right)$,
continuous up to the boundary of the domain, which satisfies:
\begin{enumerate}[(a)]
\item  The jump condition $m_{+}^{(3)}(k)=m_{-}^{(3)}(k) v^{(3)}(k)$, where
\beq \label{jump19}
v^{(3)}(k)= \begin{cases}\begin{pmatrix}
0 & - \mathcal R(0)\\[1mm]
\mathcal R(0) & \frac{d_+(k)}{d_-(k)}\E^{-2tg_+(k)}
\end{pmatrix},& k \in [-a,a],\\[4mm]
\begin{pmatrix}
1 & 0 \\
d(k)^{-2}\mathcal R(k) \E^{2 t g(k)}& 1
\end{pmatrix},& k\in\mathcal C^U,\\[4mm]
\begin{pmatrix}
1 & -d(k)^2\mathcal{R}(-k) \E^{-2 t g(k)} \\
0& 1
\end{pmatrix}& k\in\mathcal C^L,\\[4mm]
\id +A^{(3)}(k), & k\in\T_\delta;\end{cases}\eeq
\item
the symmetry condition \eqref{symto};
\item
the normalization condition \eqref{normcond}.
\end{enumerate}
Here $d(k)$ is defined by \eqref{defsol}, $g(k)$ by \eqref{defggg}, $\mathcal R(k)$ by \eqref{defrla} and \eqref{newlam}, and the matrix $A^{(3)}(k)$ admits the estimate \eqref{estA}.

For $|\im k|>\kappa_1 +1$ the solution $m(k)$ of the initial problem I--IV and the solution of the present problem (a)--(c) are connected via
\beq\label{svyaz}
m^{(3)}(k)=m(k)\begin{pmatrix} h^{-1}(k)&0\\0&h(k)\end{pmatrix},\quad
h(k)=d(k)\Lambda(k)\E^{t(\Phi(k) - g(k))}.
\eeq
\end{theorem}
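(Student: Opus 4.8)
The plan is to treat the theorem as a bookkeeping statement that records the cumulative effect of STEPS~1--3, so I would not reprove each deformation from scratch but only verify that every step is an invertible transformation and then compose them. Since the problem for $\tilde m(k)$ with jumps \eqref{jump16} and \eqref{jumpcondti} is already known to be equivalent to I--IV and uniquely solvable, and since each of the three subsequent transformations is multiplication by an explicit, pointwise invertible matrix, I would conclude that the problem (a)--(c) for $m^{(3)}$ is equivalent to I--IV, with existence and uniqueness transferred automatically.

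First I would confirm the jump matrix \eqref{jump19} step by step. STEP~1 is Lemma~\ref{lem:conjug} with $d=\tilde d=\E^{t(\Phi-g)}$, whose symmetry and normalization were checked via \eqref{asympg}, producing $v^{(1)}$. STEP~2 is again Lemma~\ref{lem:conjug}, now with $d$ the solution \eqref{defsol} of the conjugation problem: using $d_+d_-=\mathcal R^{-1}(0)\mathcal R(k)$ on $[-a,a]$ together with $g_+=-g_-$ collapses the $(2,1)$ and $(1,2)$ entries to the constants $\pm\mathcal R(0)$, while on $\R\setminus[-a,a]$ the off-diagonal entries pick up $d^{\mp2}$, giving $v^{(2)}$. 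For STEP~3 I would record the factorization $v^{(2)}=B^L(B^U)^{-1}$, use $\ol{\mathcal R(k)}=\mathcal R(-k)$ to continue $B^U,B^L$ off the axis, and open lenses into $\Omega^U,\Omega^L$; a short computation shows the jump on $\R\setminus[-a,a]$ cancels, the jump on $[-a,a]$ is untouched, and the new jumps $(B^U)^{-1}$ on $\mathcal C^U$ and $B^L$ on $\mathcal C^L$ are exactly the entries of \eqref{jump19}, with the $\T_\delta$ jump conjugated to $\id+A^{(3)}$ and controlled by \eqref{estA}.

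Next I would derive \eqref{svyaz} by composing the three steps in the region $|\im k|>\kappa_1+1$. There $k$ lies above all poles $\I\kappa_j$, outside every circle of $\T_\delta$, outside $[-a,a]$, and outside the lens domains (which sit in $|\im k|<\kappa/2$), so the pole-to-circle redefinition \eqref{eq:redefm} and both lens openings are inactive. Hence $\tilde m=m\,\bigl(\begin{smallmatrix}\Lambda^{-1}&0\\0&\Lambda\end{smallmatrix}\bigr)$ with $\Lambda$ from \eqref{newlam}, and STEPS~1 and~2 contribute only the diagonal factors $\bigl(\begin{smallmatrix}\tilde d^{-1}&0\\0&\tilde d\end{smallmatrix}\bigr)$ and $\bigl(\begin{smallmatrix}d^{-1}&0\\0&d\end{smallmatrix}\bigr)$; multiplying gives $m^{(3)}=m\,\bigl(\begin{smallmatrix}h^{-1}&0\\0&h\end{smallmatrix}\bigr)$ with $h=d\,\Lambda\,\E^{t(\Phi-g)}$, which is precisely \eqref{svyaz}.

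I expect the main obstacle to be justifying the legitimacy of STEPS~2 and~3 rather than the algebra: one must know that \eqref{defsol} really defines a sectionally analytic $d$ with $d(-k)=d(k)^{-1}$, $d(\infty)=1$, nonvanishing off $[-a,a]$, and bounded at the endpoints $\pm a$. I would obtain the symmetry and normalization from the parity of $f$ and the decay $f(k)=O(k^{-2})$ established around \eqref{defef}, together with the continuous odd branch of $\log\mathcal S$ supplied by Lemma~\ref{lem:argr}, and defer the delicate endpoint boundedness to Lemma~\ref{lem6}. Granting this, the off-diagonal entries of $B^U,B^L$ are continuous on $\mathcal C^U,\mathcal C^L$, tend to $-\mathcal R(0)$ as $k\to\pm a$, and decay exponentially as $k\to\infty$ by \eqref{estge} and the analytic continuation of $\mathcal R$ into the strip $|\im k|<\kappa$ guaranteed by \eqref{decay}; this yields the continuity of $m^{(3)}$ up to the boundary asserted in the statement.
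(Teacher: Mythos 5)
Your proposal is correct and takes essentially the same route as the paper: Theorem~\ref{equivalent} is established there exactly by composing STEPS 1--3 (Lemma~\ref{lem:conjug} first with $\tilde d=\E^{t(\Phi-g)}$, then with the $d$ of \eqref{defsol} collapsing the off-diagonal entries on $[-a,a]$ to $\pm\mathcal R(0)$, then the lens factorization $v^{(2)}=B^L(B^U)^{-1}$ using $\ol{\mathcal R(k)}=\mathcal R(-k)$), with the endpoint boundedness of $d$ likewise deferred to Lemma~\ref{lem6} and the exponential decay on $\mathcal C^U\cup\mathcal C^L$ drawn from \eqref{estge} and the strip analyticity of $\mathcal R$. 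Your composition of the diagonal factors in the region $|\im k|>\kappa_1+1$, where the pole-to-circle redefinition and the lens openings are inactive, giving $h(k)=d(k)\Lambda(k)\E^{t(\Phi(k)-g(k))}$ in \eqref{svyaz}, matches the paper's identification exactly.
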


We observe that the jump matrix $v^{(3)}(k)$ has the structure
\beq\label{struct1}
v^{(3)}(k)=\begin{cases} -\I\mathcal R(0)\sigma_2 +A^{(4)}(k),&k\in [-a,a],\\[1mm]
\id + A^{(5)}(k), &k\in C^U\cup C^L,\\[1mm]
\id + A^{(3)}(k), &k\in\T_\delta,\end{cases}
\eeq
where $\sigma_2$ is the second Pauli matrix and the matrices $A^{(j)}(k)$  admit the estimates
\beq\label{plusj}
\|A^{(j)}(k)\|\leq C\E^{-t \nu(|k^2 - a^2|)},\quad j=4,5.
\eeq
Here $\nu(k)$, $k\in\R_+$, is an increasing positive function as $k\neq 0$  with
$\nu(0)=0$ and $\nu(k)=O(k^{3/4})$ as $k\to +\infty$. 
This structure suggests the shape of a limiting (or model) RH problem, which can be solved explicitly. A solution of this model problem is a contender for the leading term in the asymptotic  expansion for the solution of problem (a)--(c) from Theorem \ref{equivalent} as $t\to\infty$.

\section{The solution of the model problem}\label{model}

In the previous section we were lead to the following model RH problem: 

\noindent{\it Find a holomorphic vector function $m^{\mathrm{mod}}(k)$ in the domain $\C\setminus [-a,a]$, continuous up to the boundary of the domain, except of the endpoints $\pm a$ , where the singularities of the order $O((k\pm a)^{-1/4})$ are admissible, which satisfies the jump condition
\[
m^{\mathrm{mod}}_+(k)=m^{\mathrm{mod}}_-(k)
\begin{pmatrix} 0&-\mathcal R(0)\\ \mathcal R(0)&0\end{pmatrix}, \quad k\in [-a, a];
\]
and the symmetry and normalization conditions:}
\[
m^{\mathrm{mod}}(k)= m^{\mathrm{mod}}(-k)\sigma_1,\qquad m^{\mathrm{mod}}(k)=\rI + O(k^{-1}).
\]
We remark that the solution of this model problem is unique as can be shown using a similar argument as in Theorem~\ref{thm:vecrhp}.
However, this will also follow directly from existence and uniqueness of a solution (to be constructed below) for the associated matrix problem.
Indeed,  two solutions for the vector problem would give two solutions for the matrix problem, violation uniqueness for the matrix problem.

We look for the matrix solution $M^{\mathrm{mod}}(k)=M^{\mathrm{mod}}(k,\xi,t)$ of the {\bf matrix} RH
problem:\\
{\it Find a holomorphic matrix-function $M^{\mathrm{mod}}$ in $\C\setminus [-a,a]$, which has continuous limits to the boundary of the domain, except for the endpoints $\pm a$,
where $M^{\mathrm{mod}}_{ij}=O((k\pm a)^{-1/4})$, $i,j=1,2$, and which satisfies the jump
\[
M_+^{\mathrm{mod}}(k)=-\I \mathcal R(0) M_-^{\mathrm{mod}}(k)\sigma_2,\quad k\in [-a,a],
\] 
and is normalized according to $M^{\mathrm{mod}}(k)=\id +O(k^{-1})$ as $k\to\infty$.}

Note that  $\det (-\I\mathcal R(0)\sigma_2)=1$ and, respectively, $\det M^{\mathrm{mod}}(k)$ is a holomorphic function in $
\C\setminus\{a,-a\}$, with isolated singularities $\det M^{\mathrm{mod}}(k)=O((k\pm a)^{-1/2})$, which are, therefore, removable. By Liouville's theorem and by the normalization condition one has $ \det M^{\mathrm{mod}}(k)=1$. Thus $(M^{\mathrm{mod}})^{-1}(k)=O((k\pm a)^{-1/4})$, $k\to\mp a$, and the rest of the arguments  proving uniqueness are the same as in   \cite[page 189]{deiftbook}. 

The uniqueness and the symmetry $\sigma_1 \sigma_2 \sigma_1 = - \sigma_2$ then imply $M^{\mathrm{mod}}(-k) = \sigma_1 M^{\mathrm{mod}}(k) \sigma_1$.
In turn, the vector solution to our model problem
is given by
\[
m^{\mathrm{mod}}(k)=\rI M^{\mathrm{mod}}(k),
\]
 and hence it fulfills the symmetry condition. 

We construct the solution of the matrix problem following \cite{its}. First consider the resonant case. 
Since
\beq\label{divix}
\sigma_2 = S_0 \sigma_3 S_0^{-1}, \qquad
S_0= \frac{1+\I}{2} \begin{pmatrix} 1& 1\\ \I& -\I\end{pmatrix}, \quad S^{-1}_0= \frac{1-\I}{2} \begin{pmatrix} 1& -\I\\ 1& \I\end{pmatrix}
\eeq
then we can first find a holomorphic solution of the jump problem $M_+^\infty=-\I M_-^\infty \sigma_3$, $M^\infty(\infty)=\id$, where 
$\sigma_3$ is the third Pauli matrix. The solution can be easily computed:
\[
M^\infty(k)=\begin{pmatrix} \beta(k) &0\\ 0 & \beta(k)^{-1}\end{pmatrix},\quad \beta(k)=\sqrt[4]{\frac{k+a}{k-a}}.
\]
Here the function $\beta(k)$ is defined on $\clos(\C\setminus[-a,a])$ and its branch is fixed by the condition $\beta(\infty)=1$. Note that $\beta(-k)=\beta(k)^{-1}$.
For the original matrix function $M^{\mathrm{mod}}(k)$ this yields the representation
\beq\label{solmat}
M^{\mathrm{mod}}(k)= S_0 M^\infty(k) S_0^{-1}= \begin{pmatrix}\frac{\beta(k) +\beta(k)^{-1}}{2} & \frac{\beta(k) - \beta(k)^{-1}}{2\I}\\[2mm]
- \frac{\beta(k) - \beta(k)^{-1}}{2\I} & \frac{\beta(k) +\beta(k)^{-1}}{2}\end{pmatrix}
\eeq
in the resonant case. In the nonresonant case one has to replace $\beta(k)$ by $\beta(-k)$.
The solution of the vector model problem is
\beq\label{modsol}
m^{\mathrm{mod}}(k)=\frac{1}{2\I}\left( \beta(k)(\I-1) +\beta(k)^{-1}(\I+1), \,
\beta(k)(\I+1) +\beta(k)^{-1}(\I-1)\right).
\eeq
In summary we have shown the following
 
\begin{lemma}\label{lembeta}
The solution of the vector (resp.\ the matrix) model RH problems, $m^{\mathrm{mod}}(k)$ (resp.\ $M^{\mathrm{mod}}(k)$) is given by formula \eqref{modsol} (resp.\ \eqref{solmat}), where
$\beta(k)= \sqrt[4]{\frac{k-a}{k+a}}$ in the nonresonant case, and 
$\beta(k)=\sqrt[4]{\frac{k+a}{k-a}}$ in the resonant case.
 \end{lemma}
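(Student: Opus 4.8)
The plan is to prove the lemma by exhibiting the explicit formulas as solutions and then invoking uniqueness, so that no other solution can occur. Uniqueness of the \emph{matrix} problem has already been reduced to the observation that the jump $-\I\mathcal R(0)\sigma_2$ is unimodular, so that $\det M^{\mathrm{mod}}$ has at worst removable singularities at $\pm a$ and, being normalized to $1$ at infinity, equals $1$ identically; the standard Liouville argument of \cite[page 189]{deiftbook} then yields uniqueness, and uniqueness of the vector problem follows as explained before the lemma. Hence it remains only to check that \eqref{solmat} (resp.\ \eqref{modsol}) satisfies the jump, normalization, symmetry and endpoint conditions in each of the two cases.

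First I would treat the scalar building block $\beta$. The point is to reduce the constant matrix jump to a scalar one by the diagonalization \eqref{divix}, $\sigma_2=S_0\sigma_3 S_0^{-1}$: setting $M^\infty=S_0^{-1}M^{\mathrm{mod}}S_0=\mathrm{diag}(\beta,\beta^{-1})$ turns the jump $M^{\mathrm{mod}}_+=-\I\mathcal R(0)M^{\mathrm{mod}}_-\sigma_2$ into $M^\infty_+=-\I\mathcal R(0)M^\infty_-\sigma_3$. Reading off the diagonal entries, this is equivalent to the single scalar relation $\beta_+(k)=-\I\mathcal R(0)\,\beta_-(k)$ on $(-a,a)$ (the $(2,2)$ entry gives the same relation since $\mathcal R(0)^2=1$). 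I would verify it by writing $\beta(k)=\exp\left(\tfrac14[\log(k+a)-\log(k-a)]\right)$, with the branches of the two logarithms chosen so that their difference is analytic off $[-a,a]$ and vanishes at infinity; across $(-a,a)$ only $\log(k-a)$ jumps, by $2\pi\I$, so with the orientation from $-a$ to $a$ one obtains $\beta_+/\beta_-=\E^{-\I\pi/2}=-\I$ for the resonant choice $\beta=\sqrt[4]{(k+a)/(k-a)}$, which matches $\mathcal R(0)=1$ from Lemma~\ref{lem:argr}. The normalization $\beta(\infty)=1$ is immediate, and near the endpoints $\beta$ has precisely fourth-root behavior, so every entry of $S_0\,\mathrm{diag}(\beta,\beta^{-1})\,S_0^{-1}$ is $O((k\mp a)^{-1/4})$, which is admissible.

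Then I would assemble everything. Conjugating back by $S_0$ restores the jump $-\I\mathcal R(0)\sigma_2$ (precisely because of \eqref{divix}) and the normalization $\id$, since $S_0\,\id\,S_0^{-1}=\id$; carrying out the multiplication $S_0\,\mathrm{diag}(\beta,\beta^{-1})\,S_0^{-1}$ produces exactly \eqref{solmat}, and the vector solution \eqref{modsol} is recovered as $m^{\mathrm{mod}}=\rI M^{\mathrm{mod}}$ (the unique constant row vector compatible with the normalization). The symmetry $m^{\mathrm{mod}}(-k)=m^{\mathrm{mod}}(k)\sigma_1$ I would obtain, as indicated above the lemma, from uniqueness together with $\sigma_1\sigma_2\sigma_1=-\sigma_2$ (equivalently from $\beta(-k)=\beta(k)^{-1}$, which gives $M^\infty(-k)=\sigma_1 M^\infty(k)\sigma_1$). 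For the nonresonant case $\mathcal R(0)=-1$ the required scalar jump flips to $\beta_+/\beta_-=+\I$, which is achieved by passing to the reciprocal branch $\beta(-k)=\sqrt[4]{(k-a)/(k+a)}$; the remaining verification is verbatim the same.

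The step I expect to be the main obstacle is the scalar jump computation, i.e.\ pinning down the branch of the fourth root so that the jump factor is exactly $-\I$ (resp.\ $+\I$) rather than the opposite. This is where the orientation of $[-a,a]$, the value of $\mathcal R(0)$ supplied by Lemma~\ref{lem:argr}, and the choice between $\sqrt[4]{(k+a)/(k-a)}$ and $\sqrt[4]{(k-a)/(k+a)}$ must all be made to agree, since a single sign error would interchange the resonant and nonresonant formulas. Everything else—the normalization, the $O((k\pm a)^{-1/4})$ endpoint bounds, and the final matrix multiplication—is routine.
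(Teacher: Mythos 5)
Your proposal is correct and follows essentially the same route as the paper: diagonalizing the constant jump via $\sigma_2=S_0\sigma_3 S_0^{-1}$, solving the scalar fourth-root problem $M^\infty=\mathrm{diag}(\beta,\beta^{-1})$, conjugating back to obtain \eqref{solmat} and \eqref{modsol}, and settling uniqueness and the symmetry by the determinant/Liouville argument together with $\sigma_1\sigma_2\sigma_1=-\sigma_2$, exactly as in the text surrounding the lemma. Your explicit branch computation, giving jump factor $\E^{-\I\pi/2}=-\I$ for $\beta=\sqrt[4]{(k+a)/(k-a)}$ in the resonant case $\mathcal R(0)=1$ and $+\I$ for the reciprocal branch in the nonresonant case $\mathcal R(0)=-1$, is accurate and simply fills in the detail the paper dispatches with ``the solution can be easily computed.''
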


Before we justify the asymptotic equivalence $m^{(3)}(k)\sim m^{\mathrm{mod}}(k)$ as $t\to\infty$ for $k$ outside of small vicinities of $\pm a$, let us compute what this will imply for the leading asymptotics of the solution of the KdV equation.
By \eqref{svyaz} we have for sufficiently large $k$
\beq\label{as2}
m_1(k)=m^{(3)}_1(k) d(k)\Lambda(k) \E^{t(\Phi(k) - g(k))}\sim m^{\mathrm{mod}}_1(k) d(k)\Lambda(k) \E^{t(\Phi(k) - g(k))}
\eeq
as $t\to\infty$.
By \eqref{asm} we have
\[
q(x,t)=-\frac{\partial }{\partial x}\lim_{k\to\infty} 2\I k\,(m_1(k,\xi,t)-1),
\]
and defining $h(\xi)$ via
\beq\label{as2l}
\Lambda(k)d(k,\xi)m_1^{\mathrm{mod}}(k,\xi)=1 -\frac{h(\xi)}{2\I k}+O\left(\frac{1}{k^2}\right),
\eeq
we have by \eqref{asympg}
\beq\label{as19}
\lim_{k\to\infty} 2\I k\,(m_1(k,\xi,t)-1) \sim 12t \xi^2 - h(\xi).
\eeq 
Thus formally differentiating \eqref{asm} we arrive at
\beq\label{as2f}
 q(x,t)\sim -t \frac{\pa}{\pa x} (12 \xi^2) + h'(\xi) \frac{\partial \xi}{\partial x} = -\frac{x}{6t} + \frac{h'(\xi)}{12t}
\eeq
and hence the leading order comes from the phase alone.

The next two sections are devoted to the proof of this result. In fact, in the following section we will also compute the next term $Q(\xi)$ in the asymptotic expansion
$q(x,t)\sim-\frac{x}{6 t}+ \frac{Q(\xi)}{6 t} + o(t^{-1})$ and show that the only contribution is from \eqref{as2}. So let us also compute this contribution.
Since $\Lambda(k)$ does not depend on $\xi$, it does not affect $Q(\xi)$.
Thus, 
$h^\prime(\xi)$ depends on the respective terms of $d$ and $m_1^{\mathrm{mod}}$ only.
By \eqref{modsol}, in the resonant case
\begin{align*}
m_1^{\mathrm{mod}}(k)&=\frac{1}{2\I} \left(\sqrt[4]{\frac{k+a}{k-a}}(\I-1) +\sqrt[4]{\frac{k-a}{k+a}}(\I+1)\right)\\
&=\frac{1}{2\I}\left( (1+\frac{a}{2k})(\I -1) +(1-\frac{a}{2k})(\I +1)\right) + O(k^{-2})=1-\frac{a}{2\I k} + O(k^{-2}).
\end{align*}
Consequently, in the resonant case
\[
m_1^{\mathrm{mod}}(k)=1+\frac{a}{2\I k} + O(k^{-2}).
\]
Next recall that $P(s)\log\mathcal S(s)$ is an odd function  on the interval $[-a,a]$, where $P$ and $\mathcal S$ are defined by \eqref{PS}.
Then taking into account \eqref{defsol} and $\frac{d}{ds}P^{-1}(s)=s P(s)$ one has
\begin{align*}
d(k) &= \left(1 -\frac{a^2}{2k^2} + O(k^{-4})\right) \exp\left( -\frac{1}{2\pi\I} \int_{-a}^a \frac{P(s)\log \mathcal S(s)}{1-\frac{s}{k}} ds\right)\\
&= 1- \frac{1}{2\pi\I k}\int_{-a}^a s P(s)\log \mathcal S(s)ds +O(k^{-2})\\
&= 1+\frac{1}{2\pi\I k}\int_{-a}^a P^{-1}(s)\,\frac{d}{d s}\log \mathcal R(s)\,ds+O(k^{-2}).
\end{align*}
Thus
\[
h(\xi)= 4 \sum_{j=1}^N \kappa_j \pm a-\frac{1}{\pi}\int_{-a}^a\sqrt{s^2-a^2+\I 0}\,\frac{d}{d s}\log \mathcal R(s)\,ds,
\]
where $\pm$ corresponds to the resonant/nonresonant case, respectively.
Since \[
\frac{\partial a}{\partial x}=\frac{d a}{d \xi}\frac{1}{12 t}=-\frac{1}{12 a t},
\]
then
\[
h'(\xi)=-\frac{1}{a}\left(\pm 1+\frac{a}{\pi}
\int_{-a}^a\frac{\frac{d}{d s}\log \mathcal R(s)}{\sqrt{s^2-a^2+\I 0}} ds\right).
\]
Once \eqref{as2f} is justified this will prove \eqref{coef}.

\section{The parametrix problem}\label{parametrix}

To justify formula \eqref{asq} we study  first the so called parametrix problem, which appears in  vicinities of the node points $\pm a=\pm a(\xi)$.
In these vicinities the jump matrices $A^{(4)}(k)$ and $A^{(5)}(k)$ (cf.\ \eqref{jump19}, \eqref{struct1}, \eqref{plusj}), which were dropped when solving the model problem, are in fact not close to the identity matrix.
The parametrix problem takes their influence into account.

Consider, for example, the point $-a(\xi)$. Let $\mathcal B_-$ be a small open neighborhood of this point.
Abbreviate $\Sigma_1=[-a,\, a]\cap \mathcal B_-$, $\Sigma_2=\mathcal C^U\cap\mathcal B_-$, and $\Sigma_3=\mathcal C^L\cap\mathcal B_-$.
We choose the orientation of these contours as outward from the node point $-a$, that is,
the orientation on $\Sigma_2$ and $\Sigma_3$ is opposite to the orientation on $\mathcal C^U$ and $\mathcal C^L$, respectively.
Inside $\mathcal B_-$ the solution $m^{(3)}$ has jumps only on these contours. 

As a preparation we investigate the behavior of $d(k)$ from \eqref{defsol} as $k\to -a$.

\begin{lemma} \label{lem6}
The following asymptotical behavior is valid as  $k\to -a$:
\beq\label{as11} d(k)^{-2}\mathcal R(k)=\mathcal R(0) +O(\sqrt{k+a}), \   k\notin \Sigma_1;\
\frac{ d_+(k)}{d_-(k)}=1 +O(\sqrt{k+a}),\ k\in\Sigma_1.\eeq
\end{lemma}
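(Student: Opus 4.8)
The plan is to analyze the explicit formula \eqref{defsol} for $d(k)$ directly near the node $-a$, isolating the single singular contribution that fixes the endpoint value. By Lemma~\ref{lem:argr} the function $\log\mathcal S(s)=\I\arg\mathcal S(s)$ is smooth and odd on $[-a,a]$, so I would write $\log\mathcal S(s)=\log\mathcal S(-a)+\eta(s)$ with $\eta(-a)=0$ and $\eta$ smooth, whence $\eta(s)=O(s+a)$ as $s\to-a$. Inserting this splitting into \eqref{defef} decomposes $\log d(k)=\sqrt{k^2-a^2}\,f(k)$ into a constant piece coming from $\log\mathcal S(-a)$ and a remainder built from $\eta$.

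First I would evaluate the constant piece using the elementary identity
\[
\frac{1}{2\pi\I}\int_{-a}^a\frac{P(s)}{s-k}\,ds=\frac{1}{2\sqrt{k^2-a^2}},
\]
which is just the Sokhotski--Plemelj representation of the sectionally analytic function $1/\sqrt{k^2-a^2}$ (with $P$ as in \eqref{PS}). Multiplying by $\sqrt{k^2-a^2}$, the $\log\mathcal S(-a)$-term contributes exactly $\tfrac12\log\mathcal S(-a)$, independent of $k$, so
\[
\log d(k)=\tfrac12\log\mathcal S(-a)+\sqrt{k^2-a^2}\,f_2(k),\qquad f_2(k):=\frac{1}{2\pi\I}\int_{-a}^a\frac{P(s)\,\eta(s)}{s-k}\,ds.
\]
This leading value is consistent with the jump relation \eqref{defddr}: it forces $d(-a)^2=\mathcal S(-a)=\mathcal R(-a)\mathcal R^{-1}(0)$, which is exactly what a continuous $d$ must satisfy at the endpoint where $d_+=d_-$.

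Next I would show the remainder is harmless. Near $s=-a$ the factor $P(s)$ blows up like $(s+a)^{-1/2}$, but $\eta(s)=O(s+a)$ cancels this, so $P(s)\eta(s)=O(\sqrt{s+a})$ is bounded (indeed vanishing) at the endpoint. A short computation, substituting $\sigma=s+a$ and $u=\sqrt\sigma$, then shows that both the Cauchy integral $f_2(k)$ off the cut and its principal value on the cut stay bounded as $k\to-a$; the vanishing of $\eta$ is precisely what rules out a logarithmic divergence. Since $\sqrt{k^2-a^2}=O(\sqrt{k+a})$, this gives $\sqrt{k^2-a^2}f_2(k)=O(\sqrt{k+a})$, so $d(k)$ is bounded near $-a$ and $\log d(k)=\tfrac12\log\mathcal S(-a)+O(\sqrt{k+a})$ for $k\to-a$, $k\notin\Sigma_1$. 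Exponentiating yields $d(k)^2=\mathcal S(-a)\bigl(1+O(\sqrt{k+a})\bigr)$, and combining this with the analyticity of $\mathcal R$ near $-a$ (so $\mathcal R(k)=\mathcal R(-a)(1+O(k+a))$) and $\mathcal S(-a)=\mathcal R(-a)\mathcal R^{-1}(0)$ produces the first assertion $d(k)^{-2}\mathcal R(k)=\mathcal R(0)+O(\sqrt{k+a})$.

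For the on-cut estimate I would use that the constant $\tfrac12\log\mathcal S(-a)$ carries no jump, so $\log\bigl(d_+(k)/d_-(k)\bigr)$ equals the jump of $\sqrt{k^2-a^2}\,f_2$ across $[-a,a]$. With $\bigl[\sqrt{k^2-a^2}\bigr]_\pm=\pm\I\sqrt{a^2-k^2}$ and the Plemelj formula $f_{2,+}+f_{2,-}=\frac{1}{\pi\I}\,\mathrm{P.V.}\!\int_{-a}^a\frac{P(s)\eta(s)}{s-k}\,ds$, I obtain
\[
\log\frac{d_+(k)}{d_-(k)}=\I\sqrt{a^2-k^2}\,(f_{2,+}+f_{2,-})=\frac{\sqrt{a^2-k^2}}{\pi}\,\mathrm{P.V.}\!\int_{-a}^a\frac{P(s)\eta(s)}{s-k}\,ds,
\]
and the same boundedness of the principal-value integral together with $\sqrt{a^2-k^2}=O(\sqrt{k+a})$ gives $\log(d_+/d_-)=O(\sqrt{k+a})$, i.e. the second assertion. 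The one genuinely delicate point is the confluence of the evaluation point $k$ with the branch point $-a$ at which $P$ is singular: a crude bound would only give boundedness and risks a spurious $\log(k+a)$. The structural fact making the estimate sharp is that $\arg\mathcal S$ is smooth and odd, so its increment $\eta$ vanishes to first order at $-a$ and exactly cancels the square-root singularity of $P$; I expect verifying this cancellation to be the main, though routine, obstacle, and the behavior at $+a$ then follows from the symmetry $d(-k)=d(k)^{-1}$.
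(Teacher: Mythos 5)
Your proof is correct and takes essentially the same route as the paper's: both split $\log\mathcal S(s)$ into the constant $\log\mathcal S(-a)$ plus a remainder vanishing at $-a$, evaluate the constant contribution exactly via the identity $\frac{1}{2\pi\I}\int_{-a}^a \frac{P(s)}{s-k}\,ds=\frac{1}{2\sqrt{k^2-a^2}}$, and control the remainder by Muskhelishvili-type endpoint estimates for a Cauchy integral whose density vanishes like $\sqrt{s+a}$ at the node. The only (cosmetic) difference is that the paper also extracts the explicit coefficient $\tilde I(-a)$ of the $\sqrt{k+a}$ term, which it needs later in \eqref{main20}, whereas you stop at the $O(\sqrt{k+a})$ bound --- exactly what the lemma asserts.
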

\begin{proof} 
To prove  \eqref{as11} we use \eqref{PS}, and represent the integral in \eqref{defsol} as
\beq\label{repr1}
\int_{-a}^a \frac{P(s)\log\mathcal S(s)}{(s-k)}ds=I_1(k) + \I\arg \mathcal S(-a) I_2(k),
\eeq
with
\[
I_1(k)=\int_{-a}^a \frac{ P(s)(\log\mathcal S(s) -\log\mathcal S(-a) )}{s-k}ds,\  I_2(k)=\int_{-a}^a \frac{ P(s)ds}{s-k}.
\]
Since for $a\in (0,c)$ both the reflection coefficient and the Blaschke factor $\Lambda(k)$ are differentiable, we have $\mathcal S(s)-\mathcal S(-a)=O(s+a)$.
Thus
\[
(\log\mathcal S(s) -\log\mathcal S(-a) )(s^2 - a^2)^{-1/2}=O((s+a)^{1/2})
\]
in a vicinity of $-a$. Consequently, (cf.\ \cite{Muskh}, formulas (22.4) and (22.7)) the function 
$I_1(k)$ is H\"older continuous in a vicinity of $-a$ with the finite limiting value
\[
I_1(-a)=\frac{1}{2}\int_{-a}^a \frac{\arg\mathcal S(s) -\arg\mathcal S(-a) }{\sqrt{|a^2 - s^2|}(s-a)}ds
\]
from any direction. The second integral is given by
\[
\frac{1}{2\pi\I} I_2(k)=\frac{1}{2\sqrt{k^2 -a^2}},
\]
as a solution of the jump problem $F_+(k)=F_-(k) + P(k)$, $k\in[-a,a]$; $F(k)\to 0$ as $k\to\infty$.
Substituting this into \eqref{defsol} and taking into account \eqref{repr1} yields
\begin{align}\nn
 \log d(k)&=-\frac{1}{2} \I\arg\mathcal{S}(-a) + \frac{I_1(-a)}{\pi\I}\sqrt{k^2 - a^2}+ O(k+a)\\ \label{posted}
 &=-\frac{1}{2}\log\mathcal S(-a)+\tilde I (-a)\sqrt{k+a} + O(k+a),
\end{align}
where
\beq\label{IA}
 \tilde I(-a)=\frac{\sqrt{2 a}}{2\pi}\int_{-a}^a \frac{\arg\mathcal S(s) -\arg\mathcal S(-a) }{\sqrt{a^2 - s^2}(s+a)}ds.
\eeq
Note that the main term in the representation of $\log d_+(k)$ and $\log d_-(k)$ in the vicinity of $-a$ is evidently the same.
Formula \eqref{posted} then proves \eqref{as11}.
\end{proof}

This lemma allows us  to replace the jump matrix \eqref{jump19} inside $\mathcal B_-$  approximately by the matrix
\beq\label{parini1}
v^{\mathrm{par}}(k):=\E^{-t g_-(k)\sigma_3}\, S\, \E^{t g_+(k)\sigma_3},
\eeq
where 
\beq\label{parini2}S=\begin{cases} S_1:=\begin{pmatrix}  0&-\mathcal R(0)\\ \mathcal R(0)& 1\end{pmatrix}, & k\in\Sigma_1,\\[3mm]
 S_2:=\begin{pmatrix}  1&0\\ -\mathcal R(0)& 1\end{pmatrix}, & k\in\Sigma_2,\\[3mm]
 S_3:= \begin{pmatrix}  1&\mathcal R(0)\\0& 1\end{pmatrix}, & k\in\Sigma_3.\end{cases}
\eeq
Since $\mathcal R(0)^2=1$ we have $S_1 S_2 S_3=\id$ and $\det(S_j)=1$.

We  look for a matrix solution in $\mathcal B_-\setminus(\Sigma_1\cup\Sigma_2\cup \Sigma_3)$ of the jump problem
\beq\label{parini}
M^{\mathrm{par}}_+(k)=M^{\mathrm{par}}_- (k)v^{\mathrm{par}}(k),
\eeq
which is, in some sense to be made precise below, asymptotically close to $M^{\mathrm{mod}}(k)$ on the boundary $\partial\mathcal B_-$ as $t\to\infty$ (cf.\ also \cite{its}).
If $M^{\mathrm{par}}$ solves \eqref{parini}, then the matrix function
\[
M(k)=M^{\mathrm{par}}(k)\E^{-t g(k)\sigma_3}
\]
solves the constant jump problem
\[
M_+(k)= M_-(k) S,
\]
with the normalization $M\sim M^{\mathrm{mod}}\E^{-t g\sigma_3}$ on $\partial\mathcal B_-$.

To simplify our considerations we will next use a change of coordinates which will put the phase into a standardized form
and at the same time rescales the problem. To this end note that in a small vicinity of $-a$ the $g$-function can be represented as
\[
g(k)=8\sqrt{2} a^{3/2} (k+a)^{3/2}(1 + O(k+a)),\ \mbox{as}\ \ k\to -a,
\]
where the branch cut is taken along $[-a, +\infty)$ and the branch is fixed by $(\epsilon +\I 0)^{3/2}>0$ for $\epsilon >0$.
The error term depends only on $a$ and is uniform  on compact sets.
Thus we can introduce a local variable
\beq\label{zamena}
w(k):=\left(\frac{3 t g(k)}{2}\right)^{2/3},
\eeq
for which we have
\beq\label{changev}
w(k)= t^{2/3} C_1 (k+a)(1 + O(k+a)),\quad C_1 = 2\cdot 6^{2/3} a>0,\quad k\to -a,
\eeq
such that $w(k)$ is a holomorphic change of variables. Moreover, we choose the set $\mathcal B_-$ to be the preimage
under the map $k\mapsto w$ of the circle $\mathbb D_\rho$ of radius $t^{2/3} C_1 \rho$, with $\rho<a/4$, centered at $w=0$.  
Furthermore, without loss of generality we can choose the contours $\mathcal C^U$ and $\mathcal C^L$ such that the segments
$\Sigma_1\cup\Sigma_2\cup \Sigma_3$ are mapped onto the straight lines $\left(\Gamma_1\cup\Gamma_2\cup\Gamma_3\right)\cap \mathbb D_\rho$, where
\[
\Gamma_2=\{w\in\C:\,\arg w=\frac{2\pi\I}{3}\},\  \Gamma_3=\{w\in\C:\,\arg w=\frac{4\pi\I}{3}\},\  \Gamma_1=[0,\,+\infty).
\]
Compare Figure~\ref{fig:disc}.
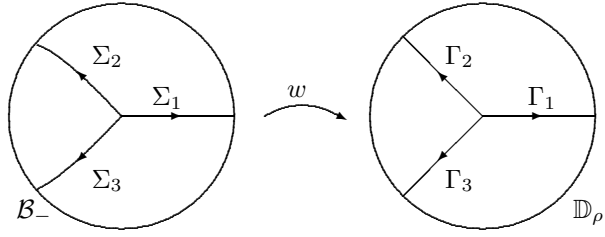
\begin{figure}\centering
\begin{picture}(7.7,3.5)

\put(0, 0.4){$\mathcal B_-$}
\put(1.8,1.9){$\Sigma_1$}
\put(1,2.45){$\Sigma_2$}
\put(1,0.8){$\Sigma_3$}
\put(1.4,1.75){\bigcircle{3}}
\put(1.4,1.75){\line(1,0){1.5}}
\put(2,1.75){\vector(1,0){0.2}}
\curve(0.28,2.7, 0.74,2.4, 1.4,1.75)
\curve(0.28,0.78, 0.74,1.1, 1.4,1.75)
\put(1,2.16){\vector(-1,1){0.2}}
\put(1,1.34){\vector(-1,-1){0.2}}

\put(3.6,2){$w$}
\curve(3.3,1.75, 3.8,1.9, 4.3,1.75)
\put(4.2,1.8){\vector(2,-1){0.2}}

\put(7.4, 0.4){$\mathbb D_\rho$}
\put(6.8,1.9){$\Gamma_1$}
\put(5.7,2.45){$\Gamma_2$}
\put(5.7,0.8){$\Gamma_3$}
\put(6.2,1.75){\bigcircle{3}}
\put(6.2,1.75){\line(1,0){1.5}}
\put(6.8,1.75){\vector(1,0){0.2}}
\put(6.2,1.75){\line(-1,1){1.06}}
\put(5.8,2.15){\vector(-1,1){0.2}}
\put(6.2,1.75){\line(-1,-1){1.06}}
\put(5.8,1.35){\vector(-1,-1){0.2}}

\end{picture}
\caption{The local change of coordinates $w(k)$.}\label{fig:disc}
\end{figure}%
Then the matrix problem \eqref{parini1}--\eqref{parini} can be considered as problem in terms of $w\in \mathbb D_\rho$.

From now on we have to distinguish between the resonant and nonresonant case. Consider first the generic nonresonant case where 
\beq\label{defS}
S_1=\begin{pmatrix} 0& 1\\-1 & 1\end{pmatrix},\ \ S_2=\begin{pmatrix} 1& 0\\1 & 1\end{pmatrix},\ \ S_3=\begin{pmatrix} 1& -1\\0 & 1\end{pmatrix},
\eeq
and the function $\beta(k)$ is locally given by (cf.\ Lemma \ref{lembeta})
\[
\beta(k)= w^{-1/4} \gamma(w), \qquad w\in\mathbb D_\rho,
\]
where $\gamma$ is holomorphic and satisfies
\[
\gamma(w)= 2^{2/3} 3^{1/6} \sqrt{a} \E^{\frac{\I\pi}{4}} t^{1/6} \big(1+ O(\frac{w}{t^{2/3}})\big), \quad\text{as} \quad w\to0,
\]
where the error depends only on $a$ and is uniform on compact sets, which do not contain the point $a=0$.
In turn, \eqref{solmat} can be represented as (cf.\ \eqref{divix})
\beq\label{primo}
 M^{\mathrm{mod}}(k)= S_0 \gamma(w)^{\sigma_3} w^{-\frac{\sigma_3}{4}} S_0^{-1}.
\eeq
Since $\mathbb D_\rho$ grows as $t\to\infty$ this suggests to look for a matrix $\mathcal A(w)$
satisfying the jump condition \beq\label{jump99}\mathcal A_+ = S_j \mathcal A_- \quad \mbox{on}\quad \Gamma_j,\eeq and the normalization
\beq\label{normairy}
\mathcal A(w) = w^{-\frac{\sigma_3}{4}}(S_0^{-1} +O(w^{-3/2}))\E^{-\frac{2}{3} w^{3/2}\sigma_3}, \quad \text{as}\quad w\to\infty,
\eeq
in any direction with respect to $w$. Then
\begin{align}\nn
M^{\mathrm{par}}(k) &= S_0 \gamma(w)^{\sigma_3} \mathcal A(w) \E^{\frac{2}{3} w^{3/2}\sigma_3}\\ \label{solpar}
&= M^{\mathrm{mod}}(k) S_0 \left(\tfrac{3 t g(k)}{2}\right)^{\sigma_3/6} \mathcal A\left(\big(\tfrac{3t g(k)}{2}\big)^{2/3}\right) \E^{t g(k) \sigma_3}
\end{align}
will satisfy
\beq\label{asympar}
M^{\mathrm{par}}(k)= M^{\mathrm{mod}}(k) \big(\id +O(\rho^{-3/2}t^{-1})\big), \quad \text{as}\quad t\to\infty, \quad k\in \partial \mathcal B_-,
\eeq
with the error term again depending only on $a$ and  uniform as $a\in[\epsilon_1, c-\epsilon_2]$ for arbitrary small $\epsilon_j>0$.
The solution the problem \eqref{jump99}, \eqref{normairy} can be given in terms of Airy functions. To this end set
\[
y_1(w)=\Ai(w):=\frac{1}{2\pi \I}\int_{-\I \infty}^{\I \infty} \exp(\frac{1}{3} z^3 - w z) dz,
\]
and let
\[
y_2(w)=\E^{-\frac{2\pi \I}{3}} \Ai(\E^{\frac{-2\pi\I}{3}}w),\ \mbox{and}\ \  y_3(w)=\E^{\frac{2\pi \I}{3}} \Ai(\E^{\frac{2\pi\I}{3}}w).
\]
These functions are {\it entire functions}, and they are connected by the well-known identity \dlmf{9.2.12}
\beq\label{idd}
y_1(w)+y_2(w)+y_3(w)=0.
\eeq
Furthermore, set
\[
\Omega_1=\{w:\ \arg w\in \left(0,\,\frac{2\pi}{3}\right)\},\ \Omega_2=\{
 \arg w\in \left(\frac{2\pi}{3},\,\frac{4\pi}{3}\right)\},\ \Omega_3=\C\setminus\ol{\{\Omega_1\cup\Omega_2\}}.
 \]
 We chose the cuts for all roots of $w$ along the contour $\Gamma_1$ and $\arg w\in [0,\, 2\pi)$. With this convention
 the asymptotics of the Airy functions (cf.\ \dlmf{9.7.5,9.7.6}) read
\begin{align}\label{airas}
y_1(w)&= \begin{cases} \frac{1}{2\sqrt\pi w^{1/4}}\E^{-\frac 2 3 w^{3/2}}(1 +O(w^{-3/2})),& w\in\Omega_1,\\
\frac{\I}{2\sqrt\pi w^{1/4}}\E^{\frac 2 3 w ^{3/2}}(1 +O(w^{-3/2})),& w\in\Omega_3,
\end{cases}\\ \label{airas2}
y_2(w)&=-\frac{\I}{2\sqrt\pi w^{1/4}} \E^{\frac 2 3 w^{3/2}}(1 +O(w^{-3/2})),\quad w\in\Omega_1\cup\Omega_2,\\ \label{airas3}
y_3(w)&=-\frac{1}{2\sqrt\pi w^{1/4}}\E^{-\frac 2 3 w^{3/2}}(1 +O(w^{-3/2})), \quad w\in\Omega_2\cup\Omega_3,
\end{align}
and can be differentiated with respect to $w$. Set
\[
(1-\I)\sqrt\pi\begin{pmatrix}  y_1(w) & y_2(w) \\ - y_1^\prime(w) & - y_2^\prime(w)\end{pmatrix}=: \mathcal{A}_1(w),\quad w\in \Omega_1.
\] 
Then $\det \mathcal A_1(w)=1$ (cf.\ \dlmf{9.2.8}), and by \eqref{airas}, \eqref{airas2} we have the correct normalization \eqref{normairy} in $\Omega_1$.

Next, by \eqref{idd}
\[
\mathcal A_1(w)S_2= (1-\I)\sqrt\pi\begin{pmatrix}-y_3 (w) & y_2(w)\\   y_3^\prime(w) & - y_2^\prime(w)\end{pmatrix}=:\mathcal A_2(w),
\]
and we will use this definition in the sector $\Omega_2$. Again $\det \mathcal A_2(w)=1$ and by
 \eqref{airas2}, \eqref{airas3} the matrix $\mathcal A_2(w)$ obeys the normalization \eqref{normairy} in $\Omega_2$. Finally,
\[
\mathcal A_2 S_3= \mathcal A_1(w)S_1^{-1}= (1-\I) \sqrt\pi\begin{pmatrix} - y_3 (w) & -  y_1(w)\\  y_3^\prime(w) &  y_1^\prime(w)\end{pmatrix}=:\mathcal A_3(w),
\]
has the desired properties in the domain $\Omega_3$. In summary, $\mathcal A(w) = \mathcal A_j(w)$ for $w\in\Omega_j$ is the solution we look for.

\begin{corollary}\label{uhh}
The parametrix $M^{\mathrm{par}}(w)$ defined in \eqref{solpar} satisfies $\det M^{\mathrm{par}}=1$ and is bounded in $ \ol \C$.
\end{corollary}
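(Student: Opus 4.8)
The plan is to read off both assertions directly from the explicit factorization \eqref{solpar}, $M^{\mathrm{par}}(k)=S_0\,\gamma(w)^{\sigma_3}\,\mathcal A(w)\,\E^{\frac 2 3 w^{3/2}\sigma_3}$, together with the properties of the Airy blocks $\mathcal A_j$ recorded just above the corollary. For the determinant I would simply multiply the determinants of the four factors: $\det S_0=1$ (a direct computation from \eqref{divix}), $\det\gamma(w)^{\sigma_3}=\gamma(w)\gamma(w)^{-1}=1$, $\det\E^{\frac 2 3 w^{3/2}\sigma_3}=1$, and $\det\mathcal A(w)=1$, the last one being exactly what the Wronskian identity \dlmf{9.2.8} gave sector by sector when the blocks $\mathcal A_1,\mathcal A_2,\mathcal A_3$ were defined. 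Thus $\det M^{\mathrm{par}}\equiv 1$ on each of the sectors $\Omega_j$; since the sectored pieces are glued by the jumps $S_j$, all of determinant one, the value persists across the rays $\Gamma_1\cup\Gamma_2\cup\Gamma_3$, so $\det M^{\mathrm{par}}\equiv 1$. (The second form in \eqref{solpar} together with $\det M^{\mathrm{mod}}=1$ from Section~\ref{model} gives the same conclusion.)

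For boundedness the only candidate singular points are the node $w=0$ and $w=\infty$; between these, and between the rays, $M^{\mathrm{par}}$ is analytic in each sector, and since the constant jumps $S_j$ and their inverses are bounded it stays bounded across $\Gamma_1,\Gamma_2,\Gamma_3$. Near $w=0$ I would argue from the \emph{first} representation: the functions $y_1,y_2,y_3$ and their derivatives are entire, so $\mathcal A(w)$ is bounded on a neighbourhood of $0$ in each sector, $\gamma$ is holomorphic and non-vanishing there by Lemma~\ref{lembeta}, and $\E^{\frac 2 3 w^{3/2}\sigma_3}\to\id$. Hence $M^{\mathrm{par}}$ has bounded limits at the node. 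This is the crucial point: the factor $\beta(k)\sim(k+a)^{-1/4}$ that makes $M^{\mathrm{mod}}$ blow up at $-a$ is, after passage to the variable $w$ and absorption of $w^{-1/4}$ into $\gamma(w)=w^{1/4}\beta(k)$, completely cancelled, so the Airy construction regularizes the endpoint.

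Near $w=\infty$ (which, for fixed $k\neq-a$ in $\mathcal B_-$, is the regime $t\to\infty$) I would instead use the normalization \eqref{normairy}, valid in every direction, to write $\mathcal A(w)\E^{\frac 2 3 w^{3/2}\sigma_3}=w^{-\sigma_3/4}(S_0^{-1}+O(w^{-3/2}))$. Combining this with the identity $S_0\,\gamma(w)^{\sigma_3}w^{-\sigma_3/4}=M^{\mathrm{mod}}(k)S_0$ — which holds because $\gamma w^{-1/4}=\beta$, $\gamma^{-1}w^{1/4}=\beta^{-1}$, so $\gamma^{\sigma_3}w^{-\sigma_3/4}=M^\infty$ and $M^{\mathrm{mod}}=S_0 M^\infty S_0^{-1}$ — reproduces exactly estimate \eqref{asympar}, namely $M^{\mathrm{par}}=M^{\mathrm{mod}}(\id+O(w^{-3/2}))$. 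Therefore $M^{\mathrm{par}}$ approaches $M^{\mathrm{mod}}(k)$, which is bounded for $k$ bounded away from the node, and so remains bounded at infinity. Gathering the three regimes yields a bound that is uniform in $w$, so $M^{\mathrm{par}}$ is bounded on $\ol\C$; the resonant case is identical after replacing $\beta(k)$ by $\beta(-k)$ and using $\mathcal R(0)=1$, which alters only the constant matrices $S_j$, not the structure of the argument.

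I expect the main obstacle to be reconciling the two representations of $M^{\mathrm{par}}$: the comparison form underlying \eqref{asympar} contains $M^{\mathrm{mod}}$, which is \emph{singular} at the node, yet $M^{\mathrm{par}}$ itself is regular there. The resolution is that the $(\id+O(w^{-3/2}))$ asymptotics is only valid for large $w$, i.e.\ away from the node, whereas near $w=0$ one must fall back on the entire-function representation; one therefore has to verify that the two descriptions are compatible and that the resulting bound is genuinely uniform — in particular that the $O(w^{-3/2})$ error term and the constant in \eqref{asympar} depend only on $a$ and stay controlled as $a$ ranges over $[\epsilon_1,c-\epsilon_2]$, as asserted after \eqref{airas3}.
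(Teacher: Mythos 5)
Your proposal is correct and follows exactly the route the paper intends: the corollary is stated without a separate proof precisely because it is read off from the construction as you do it, namely $\det M^{\mathrm{par}}=1$ factor by factor (with $\det\mathcal A_j=1$ coming from the Airy Wronskian \dlmf{9.2.8}, as noted in the text) and boundedness from the entirety of $y_1,y_2,y_3$ near the node together with the normalization \eqref{normairy} at infinity, which reproduces \eqref{asympar}. Your closing remark about reconciling the two representations and tracking the $a$-uniformity of the $O(w^{-3/2})$ error is exactly the point the paper handles by its statements that the errors depend only on $a$ and are uniform for $a\in[\epsilon_1,c-\epsilon_2]$, so nothing is missing.
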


Taking into account the second term of the Airy functions (cf.\ again \dlmf{9.7.5,9.7.6}), we get from \eqref{asympar} that
\beq\label{import7}
(M^{\mathrm{mod}}(k))^{-1}\,M^{\mathrm{par}}(k) =\id + \frac{1}{72 t g(k)}\begin{pmatrix} -7 & \ 7\\ \ 5&-5\end{pmatrix} +O(t^{-2})
\eeq
uniformly on the boundary $\partial \mathcal B_-$.

Let $\mathcal B_+$ be a vicinity of the point $a$, symmetric to $\mathcal B_-$ with respect to the map $k\mapsto -k$.
Using the symmetry properties of the jump matrices in $\mathcal B_\pm$ and the symmetry of the model problem solution $M^{\mathrm{mod}}(-k)=\sigma_1 M^{\mathrm{mod}}(k)\sigma_1$, one can set 
\[
M^{\mathrm{par}}(k)=\sigma_1 M^{\mathrm{par}}(-k)\sigma_1,\quad k\in\mathcal B_+,
\]
and check directly, that it is indeed the solution of the corresponding parametrix problem in $\mathcal B_+$.  
Note also that since $\det M^{\mathrm{par}}(k)=1$ this matrix is invertible and both $M^{\mathrm{par}}(k)$ and $(M^{\mathrm{par}})^{-1}(k)$ are bounded for all $k\in \clos(\mathcal B_+\cup \mathcal B_-)$ and all $t>0$.

At the end of  his section we  briefly discuss the parametrix problem solution in the resonant case.
The scheme is the same. The $S$ matrix is now given by
\beq\label{defS1}
S_1=\begin{pmatrix} 0& -1\\1 & 1\end{pmatrix},\ \ S_2=\begin{pmatrix} 1& 0\\-1 & 1\end{pmatrix},\ \ S_3=\begin{pmatrix} 1& 1\\0 & 1\end{pmatrix},
\eeq
and $\beta(k)=(2a)^{-1/4}\E^{\frac{-\I\pi}{4}}(k+a)^{1/4}$. We represent the matrix \eqref{solmat} as
\[
M^{\mathrm{mod}}(k)=
S_0 \begin{pmatrix}\beta(k)^{-1}&0\\[2mm]
0&\beta(k)\end{pmatrix} S_0^{-1}.
\]
Thus (cf.\ Lemma \ref{lembeta}),
\[
 M^{\mathrm{mod}}(k)= \ti{S}_0 \gamma(w)^{\sigma_3} w^{-\frac{\sigma_3}{4}} \ti{S}_0^{-1},\quad \gamma(k)=\beta(k(w))^{-1} w^{1/4},
\]
where
 \[
 \ti{S}_0 = \frac{1+\I}{2} \begin{pmatrix} 1 & -1\\ -\I & -\I\end{pmatrix}, \qquad
 \ti{S}_0^{-1} = \frac{1-\I}{2} \begin{pmatrix} -\I & 1\\ \I & 1\end{pmatrix}.
\]
The normalization \eqref{normairy} will have the form
\beq\label{airy1res}
\mathcal{A}(w):=w^{-\frac{\sigma_3}{4}}(\ti{S}_0^{-1} +O(w^{-3/2}))\E^{-\frac{2}{3} w^{3/2}\sigma_3},
\eeq
and 
\[
\mathcal{A}_1(w)= (1-\I) \sqrt\pi\begin{pmatrix}  y_1(w) & - y_2(w) \\  y_1^\prime(w) & - y_2^\prime(w)\end{pmatrix},\quad w\in \Omega_1.
\]

\section{The completion of the asymptotic analysis}\label{asymptotics}

The aim of this section is to establish that the solution $m^{(3)}(k)$ of the RH problem (a)--(c) from
Theorem~\ref{equivalent}, is well approximated by $\rI M^{\mathrm{par}}(k)$ inside the domain $\mathcal B = \mathcal B_+ \cup \mathcal B_-$
and by $\rI M^{\mathrm{mod}}(k)$ in $\C\setminus\mathcal B$.
We follow the well-known approach via singular integral equations (see e.g., \cite{dz}, \cite{GT}, \cite{its}, \cite{len}).
To simplify notations we introduce
\[
\ti\Sigma=[-a,a]\cup\mathcal C^U\cup\mathcal C^L \cup \mathbb T_\delta\cup\partial \mathcal B,\quad
\Sigma_{\pm}=\tilde\Sigma\cap \mathcal B_\pm, \quad
\Sigma_\mathcal B=\tilde\Sigma\cap \mathcal B.
\]
We will denote the three parts of each contour $\Sigma_+$ and $\Sigma_-$, with the orientation as on $[-a,a]\cup\mathcal C^U\cup \mathcal C^L$,
by $\Sigma_j^+$ and $\Sigma_j^-$. Next set
\beq\label{dehatm}
\hat m(k)=m^{(3)}(k) (M^{\text{as}}(k))^{-1},\quad M^{\text{as}}(k):=\begin{cases} M^{\mathrm{par}}(k), & k\in\mathcal B,\\
 M^{\mathrm{mod}}(k), & k\in\C\setminus\mathcal B.
\end{cases}
\eeq
Then $\hat m$ solves the jump problem
\[
\hat m_+(k)=\hat m_-(k)\hat v(k),
\]
where
\beq\label{hatve}
\hat v(k)=\begin{cases}M^{\mathrm{par}}_-(k)v^{(3)}(k)(M^{\mathrm{par}}_+(k))^{-1}, & k\in\Sigma_{\mathcal B},\\ 
(M^{\mathrm{mod}}(k))^{-1}M^{\mathrm{par}}(k), & k\in \partial \mathcal B,\\
M_-^{\mathrm{mod}}(k)v^{(3)}(k)(M_+^{\mathrm{mod}}(k))^{-1}, & k\in \ti\Sigma\setminus(\Sigma_{\mathcal B} \cup \partial \mathcal B),\end{cases}
\eeq
and satisfies the symmetry and the normalization conditions:
\beq\label{hatnorm}
\hat m(k)=\hat m(-k)\sigma_1,\qquad  \hat m\to \rI, \ \ k\to\infty.
\eeq
Abbreviate $W(k)=\hat v(k) -\id$. Then 
\begin{equation}\label{Wi}
W(k)= \begin{cases}M^{\mathrm{par}}_-(k)\left(v^{(3)}(k)-v^{\mathrm{par}}(k)\right)(M^{\mathrm{par}}_+(k))^{-1}, & k\in\Sigma_{\mathcal B},\\ 
(M^{\mathrm{mod}}(k))^{-1}\,M^{\mathrm{par}}(k) -\id, & k\in \partial\mathcal B,\\
M_-^{\mathrm{mod}}(k)(v^{(3)}(k) +\I\mathcal R(0)\sigma_2)(M_+^{\mathrm{mod}}(k))^{-1}, & k\in [-a,a]\setminus \Sigma_{\mathcal B},\\
M_-^{\mathrm{mod}}(k)(v^{(3)}(k)-\id)(M_+^{\mathrm{mod}}(k))^{-1}, &  k\in \ti\Sigma\setminus(\Sigma_{\mathcal B} \cup \partial \mathcal B\cup [-a,a]).\end{cases}
\end{equation}
By construction the function $W(k)$ depends smoothly on $\xi$ when $\xi\in \mathcal I=[-\frac{c^2}{2} +\epsilon, -\epsilon]$, for arbitrary small fixed
positive $\epsilon$. Since $a(\xi)>\sqrt{2\epsilon}$ we assume  that  the minimal  radius $\rho$ of the sets $\mathcal B_\pm$ admits the estimate $\rho\geq \frac{1}{4}\sqrt{2\epsilon}$.

First we study $W(k)$ on $\Sigma_{\mathcal B}$. The matrices $M^{\mathrm{par}}_-(k)$ and $(M^{\mathrm{par}}_+(k))^{-1} $ are smooth bounded functions with respect to $k\in\Sigma_{\mathcal B}$, $t\in [1,\infty)$, and $\xi\in \mathcal I$. The matrix $v^{(3)}(k)-v^{\mathrm{par}}(k)$ has one nonvanishing entry on each part of contour $\Sigma_{\mathcal B}$, which we denote by $u_\pm(k)$:
\[
u_\pm(k)=\begin{cases}(\mathcal R(0)- d(k)^2\mathcal R(-k))\E^{-2tg(k)}, & k\in\Sigma_3^\pm,\\
(d(k)^{-2}\mathcal R(k) - \mathcal R(0))\E^{2tg(k)}, & k\in\Sigma_2^\pm,\\           
(\frac{d_+(k)}{d_-(k)}-1)\E^{-2tg_+(k)}, & k\in\Sigma_1^\pm.           \end{cases}
\]
Since $g(k)=\re g(k)$ on $\Sigma_\pm$, then by \eqref{as11}, \eqref{posted} and \eqref{IA}
\beq\label{main20}
u_\pm(k)=\left( C_j^\pm\tilde I(\pm a)\sqrt{|k\mp a|} \right) \E^{-2(2a)^{3/2} \,t\,|k\mp a|^{3/2}}+ O(k\mp a)\E^{-2tg(k)},\ k\in\Sigma_j^\pm,
\eeq
where $(C_j^\pm)^6 =1$. First of all, we observe that
\beq\label{uest}
u_\pm(k)=O(t^{-1/3}),\quad k\in\Sigma_\pm,
\eeq
where the error $O(t^{-1/3})$ is uniformly bounded with respect to $\rho=\rho(\xi)$ and $a=a(\xi)$ for $\xi\in\mathcal I$. 
Moreover, in this section, the notation $O(t^{-\ell})$ will always denote a function of $a,\rho$ and $t$ with the above mentioned properties.
It is defined for $t\in [T_0,\infty)$, where $T_0=T_0(\epsilon)$ is some large positive time. 

Now let $(\pm a + (C_j^\pm)^2\delta_j^\pm)$ be the end points of the contours $\Sigma_j^\pm$. Recall that $\delta_j^\pm\geq\rho\geq\frac{\sqrt{2\epsilon}}{4}$.
Then
\beq \label{u1est}
\int_{\Sigma_j^\pm} u_\pm(k)dk= C_j^\pm\tilde I(\pm a) \int_0^{\delta_j^\pm} y^{1/2} \E^{-8 t a\sqrt{2a}y^{3/2}}dy +O(t^{-4/3})=\frac{F_\pm(a,j)}{t}+O(t^{-4/3}),
\eeq
where $F_\pm(a,j)=C_j^\pm\tilde I(\pm a)(12 a\sqrt{2a})^{-1}$, and
\beq\label{u2est}
\|u_\pm(k)\|_{L^1(\Sigma_\pm)}=O(t^{-1}).
\eeq 
Moreover, using the same arguments taking into account that the matrix entries
$[ M_-^{\text{par}}]_{rs}(k)[( M^{\text{par}}_+)^{-1}]_{pq}(k)$, $r,s,p,q\in\{1,2\}$, are bounded for $k\in\Sigma_{\mathcal B}$,
uniformly with respect to $\xi\in\mathcal I$, and using \eqref{changev}, \eqref{main20} and Corollary \ref{uhh}, we get for $\ell=0,1$:
\beq\label{est78}
\sum_\pm \int_{\Sigma_\pm} k^\ell u_\pm(k)[M_-^{\text{par}}]_{rs}(k)[(M^{\text{par}}_+)^{-1}]_{pq}(k)dk 
   =\frac{h_{p,q,r,s,\ell}(a)}{t} +O(t^{-4/3}).
\eeq
Here the functions $h_{p,q,r,s,\ell}(a)$ are  bounded with respect to $\xi\in\mathcal I$ and the estimate \eqref{est78} implies that
\beq\label{uses}
\int_{\Sigma_\mathcal B} k^\ell \,W(k)dk= \frac{F_{2,\ell}(a)}{t} + O(t^{-4/3}),\quad \ell=0,1,
\eeq
where the matrices $F_{2,\ell}(a)$ are bounded for $\xi\in\mathcal I$. We also have
\beq\label{norW}
\|k^\ell\,W(k)\|_{L^1(\Sigma_{\mathcal B})}=O(t^{-1}), \quad  \|k^\ell\,W(k)\|_{L^\infty(\Sigma_{\mathcal B})}=O(t^{-1/3}).
\eeq
Moreover, from \eqref{Wi} and \eqref{import7} it follows that
\beq\label{uses2}
 \int_{\partial \mathcal B} k^\ell\,W(k)dk =
\frac{F_{3,\ell}(a)}{t\,\rho^{1/2}} + O(t^{-4/3}),
\eeq
where the matrices $F_{3,\ell}(a)$ have the same properties as $F_{2,\ell}(a)$. Next, the matrix $M^{\text{mod}}(k)$ and its inverse are bounded with an estimate $O(\rho^{-1/4})$
on  the remaining part of the contour $\tilde\Sigma$. Using \eqref{estge}, \eqref{Wi}, \eqref{struct1}, \eqref{plusj}, and \eqref{estA}
we conclude that for $ \ell=0,1$:
\beq\label{laste}
\int_{\tilde\Sigma\setminus(\Sigma_{\mathcal B}\cup \partial \mathcal B)} k^\ell\,W(k) dk=\tilde F_\ell(a,\rho, t),\quad \|\tilde F_\ell(a,\rho, t)\|\leq C(\ell)\rho^{-1/4}\E^{-\frac{\rho t}{2}},
\eeq
where the matrix norms of $F_\ell(a,\rho,t)$  are uniformly bounded with respect to $a$ and $\rho$ for $t\in[T_0,\infty)$ and $\xi\in\mathcal I$.
From  \eqref{Wi}, \eqref{struct1}, \eqref{plusj}, and \eqref{estA} it follows also
\[
\|k^\ell W(k)\|_{L^1(\tilde\Sigma\setminus(\Sigma_{\mathcal B}\cup \partial \mathcal B))}\leq O(\E^{-\epsilon t}),\quad \|k^\ell W(k)\|_{L^\infty(\tilde\Sigma\setminus(\Sigma_{\mathcal B}\cup \partial \mathcal B))}\leq O(\E^{-\epsilon t}).
\] 

As a consequence of these considerations (and using interpolation) we get:

\begin{lemma}\label{lemimp}
The following estimates hold uniformly with respect to $\xi\in\mathcal I$:
\begin{equation}
\label{2w}\|W\|_{L^p(\tilde\Sigma)} = O\left(t^{-\frac{1}{3}-\frac{2}{3p}}\right), \qquad 1\le p \le \infty.
\end{equation}
Moreover,
\beq\label{dubl}
\frac{1}{\pi \I^\ell} \rI \int_{\tilde\Sigma} k^\ell W(k) dk = \begin{pmatrix} 1 & (-1)^\ell \end{pmatrix} \frac{f_\ell(a,\rho)}{t} + O\left(t^{-4/3}\right),\qquad j=0,1,
\eeq
where the functions $f_\ell(a,\rho)$ are bounded with respect to $a$ and $\rho$ for $\xi\in\mathcal I$.
\end{lemma}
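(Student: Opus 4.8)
The plan is to obtain both \eqref{2w} and \eqref{dubl} by assembling the region-by-region estimates already established on $\Sigma_{\mathcal B}$, on $\partial\mathcal B$, and on $\tilde\Sigma\setminus(\Sigma_{\mathcal B}\cup\partial\mathcal B)$, and then to read off the vector structure from the $k\mapsto -k$ symmetry. For the $L^p$ bound \eqref{2w} I would first collect the endpoint norms. On $\Sigma_{\mathcal B}$ the estimates \eqref{norW} give $\|W\|_{L^1}=O(t^{-1})$ and $\|W\|_{L^\infty}=O(t^{-1/3})$; on $\partial\mathcal B$ the expansion \eqref{import7} gives $\|W\|_{L^\infty}=O(t^{-1})$ and, since $\partial\mathcal B$ has bounded length, also $\|W\|_{L^1}=O(t^{-1})$; on the rest of $\tilde\Sigma$ the norms are exponentially small. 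Hence $\|W\|_{L^1(\tilde\Sigma)}=O(t^{-1})$ and $\|W\|_{L^\infty(\tilde\Sigma)}=O(t^{-1/3})$, and the interpolation inequality $\|W\|_{L^p}\le\|W\|_{L^1}^{1/p}\|W\|_{L^\infty}^{1-1/p}$ produces the exponent $-\tfrac13-\tfrac{2}{3p}$ for every $1\le p\le\infty$. In particular $\|W\|_{L^2(\tilde\Sigma)}=O(t^{-2/3})$, a bound I will reuse to discard quadratic remainders.

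For \eqref{dubl} I would split $\int_{\tilde\Sigma}=\int_{\Sigma_{\mathcal B}}+\int_{\partial\mathcal B}+\int_{\tilde\Sigma\setminus(\Sigma_{\mathcal B}\cup\partial\mathcal B)}$ and substitute the three contributions \eqref{uses}, \eqref{uses2} and \eqref{laste}. The last is $O(\rho^{-1/4}\E^{-\rho t/2})$, absorbed into $O(t^{-4/3})$ because $\rho\ge\tfrac14\sqrt{2\epsilon}$ keeps the exponential below any power of $t$; the first two give $\bigl(F_{2,\ell}(a)+\rho^{-1/2}F_{3,\ell}(a)\bigr)t^{-1}+O(t^{-4/3})$ with $F_{2,\ell},F_{3,\ell}$ bounded on $\mathcal I$. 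Left-multiplying by $\tfrac{1}{\pi\I^\ell}\rI$ and calling the resulting scalar factor $f_\ell(a,\rho)$ yields the announced $t^{-1}$ leading term, the boundedness of $f_\ell$ in $a$ and $\rho$ being inherited from \eqref{u1est}, \eqref{est78} and \eqref{import7}.

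It remains to justify the precise vector structure $\begin{pmatrix}1 & (-1)^\ell\end{pmatrix}$ on the right of \eqref{dubl}, which is the delicate step. I would read it off from the symmetry $\hat m(k)=\hat m(-k)\sigma_1$ of \eqref{hatnorm}: expanding $\hat m(k)=\rI+M_1 k^{-1}+M_2 k^{-2}+\cdots$ at infinity, this condition forces $M_1\sigma_1=-M_1$ and $M_2\sigma_1=M_2$, so that $M_1$ and $M_2$ are constrained to one-dimensional subspaces of opposite parity; through the Cauchy representation of $\hat m$ with density $\hat m_-=\rI+O(\|W\|)$, these coefficients coincide, up to scalar prefactors, with the integrals in \eqref{dubl}, the replacement of $\hat m_-$ by $\rI$ costing only $O(\|W\|_{L^2}^2)=O(t^{-4/3})$ by \eqref{2w}. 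This pins each $\tfrac{1}{\pi\I^\ell}\rI\int_{\tilde\Sigma}k^\ell W\,dk$ to the stated parity-dependent direction. Equivalently, one argues directly on the contour from the induced jump symmetry $W(-k)=\mp\sigma_1 W(k)\sigma_1$ (with signs inherited from $v^{(3)}(-k)=\sigma_1 v^{(3)}(k)^{-1}\sigma_1$ and the symmetries $M^{\mathrm{mod}}(-k)=\sigma_1 M^{\mathrm{mod}}(k)\sigma_1$, $M^{\mathrm{par}}(-k)=\sigma_1 M^{\mathrm{par}}(k)\sigma_1$) combined with the invariance $\int_{\tilde\Sigma}F(k)\,dk=\int_{\tilde\Sigma}F(-k)\,dk$ of the symmetric oriented contour. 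The hard part will be exactly this symmetry bookkeeping: the orientations of the circles $\T^{j,U},\T^{j,L}$ and of the lines $\mathcal C^U,\mathcal C^L$ must be tracked carefully under $k\mapsto -k$, and one must check that the matching contribution on $\partial\mathcal B$ and the jump-type contribution on $\Sigma_{\mathcal B}$ line up into a single direction at order $t^{-1}$. Granting this, the uniformity of $f_\ell$ for $a,\rho\in\mathcal I$ is immediate from the bounds already built into the underlying estimates.
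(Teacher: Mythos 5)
Your assembly of \eqref{2w} and \eqref{dubl} follows essentially the same route as the paper, whose proof of Lemma~\ref{lemimp} is exactly what you describe: collect \eqref{norW}, \eqref{uses}, \eqref{uses2} and \eqref{laste}, absorb the exponential tail using $\rho\geq\tfrac14\sqrt{2\epsilon}$, and interpolate between the $L^1$ bound $O(t^{-1})$ and the $L^\infty$ bound $O(t^{-1/3})$ to get the exponent $-\tfrac13-\tfrac{2}{3p}$; your treatment of $\partial\mathcal B$ via \eqref{import7} together with the bounded length of the boundary circles is also the intended one. The symmetry argument for the direction vector is the one piece the paper leaves entirely implicit, and supplying it is a genuine addition on your part.

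However, your concluding claim that the symmetry "pins each $\tfrac{1}{\pi\I^\ell}\rI\int_{\tilde\Sigma}k^\ell W\,dk$ to the stated parity-dependent direction" is a slip: carried through, your own constraints give the \emph{opposite} parity to the one printed in \eqref{dubl}. Indeed, from \eqref{hatnorm} one gets, as you say, $M_1\sigma_1=-M_1$ and $M_2\sigma_1=M_2$ for $\hat m(k)=\rI+M_1k^{-1}+M_2k^{-2}+\cdots$, i.e.\ $M_1\propto(1,-1)$ and $M_2\propto(1,1)$; the Cauchy representation gives $M_1=-\tfrac{1}{2\pi\I}\rI\int W\,ds+O(t^{-4/3})$ and $M_2=-\tfrac{1}{2\pi\I}\rI\int sW\,ds+O(t^{-4/3})$ (the $\mu-\rI$ correction being $O(t^{-2/3})\cdot O(t^{-2/3})$ by Cauchy--Schwarz and \eqref{estmu}), so the direction is $\bigl(1,(-1)^{\ell+1}\bigr)$, not $\bigl(1,(-1)^{\ell}\bigr)$. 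This corrected parity is in fact the consistent one: it is what the paper itself uses in the display immediately following the lemma, where $f_0$ multiplies $(1,-1)k^{-1}$ and $f_1$ multiplies $\rI\, k^{-2}$, and it matches the antisymmetric $k^{-1}$ coefficient $(-1,1)$ visible in \eqref{asm}. So \eqref{dubl} as printed contains a sign typo (as does its index, which should read $\ell=0,1$ rather than $j=0,1$); your derivation silently produces the corrected statement, and you should flag the discrepancy rather than assert agreement with the printed form. A minor related imprecision: at linear order the jump symmetry is $W(-k)=-\sigma_1W(k)\sigma_1+O(W^2)$ uniformly on $\tilde\Sigma$ — there is no sign alternation in $W$ itself; the $\ell$-dependent parity enters only through $k^\ell$ and the orientation reversal of the symmetric contour under $k\mapsto-k$, so your "$\mp$" bookkeeping on the contour version of the argument is off, though the asymptotic-coefficient version you also give is clean and sufficient.
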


Now we are ready to apply the technique of singular integral equations. Since this is well known (see, for example, \cite{dz}, \cite{GT}, \cite{len})
we will be brief and only list the necessary notions and estimates. 

Let $\mathfrak C$ denote the Cauchy operator associated with $\ti\Sigma$:
\[
(\mathfrak C h)(k)=\frac{1}{2\pi\I}\int_{\ti\Sigma}h(s)\frac{ds}{s-k}, \qquad k\in\C\setminus\ti\Sigma,
\]
where $h= \begin{pmatrix} h_1 & h_2 \end{pmatrix}\in L^2(\ti\Sigma)\cup L^\infty(\ti\Sigma)$. 
Let  $\mathfrak C_+ f$ and $\mathfrak C_- f$ be its non-tangential limiting values from the left and right sides of $\ti\Sigma$, respectively.
These operators will be bounded with bound depending on the contour, that is on $a$. However, since we can choose our contour
scaling invariant at least locally, scaling invariance of the Cauchy kernel implies that we can get a bound which is uniform on compact sets.

As usual, we introduce the operator $\mathfrak C_{W}:L^2(\ti\Sigma)\cup L^\infty(\ti\Sigma)\to
L^2(\ti\Sigma)$ by $\mathfrak C_{W} f=\mathfrak C_-(f W)$, where $W$ is our error matrix \eqref{Wi}. 
Then,
\[
\|\mathfrak C_{W}\|_{L^2(\ti\Sigma)\to L^2(\ti\Sigma)}\leq C\|W\|_{L^\infty(\ti\Sigma)}\leq O(t^{-1/3})
\] 
as well as
\beq\label{6w}
\|(\id - \mathfrak C_{W})^{-1}\|_{L^2(\ti\Sigma)\to L^2(\ti\Sigma)}\leq \frac{1}{1-O(t^{-1/3})}
\eeq
for sufficiently large $t$. Consequently, for $t\gg 1$, we may define a vector function
\[
\mu(k) =\rI + (\id - \mathfrak C_{W})^{-1}\mathfrak C_{W}\big(\rI\big)(k).
\]
Then by \eqref{2w} and \eqref{6w}
\begin{align}\nn
\|\mu(k) - \rI\|_{L^2(\ti\Sigma)} &\leq \|(\id - \mathfrak C_{W})^{-1}\|_{L^2(\ti\Sigma)\to L^2(\ti\Sigma)} \|\mathfrak C_{-}\|_{L^2(\ti\Sigma)\to L^2(\ti\Sigma)} \|W\|_{L^2(\ti\Sigma)}\\
&= O(t^{-2/3}).\label{estmu}
\end{align}
With the help of $\mu$ the solution of the RH problem \eqref{hatve}--\eqref{hatnorm} can be represented as 
\[
\hat m(k)=\rI +\frac{1}{2\pi\I}\int_{\ti\Sigma}\frac{\mu(s) W(s)ds}{s-k}
\]
and by virtue of \eqref{estmu} and Lemma \ref{lemimp} we obtain as $k\to +\I\infty:$
\beq\label{hatem2}
\hat{m}(k) = \rI - \frac{1}{2\pi\I } \int_{\ti\Sigma} \frac{\rI W(s)}{k-s} ds + H(k),
\eeq
where 
\beq\label{estah}
|H(k)|\leq \frac{1}{\im k}\|W\|_{L^2(\ti\Sigma)}\|\mu (k)- \rI\|_{L^2(\ti\Sigma)}\leq \frac{O(t^{-4/3})}{\im k},
\eeq 
where $O(t^{-4/3})$ is uniformly bounded with respect to $a$ and $\rho$ as $\xi\in \mathcal I$. In the regime $\re k=0, \im k \to +\infty$ we have
\begin{align*}
\frac{1}{2\pi\I } \int_{\ti\Sigma} \frac{\rI W(s)}{k-s} ds &=  \frac{f_0(a,\rho)}{2\I k t} \begin{pmatrix} 1 & -1\end{pmatrix} + \frac{f_1(a,\rho)}{2 k^2 t} \rI\\
& + O(t^{-1})O(k^{-3}) +O(t^{-4/3})O(k^{-1}),
\end{align*}
where $O(k^{-s})$ are vector-functions depending on $k$ only and $O(t^{-s})$ are as above.
From now we can choose $\rho=\sqrt{\frac{\epsilon}{8}}$ and denote $f_\ell(a,\rho):=f_\ell(\xi)$. These functions are  bounded as $\xi\in\mathcal I$, and in fact they are differentiable with respect to $\xi$, but we will not use their smoothness.
By \eqref{dehatm} and \eqref{as2} for large $k\to +\I \infty$ we have
\[
m(k)=\hat m(k)M^{\text{mod}}(k)\left(d(k)\Lambda(k)\E^{t(\Phi(k)-g(k)}\right)^{\sigma_3},
\]
and from \eqref{as2l}, \eqref{as19} , \eqref{hatem2}, \eqref{dubl}, \eqref{solmat}, \eqref{asm} and \eqref{eq:asygf} it follows:
\begin{align*}
\int_x^\infty q(y,t) dy & = 12 t \xi^2 - h(\xi) + \frac{f_0(\xi)}{t} + O(t^{-4/3}),\\
q(x,t) & = -2 \xi - \frac{f_1(\xi)\mp 2a f_0(\xi)}{t} + O(t^{-4/3}).
\end{align*}
In particular, this shows that the first asymptotic formula can be differentiated with respect to $x$ giving
\[
q(x,t) = -2 \xi + \frac{1}{12t} h'(\xi) + O(t^{-4/3}).
\]
This establishes \eqref{as2f} and completes the proof of Theorem \ref{maintheor} {\bf A}.

\section{Asymptotics in the domain $x<-6c^2 t$}\label{sec:left}

Here we solve the  RH$_1$ problem, considered in Theorem \ref{thm:vecrhpn}, and prove claim {\bf B} of Theorem \ref{maintheor}.
Let $k_1^\pm=\pm \sqrt{-\frac{c^2}{2}-\xi}$ be the stationary phase points of the phase function $\Phi_1(k_1)$.
The signature table for $\re \Phi_1$ in the present domain $\xi<-\frac{c^2}{2}$ is shown in Figure~\ref{fig44}.
It shows that in the domain under consideration, the jump matrix $v(k_1)$ is exponentially close to the identity matrix as $t\to\infty$ except for $k_1\in\R$.
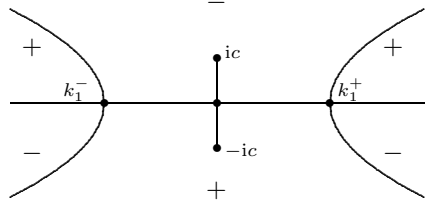
\begin{figure}[h]
\begin{picture}(6,3)
\put(0,1.25){\line(1,0){5.5}}

\put(0.15,0.5){$-$}
\put(0.15,1.9){$+$}
\put(4.95,0.5){$-$}
\put(4.95,1.9){$+$}
\put(2.6,2.5){$-$}
\put(2.6,0.){$+$}

\put(0.7,1.35){$\scriptstyle  k_{1}^-$}
\put(4.35,1.35){$\scriptstyle k_{1}^+$}

\put(1.25,1.25){\circle*{0.1}}
\put(4.25,1.25){\circle*{0.1}}

\put(2.75,1.25){\circle*{0.1}}
\put(2.75,0.65){\line(0,1){1.2}}
\put(2.75,0.65){\circle*{0.1}}
\put(2.75,1.85){\circle*{0.1}}
\put(2.85,1.85){$\scriptstyle  \I c$}
\put(2.85,0.55){$\scriptstyle  -\I c$}

\curve(0.,0., 1.25,1.25, 0.,2.5)

\curve(5.5,0., 4.25,1.25, 5.5,2.5)
\end{picture}
  \caption{Sign of $\re(\Phi_1(k_1))$}\label{fig44}
\end{figure}
Now, following the usual procedure \cite{dz}, \cite{GT}. We let $d^{(1)}(k_1)$ be an analytic function in the domain $ \C\setminus\left(\R\setminus[k_{1}^-,k_{1}^+]\right)$ satisfying
\[
 d_+^{(1)}(k_1)= d_-^{(1)}(k_1)(1-|R_1(k_1)|^2)\ \mbox{for}\ k_1\in\R\setminus[k_{1}^-,k_{1}^+] \ \mbox { and}\ d^{(1)}(k_1)\to 1, \ k_1\to\infty.
\]
By the Sokhotski--Plemelj formulas this function is explicitly given by
\beq\label{defdel}
 d^{(1)}(k_1)=\exp\left(\frac{1}{2\pi \I} \int_{(-\infty,k_1^-) \cup(k_1^+,\infty)} \frac{\log(1-|R_1(s)|^2)}{s-k_1} ds\right).
\eeq
Note that this integral is well defined since $R_1(k_1)=O(k_1^{-1})$ and $|R_1(k_1)|<1$ for $k_1\neq 0$ (cf.\ \cite{EGLT}).
As the domain of integration is even and the function $\log(1-|R_1|^2)$ is also even, we obtain $ d^{(1)}(-k_1)=d^{(1)}(k_1)^{-1}$ and the matrix
\[
 D(k_1)=\begin{pmatrix} d^{(1)}(k_1)^{-1}&0\\ 0& d^{(1)}(k_1)\end{pmatrix}
\]
satisfies the symmetry conditions of Lemma~\ref{lem:conjug}. Now set $ m^{(2)}(k_1)=m^{(1)}(k_1) D(k_1)$, then the new RH$_1$ problem will read
$ m_+^{(2)}(k_1)= m_-^{(2)}(k_1)  v^{(2)}(k_1)$, where

\noindent
$ m^{(2)}(k_1)\to \rI$ as $k_1\to\infty$, $ m^{(2)}(-k_1)= m^{(2)}(k)\sigma_1$, and
\[
  v^{(2)}(k)=\left\{\begin{array}{ll}
 A_L(k_1)A_U(k_1), & k_1\in\R\setminus[k_{1}^-,k_{1}^+],\\
 \ & \\
 B_L(k_1)B_U(k_1), & k_1\in [k_{1}^-,k_{1}^+],\\
 \ & \\
D^{-1}(k_1)v^{(1)}(k_1) D(k_1), & k_1\in [\I c, -\I c]\cup_j(\T_j^U\cup\T_j^L),
\end{array}\right.
\]
where $v^{(1)}(k_1)$ is defined by \eqref{vleft},
\[
A_L(k_1):=\begin{pmatrix}1&0\\ \frac{R_1(k_1) \E^{t\Phi_1(k_1)}}{(1-|R_1(k_1)|^2) d^{(1)}(k_1)^2} &1\end{pmatrix},\quad k\in\Omega_l^L\cup\Omega_r^L,
\]
\[
A_U(k_1):=\begin{pmatrix} 1& -\frac{d^{(1)}(k_1)^2 R_1(-k_1) \E^{-t\Phi_1(k_1)}}{(1-|R_1(k_1)|^2)}\\0&1\end{pmatrix}, \quad k_1\in \Omega_l^U\cup\Omega_r^U,
\]
\[
B_L(k_1):=\begin{pmatrix}1 & -d^{(1)}(k_1)^2 R_1(-k_1)\E^{-t\Phi_1(k_1)}\\ 0 & 1\end{pmatrix},\quad k_1\in \Omega_c^L,
\]
\[
B_U(k_1):=\begin{pmatrix} 1& 0\\
d^{(1)}(k_1)^{-2} R_1(k_1)\E^{t\Phi_1(k_1)}& 1\end{pmatrix}, \quad k_1\in\Omega_c^U.
\]
Here the domains $\Omega_l^L$, $\Omega_l^U$, $\Omega_r^L$, $\Omega_r^U$, $\Omega_c^L$, and $\Omega_c^U$
together with their boundaries $\mathcal C_l^L$, $\mathcal C_l^U$,  $\mathcal C_r^L$, $\mathcal C_r^U$, $\mathcal C_c^L$, and $\mathcal C_c^U$ are shown in Figure~\ref{fig6}.
\begin{figure}[h]
\begin{picture}(7,5.2)
\put(0,2.5){\line(1,0){7.0}}
\put(1.8,2.5){\vector(1,0){0.4}}
\put(5,2.5){\vector(1,0){0.4}}

\put(6.6,2.2){$\R$}

\put(0,2){\line(1,0){2.0}}
\put(5,2){\line(1,0){2.0}}
\put(1.1,2){\vector(1,0){0.4}}
\put(5.5,2){\vector(1,0){0.4}}
\put(2.6,2.9){\vector(2,3){0.2}}
\put(4.12,3.3){\vector(2,-3){0.2}}

\put(1.3,1.5){$\mathcal C_l^L$}
\put(1.3,2.65){$\scriptstyle\Omega_l^U$}
\put(5.7,3,3){$\mathcal C_r^U$}
\put(5.7,2.65){$\scriptstyle\Omega_r^U$}
\put(2.7,3.7){$\mathcal C_c^U$}
\put(2.7,2.65){$\scriptstyle\Omega_c^U$}
\put(3.9,3.7){$\mathcal C_c^U$}
\put(3.8,2.65){$\scriptstyle\Omega_c^U$}

\curve(2.,2., 2.2,2.1, 2.4,2.4, 2.45,2.5)
\curve(4.55,2.5, 4.6,2.6, 4.8,2.9, 5.,3.)
\curve(2.45,2.5, 3.5,1.25, 4.55,2.5)

\put(3.5,1.4){\line(0,1){2.2}}
\put(3.5,3.5){\vector(0,-1){0.4}}
\put(3.5,2){\vector(0,-1){0.4}}

\put(3.68,0.33){\vector(0,-1){0.1}}
\put(3.68,0.78){\vector(0,-1){0.1}}
\put(3.68,4.47){\vector(0,1){0.1}}
\put(3.68,4.93){\vector(0,1){0.1}}

\put(3.5,0.3){\circle*{0.1}}
\put(3.5,0.3){\circle{0.4}}
\put(3.5,0.75){\circle*{0.1}}
\put(3.5,0.75){\circle{0.4}}
\put(3.5,4.5){\circle*{0.1}}
\put(3.5,4.5){\circle{0.4}}
\put(3.5,4.95){\circle*{0.1}}
\put(3.5,4.95){\circle{0.4}}

\put(2.6,0.2){$\scriptstyle T_1^{j,L}$}
\put(2.6,0.65){$\scriptstyle T_1^{i,L}$}
\put(2.6,4.4){$\scriptstyle T_1^{i,U}$}
\put(2.6,4.85){$\scriptstyle T_1^{j,U}$}

\put(3.3,1.0){$\scriptstyle -\I c$}
\put(3.45,4){$\scriptstyle \I c$}

\put(0,3){\line(1,0){2.0}}
\put(5,3){\line(1,0){2.0}}
\put(1.1,3){\vector(1,0){0.4}}
\put(5.5,3){\vector(1,0){0.4}}
\put(2.61,2.1){\vector(2,-3){0.2}}
\put(4.12,1.7){\vector(2,3){0.2}}

\curve(2.,3., 2.2,2.9, 2.4,2.6, 2.45,2.5)
\curve(4.55,2.5, 4.6,2.4, 4.8,2.1, 5,2.)
\curve(2.45,2.5, 3.5,3.75, 4.55,2.5)

\put(1.3,3.3){$\mathcal C_l^U$}
\put(1.3,2.15){$\scriptstyle\Omega_l^L$}
\put(5.7,1.5){$\mathcal C_r^L$}
\put(5.7,2.15){$\scriptstyle\Omega_r^L$}
\put(2.6,1.2){$\mathcal C_c^L$}
\put(2.7,2.15){$\scriptstyle\Omega_c^L$}
\put(4,1.2){$\mathcal C_c^L$}
\put(3.8,2.15){$\scriptstyle\Omega_c^L$}

\curvedashes{0.05,0.05}
\curve(0.,0.05, 0.85,0.5, 1.55,1., 2.05,1.5, 2.45,2.5, 2.05,3.5, 1.55,4., 0.85,4.5, 0.,4.94)
\curve(7.,0.05, 6.15,0.5, 5.45,1., 4.95,1.5, 4.55,2.5, 4.95,3.5, 5.45,4., 6.15,4.5, 7.,4.94)
\end{picture}
\caption{Contour deformation in the domain $x<-6c^2t$}\label{fig6}
\end{figure}
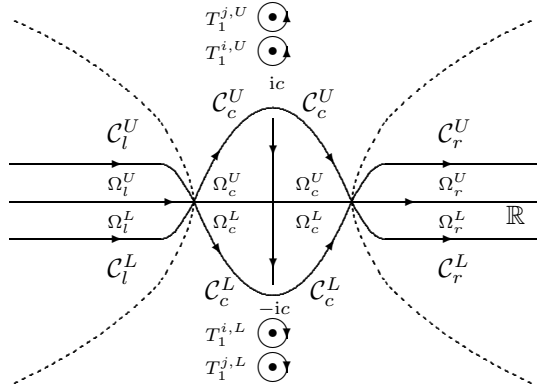
Evidently,  the matrix $ B_U$ (resp. $B_L$) has a jump along the contour $[\I c, 0]$ (resp.\ $[0, -\I c]$).
All contours are oriented from left to right. They are chosen to respect the symmetry $k_1\mapsto -k_1$ and are inside a set, where $R_1(k_1)$ has an analytic continuation.
We also used the analytic continuation $\ol{R_1(k_1)}=R_1(-k_1)$ to these domains.

\begin{lemma}\label{plucker}
The following formula is valid
\[
(B_{U})_-\, v^{(2)} \, (B_{U})_+^{-1}=\id,\quad k_1\in [\I c, 0];\quad (B_{L})_-^{-1}\, v^{(2)} \,(B_{L})_+=\id,\quad k_1\in [0, -\I c].
\]
\end{lemma}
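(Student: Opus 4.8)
The plan is to check the identity entrywise, using that every matrix in sight is triangular and unipotent, and then to reduce it to a single scalar jump relation for $R_1$ across the cut $(0,\I c]$.

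First I would make $v^{(2)}$ on $[\I c,0]$ explicit. The segment lies off the real axis, so the diagonal conjugator $D(k_1)=\mathrm{diag}\bigl(d^{(1)}(k_1)^{-1},d^{(1)}(k_1)\bigr)$ is analytic (single valued) there, and the rule $v^{(2)}=D^{-1}v^{(1)}D$ from Lemma~\ref{lem:conjug} together with \eqref{vleft} gives
\[
v^{(2)}(k_1)=\begin{pmatrix} 1 & 0 \\ d^{(1)}(k_1)^{-2}\,\chi(k_1)\,\E^{2t\Phi_1(k_1)} & 1\end{pmatrix},\qquad k_1\in[\I c,0],
\]
which is lower unipotent, just like $B_U$. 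For lower unipotent matrices the product merely adds the $(2,1)$ entries while inversion negates them, so the matrix identity $(B_U)_-\,v^{(2)}\,(B_U)_+^{-1}=\id$ is equivalent to the scalar equation $[(B_U)_+]_{21}-[(B_U)_-]_{21}=[v^{(2)}]_{21}$.

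Because $d^{(1)}$ and $\E^{\Phi_1}$ are analytic across $[\I c,0]$, the common scalar factor $d^{(1)}(k_1)^{-2}\E^{2t\Phi_1(k_1)}$ divides out of all three terms, and the identity collapses to
\[
R_{1,+}(k_1)-R_{1,-}(k_1)=\chi(k_1),\qquad k_1\in(0,\I c],
\]
where $R_{1,\pm}$ are the two boundary values of the analytic continuation of $R_1$ at the cut. This relation is the real content of the lemma, and I would derive it from scattering theory: the two sides of $(0,\I c]$ correspond to the two square-root branches $k_1\leftrightarrow-k_1$ over a fixed $\la\in(0,c^2)$, equivalently to replacing $\phi(k)$ by $\phi(-k)=\ol{\phi(k)}$ while $k=\sqrt\la\in(0,c)$ stays real. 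Inserting this into the scattering relation \eqref{lsr}, and using that $\phi_1$ is real for $k\in[-c,c]$ (Theorem~\ref{thm:scat}), the definition $\chi=-\ol{T_1}T$ in \eqref{rtime3}, and the continuous-spectrum identities \eqref{RTR}, \eqref{RTR1}, one computes the difference of the two boundary values of $R_1$ and identifies it with $\chi$; the requisite analyticity of the continuation is the one recorded in \cite{EGLT}.

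The statement for $B_L$ on $[0,-\I c]$ needs no separate argument: there $v^{(2)}$ and $B_L$ are upper unipotent, the same reduction yields the $(1,2)$-entry relation, and the symmetries $\ol{R_1(k_1)}=R_1(-k_1)$ and $\chi(-k_1)=-\chi(k_1)$ turn it into the reflected conjugate of the identity just established. I expect the scalar jump relation across the cut to be the only genuine obstacle: it is scattering-theoretic rather than algebraic, and some care is needed to match the phase conventions so that the $t$-dependent exponentials cancel and leave the time-independent equality $R_{1,+}-R_{1,-}=\chi$.
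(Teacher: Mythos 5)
Your proposal is correct, and it essentially supplies the argument that the paper compresses into the single line ``by virtue of the Pl\"ucker identity (cf.\ \cite{EGKT})''. Your entrywise reduction is right: all three matrices are unipotent and triangular of the same type, $d^{(1)}$ and $\E^{t\Phi_1}$ are analytic across $(0,\I c]$, so both assertions collapse to the scalar jump relation $R_{1,+}-R_{1,-}=\chi$ on the cut; and your scattering route to that relation does work. Taking boundary values of the analytically continued relation \eqref{lsr} on the two sides of the cut (where $\phi$ is replaced by $\ol\phi=\phi(-k)$ on one side, while $\phi_1$ and its companion $\ol{\phi_1}$ do \emph{not} jump, since they depend on $k_1$ alone), eliminating $\ol\phi$ via \eqref{rsr}, and using linear independence of $\phi,\phi_1$ for $\la\in(0,c^2)$ yields $R_{1,+}-R_{1,-}=-\ol{T_1}T=\chi$ by \eqref{rtime3} together with the Schwarz symmetry $T_1(\la^L)=\ol{T_1(\la^U)}$. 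This elimination is exactly the paper's cited Pl\"ucker identity in Wronskian form: writing $R_{1,\pm}$ and $\chi$ as ratios of Wronskians of the four solutions $\phi_1,\ol{\phi_1},\phi,\ol\phi$, the quadratic relation
\[
W(f_1,f_2)W(f_3,f_4)-W(f_1,f_3)W(f_2,f_4)+W(f_1,f_4)W(f_2,f_3)=0
\]
is what \cite{EGKT} invokes, so the two proofs are the same computation in different clothing, yours being the more self-contained (the paper buys brevity by citation; you buy an explicit, checkable reduction). Three small repairs to your write-up: first, \eqref{RTR1} is stated only for $k_1\in\R$, i.e.\ $\la>c^2$, and is not available on the cut --- the identities you actually need there are \eqref{RTR} (equivalently the symmetry of $T_1$) plus the definition of $\chi$; second, the two sides of the cut correspond to the branches $k\leftrightarrow-k$ at the \emph{same} value $k_1=\I\tau$, not to $k_1\leftrightarrow-k_1$ --- your subsequent ``equivalently, replace $\phi(k)$ by $\ol{\phi(k)}$'' is the correct operative statement, and it is precisely why only $\phi$ jumps; third, the mismatch between $\E^{2t\Phi_1}$ in \eqref{vleft} and $\E^{t\Phi_1}$ in $B_U$ is a typo of the paper, so the cancellation of the $t$-dependent factors that you assume is the intended reading. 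Finally, the orientation check you deferred does close: the $+$ side of the top-down oriented segment $[\I c,0]$ is the $\re k_1>0$ side, which corresponds to $\la^U$, i.e.\ to $k>0$, so the signs in $R_{1,+}-R_{1,-}=\chi$ come out exactly as your reduction requires.
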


\begin{proof}
By virtue of the Pl\"ucker identity (cf.\ \cite{EGKT}).
\end{proof}

Now redefine $ m^{(2)}(k_1)$ according to
\[
 m^{(3)}(k_1)=m^{(2)}(k_1)\left\{\begin{array}{ll}
A_L(k_1), & k_1\in \Omega_l^U\cup\Omega_r^L,\\
 A_U(k_1)^{-1}, & k_1\in \Omega_l^L\cup\Omega_r^U,\\
B_L(k_1), & k_1\in \Omega_c^L,\\
 B_U(k_1)^{-1}, & k_1\in \Omega_c^U,\\
\id, & \mbox{else.}\end{array}\right.
\]
Then the vector function $ m^{(3)}(k_1)$ has no jump along $k_1\in\R$ and, by Lemma~\ref{plucker}, also not along $k_1\in[\I c, -\I c]$.
All remaining jumps on the contours $\mathcal C_l^L$, $\mathcal C_l^U$, $\mathcal C_c^L$, $\mathcal C_c^U$, $\mathcal C_r^L$, $\mathcal C_r^U$, and $\cup_{j=1}^N(\T_j^U\cup\T_j^L)$ are close to the identity matrix up to exponentially small errors except for small vicinities of the stationary phase points $k_{1}^-$ and $k_{1}^+$.
Thus, the model problem has the trivial solution $m^{\mathrm{mod}}(k_1)=\rI$. For large imaginary $k_1$ with $|k_1|>\kappa_{1,1}+1$ we have $m^{(2)}(k_1)= m^{(3)}(k_1)\sim m^{\mathrm{mod}}(k_1)$ and consequently
\[
m^{(1)}(k_1) = m^{(2)}(k_1)D^{-1}(k_1)= \begin{pmatrix} d^{(1)}(k_1)&  d^{(1)}(k_1)^{-1} \end{pmatrix}
\]
for sufficiently large $k_1$.
By \eqref{defdel}
\[
d^{(1)}(k_1)= 1 +\frac{1}{2\I k_1}\left(-\frac{1}{\pi}\int_{(-\infty,k_1^-) \cup(k_1^+,\infty)} \log(1-|R_1(s)|^2) ds\right) +O\left(\frac{1}{k_1^2}\right)
\]
and comparing this formula with formula \eqref{asm1} we conclude the expected leading asymptotics in the region $x<-c^2 t$ given by
\[
q(x,t)=c^2 (1 +O(t^{-1/2})).
\]
Moreover, the contribution from the small crosses at $k_1^\pm$ can be computed using the usual techniques \cite{dz}, \cite{GT}.

\begin{theorem}
In the domain $x<(-6c^2 -\epsilon)t$ the following asymptotics are valid:
\[
q(x,t)=c^2 + \sqrt{\frac{4\nu(k_1^+) k_1^+}{3t}}\sin(16t(k_1^+)^3-\nu(k_1^+)\log(192 t (k_1^+)^3)+\delta(k_1^+))+o(t^{-\alpha})
\]
for any $1/2<\alpha <1$.
Here $k_1^+=\sqrt{-\frac{c^2}{2}-\xi}$ and
\begin{align*}
\nu(k_1^+)= & -\frac{1}{2\pi} \log\left(1-|R_1(k_1^+)|^2\right),\\ \nn
\delta(k_1^+)= & \frac{\pi}{4}- \arg(R_1(k_1^+))+\arg(\Gamma(\I\nu(k_1^+)))\\
&  -\frac{1}{\pi}\int_{(-\infty,-k_1^+) \cup(k_1^+,\infty)}\log\left(\frac{1-|R_1(s)|^2}{1-|R_1(k_1^+)|^2}\right)\frac{1}{s-k_1^+}ds.
\end{align*}
\end{theorem}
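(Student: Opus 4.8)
The plan is to finish the steepest descent analysis by building local parametrices at the two stationary phase points $k_1^\pm$ in terms of parabolic cylinder functions, matching them on small circles $\partial\mathcal B_\pm$ to the trivial global model $m^{\mathrm{mod}}(k_1)=\rI$, and then running the singular integral equation scheme of Section~\ref{asymptotics} to control the error. Since the whole configuration is invariant under $k_1\mapsto -k_1$, it suffices to analyse $k_1^+$ in detail and to recover the $k_1^-$ contribution by symmetry; the final real-valued answer will then arise as twice the real part of the $k_1^+$ contribution.

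First I would extract the local behaviour of the conjugating factor $d^{(1)}$ from \eqref{defdel} near $k_1^+$. Writing $\log(1-|R_1(s)|^2)=\log(1-|R_1(k_1^+)|^2)+\bigl[\log(1-|R_1(s)|^2)-\log(1-|R_1(k_1^+)|^2)\bigr]$ inside the Cauchy integral, the constant part produces a factor $(k_1-k_1^+)^{\I\nu}$ with $\nu=\nu(k_1^+)=-\tfrac{1}{2\pi}\log(1-|R_1(k_1^+)|^2)$, while the bracketed part is H\"older continuous at $k_1^+$ and contributes exactly the regularized integral $-\tfrac{1}{\pi}\int\log\frac{1-|R_1(s)|^2}{1-|R_1(k_1^+)|^2}\frac{ds}{s-k_1^+}$ appearing in $\delta(k_1^+)$. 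Next I would standardize the phase: a direct computation gives $\Phi_1(k_1^+)=8\I(k_1^+)^3$ and $\Phi_1''(k_1^+)=-24\I k_1^+$, so the substitution $z=\sqrt{48\,t\,k_1^+}\,\E^{-\I\pi/4}(k_1-k_1^+)(1+O(k_1-k_1^+))$ yields $2t\Phi_1(k_1)=16\I t(k_1^+)^3-\tfrac{\I}{2}z^2+O(t^{-1/2})$, turning the local jumps into the canonical parabolic cylinder form with off-diagonal entries carrying $R_1(k_1^+)$ and the factors $z^{\pm\I\nu}\E^{\mp\I z^2/4}$. The power $(48\,t\,k_1^+)^{\I\nu}$ split off from $(k_1-k_1^+)^{\I\nu}$ is the source of the $t$-dependent logarithm, which after absorbing the purely $k_1^+$-dependent constants into the normalization becomes $-\nu\log(192\,t(k_1^+)^3)$.

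The local problem is then solved explicitly using parabolic cylinder functions $D_{\pm\I\nu}$, following Its (\cite{its}), in complete analogy with the Airy construction of Section~\ref{parametrix} but with the parabolic cylinder model in place of the Airy model. Its solution $M^{\mathrm{pc}}$ has the large-$z$ expansion
\[
M^{\mathrm{pc}}(z)=\id+\frac{1}{z}\begin{pmatrix} 0 & \beta_{12}\\ \beta_{21} & 0\end{pmatrix}+O(z^{-2}),
\]
where the standard connection formulae give $\beta_{12}\ol{\beta_{21}}=\nu$ and $\arg\beta_{12}=\tfrac{\pi}{4}-\arg R_1(k_1^+)+\arg\Gamma(\I\nu)$ modulo the contributions of $d^{(1)}$ and of the phase $16\I t(k_1^+)^3$. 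Collecting $\arg\Gamma(\I\nu)$ from the model, $-\arg R_1(k_1^+)$ from the jump, the $\pi/4$ from the $\E^{-\I\pi/4}$ rotation built into $z$, and the regularized Cauchy integral from the smooth part of $d^{(1)}$ assembles precisely the phase $\delta(k_1^+)$ of the theorem.

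Finally I would set $\hat m=m^{(3)}(M^{\mathrm{as}})^{-1}$ with $M^{\mathrm{as}}=M^{\mathrm{pc}}$ inside $\mathcal B_\pm$ and $M^{\mathrm{as}}=\id$ outside, so that $W=\hat v-\id$ is $O(t^{-1/2})$ on $\partial\mathcal B_\pm$ by the $O(z^{-1})$ term above and exponentially small on all other contours by the signature table (Figure~\ref{fig44}) together with $R_1=O(k_1^{-1})$. The estimate $\|(\id-\mathfrak C_W)^{-1}\|\leq(1-O(t^{-1/2}))^{-1}$ then yields $\|\mu-\rI\|_{L^2}=O(t^{-1/2})$, and reading off the $1/k_1$ coefficient of $\hat m$ as $k_1\to\I\infty$, comparing with \eqref{asm1}, and differentiating in $x$ (which brings down the factor $\partial_x[16t(k_1^+)^3]=-2k_1^+$) produces $q(x,t)-c^2$. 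The scaling $\sqrt{48\,t\,k_1^+}=4\sqrt{3\,t\,k_1^+}$ from $z$, the modulus $|\beta_{21}|=\sqrt{\nu}$, the factor $2k_1^+$ from $\partial_x$, and the doubling from the complex-conjugate $k_1^-$ contribution combine to give exactly the amplitude $\sqrt{4\nu k_1^+/(3t)}$, while the $o(t^{-\alpha})$ bound for $1/2<\alpha<1$ follows from the $O(z^{-2})$ term and the $O((tg)^{-1})$ matching on $\partial\mathcal B_\pm$, which together contribute at order $t^{-1}\log t$. I expect the main obstacle to be precisely this phase bookkeeping: one must track every $t$-independent phase contribution and fix once and for all how the $k_1^+$-dependent constants are split between the logarithmic term and $\delta(k_1^+)$, so that the assembled answer reproduces $192\,t(k_1^+)^3$ and $\delta(k_1^+)$ exactly rather than up to a spurious additive constant.
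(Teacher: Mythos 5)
Your proposal is correct and follows essentially the same route as the paper: after the reductions of Section~\ref{sec:left}, the paper itself only invokes the ``usual techniques'' of \cite{dz} and \cite{GT} for the contribution of the small crosses at $k_1^\pm$, and your parabolic-cylinder parametrix matched to the trivial outer model $\rI$, the symmetric treatment of the two stationary points, and the singular-integral error scheme are exactly what those references carry out. Your key checkpoints --- $\Phi_1(k_1^+)=8\I (k_1^+)^3$, $\Phi_1''(k_1^+)=-24\I k_1^+$, $\partial_x\left[16t(k_1^+)^3\right]=-2k_1^+$, and the splitting of the $(2k_1^+)$-dependent constant from $d^{(1)}$ that turns $48\,t\,k_1^+$ into $192\,t(k_1^+)^3$ --- all agree with the stated theorem, modulo the sign and constant bookkeeping you yourself flag.
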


The claim {\bf B} of Theorem \ref{maintheor} follows from this theorem by the change of variables $k_1\mapsto k$ and by use of \eqref{RTR1}.

\bigskip
\noindent{\bf Acknowledgments.} I.E. is indebted to the Department of Mathematics at the University of Vienna for its hospitality and support during the fall semester of 2015, where
this work was done. We are also indebted to the anonymous referee for numerous comments and suggestions which lead to a significant improvement of the presentation.


\begin{thebibliography}{XX}

\bibitem{Aktosun} T. Aktosun,
{\em On the Schr\"odinger equation with steplike potentials},
J. Math. Phys. {\bf 40} (1999), no. 11, 5289--5305. 

\bibitem{Bik1} R.F. Bikbaev,
{\em Structure of a shock wave in the theory of the Korteweg--de Vries equation},
Phys. Lett. A {\bf 141} (1989), 289--293.

\bibitem{BF} V.S. Buslaev and V.N. Fomin,
{\em An inverse scattering problem for the one-dimensional Schr\"odinger equation on the entire axis},
Vestnik Leningrad. Univ. {\bf 17} (1962), 56--64. (Russian)

\bibitem{CK} A. Cohen and T. Kappeler,
{\em Scattering and inverse scattering for steplike potentials in the Schr\"odinger equation},
Indiana Univ. Math. J. {\bf 34} (1985), 127--180.

\bibitem{deiftbook} P. Deift, {\em Orthogonal Polynomials and Random Matrices:
A Riemann--Hilbert Approach}, Courant Lecture Notes {\bf 3}, Amer. Math. Soc., Rhode Island, 1998.

\bibitem{dvz} P. Deift, S. Venakides, and X. Zhou,
{\em The collisionless shock region for the long-time behavior of solutions of the KdV equation},
Comm. Pure Applied Math. {\bf 47} (1994), 199--206.

\bibitem{dz} P. Deift and X. Zhou,
{\em A steepest descent method for oscillatory Riemann--Hilbert problems},
Ann. of Math. (2) {\bf 137} (1993), 295--368.

\bibitem{EGKT} I. Egorova, Z. Gladka, V. Kotlyarov, and G. Teschl,
{\em Long-time asymptotics for the Korteweg--de Vries equation with steplike initial data},
Nonlinearity {\bf 26} (2013), 1839--1864 .

\bibitem{EGLT} I. Egorova, Z. Gladka, T. L. Lange, and G. Teschl,
{\em Inverse scattering theory for Schr\"odinger operators with steplike potentials},
Zh. Mat. Fiz. Anal. Geom. {\bf 11} (2015), 123--158.

\bibitem{EGT} I. Egorova, K. Grunert, and G. Teschl,
{\em On the Cauchy problem for  the Korteweg--de Vries equation with steplike finite-gap initial data. I. Schwartz-type perturbations},
Nonlinearity, {\bf 22} (2009), 1431--1457.

\bibitem{ET} I. Egorova and G. Teschl,
{\em On the Cauchy problem for the Korteweg--de Vries equation with steplike finite-gap initial data II. Perturbations with finite moments},
J. d'Analyse Math. {\bf 115} (2011), 71--101.

\bibitem{GT} K. Grunert and G. Teschl,
{\em Long-time asymptotics for the Korteweg--de Vries equation via nonlinear steepest descent},
Math. Phys. Anal. Geom. {\bf 12} (2009), 287--324.

\bibitem{gp1} A.V. Gurevich and L.P. Pitaevskii,
{\em Decay of initial discontinuity in the Korteweg--de Vries equation},
JETP Letters {\bf 17:5} (1973), 193--195.

\bibitem{gp2} A. V. Gurevich, L.P. Pitaevskii,
{\em Nonstationary structure of a collisionless shock wave},
Soviet Phys. JETP {\bf 38} (1974), 291--297.

\bibitem{its} A. Its,
{\em Large N-asymptotics in random matrices}.
in "Random Matrices, Random Processes and Integrable Systems", CRM Series in Mathematical Physics, Springer, New York, 2011.

\bibitem{Kh1}E.Ya. Khruslov,
{\em Decay of initial steplike perturbation in the Korteweg--de Vries equation},
JETP Letters {\bf 21} (1975), 217--218.

\bibitem{Kh2} E.Ja. Khruslov,
{\em Asymptotics of the solution of the Cauchy problem for the Korteweg--de Vries equation with initial data of step type},
Math. USSR Sb. {\bf 28} (1976), 229--248.

\bibitem {KKt} E.Ya. Khruslov, V.P. Kotlyarov,
{\em Soliton asymptotics of nondecreasing solutions of nonlinear completely integrable evolution equations},
in "Spectral operator theory and related topics",  Adv. Soviet Math. {\bf 19}, 129--180, Amer. Math. Soc., Providence, RI, 1994.

\bibitem{KTa} H. Kr\"uger and G. Teschl,
{\em Long-time asymptotics for the Toda lattice in the soliton region},
Math. Z. {\bf 262} (2009), 585--602.

\bibitem{LN} J. A. Leach and D. J. Needham,
{\em The large-time development of the solution to an initial-value problem for the Korteweg--de Vries equation:
I. Initial data has a discontinuous expansive step},
Nonlinearity {\bf 21} (2008), 2391--2408.

\bibitem{LN2} J. A. Leach and D. J. Needham,
{\em The large-time development of the solution to an initial-value problem for the Korteweg--de Vries equation:
II. Initial data has a discontinuous compressive step},
Mathematika {\bf 60} (2014), 391--414.

\bibitem{len} J. Lenells,
{\em The nonlinear steepest descent method for Riemann--Hilbert problems of low regularity},
\arxiv{1501.05329}

\bibitem{Muskh} N.I. Muskhelishvili,
{\em Singular integral equations. Boundary problems of function theory and their application to mathematical physics},
Revised translation from the Russian, edited by J. R. M. Radok. Reprinted of the 1958 edition. Noordhoff International Publishing, Leyden, 1977. 

\bibitem{N} V. Yu. Novokshenov,
{\em Time asymptotics for soliton equations in problems with step initial conditions},
J. Math. Sci. {\bf  125} (2005), 717--749.

\bibitem{dlmf} F. W. J. Olver et al.,
{\em NIST Handbook of Mathematical Functions},
Cambridge University Press, Cambridge, 2010.

\bibitem{Ryb} A. Rybkin,
{\em Spatial analyticity of solutions to integrable systems. I. The KdV case},
Comm. PDE {\bf 38} (2013), 802--822.

\bibitem{Ven} S. Venakides,
{\em Long time asymptotics of the Korteweg--de Vries equation},
Trans. Amer. Math. Soc. {\bf 293} (1986), 411--419.

\bibitem{Zakh} V.E. Zaharov, S.V. Manakov, S.P. Novikov, and L.P. Pitaevskii,
{\em Theory of solitons. The method of the inverse problem} (Russian),
Nauka, Moscow, 1980.

\end{thebibliography}
\end{document}